\newtheorem{theorem}{Theorem}[section]
\newtheorem{lemma}[theorem]{Lemma}
\newtheorem{corollary}[theorem]{Corollary}
\theoremstyle{definition}
\newtheorem{definition}[theorem]{Definition}
\newtheorem{proposition}[theorem]{Proposition}
\theoremstyle{remark}
\newtheorem{remark}[theorem]{Remark}
\numberwithin{equation}{section}
\newcommand{\HI}{\mathfrak{H}}
\newcommand{\B}{\mathcal{B}}
\newcommand{\qu}{\mathfrak{q}}
\newcommand{\pu}{\mathfrak{p}}
\newcommand{\oqu}{\overline{\mathfrak{q}}}
\newcommand{\quat}{\mathbb H}
\newcommand{\R}{\Bbb R}
\newcommand{\Z}{\mathbb Z}
\newcommand{\mc}{\mathcal}
\newcommand{\be}{\begin{equation}}
\newcommand{\en}{\end{equation}}
\newcommand{\D}{{\mc D}}
\newcommand{\FF}{\mc F}
\newcommand{\GG}{\mc G}
\newcommand{\N}{\mathbb N}
\newcommand{\bedefin}{\begin{defi}}
	\newcommand{\findefi}{\end{defi} \medskip}
\newcommand{\betheo}{\begin{theorem}$\!\!${\bf \,\,\,}}
	\newcommand{\entheo}{\end{theorem}}
\newcommand{\enth}{\end{theorem}}
\newcommand{\becor}{\begin{cor}$\!\!${\bf .}}
	\newcommand{\encor}{\end{cor}}
\newcommand{\belem}{\begin{lem}$\!\!${\bf .}}
	\newcommand{\enlem}{\end{lem}}
\newcommand{\bea}{\begin{eqnarray}}
\newcommand{\ena}{\end{eqnarray}}
\newcommand{\beano}{\begin{eqnarray*}}
	\newcommand{\enano}{\end{eqnarray*}}
\newcommand{\bee}{\begin{enumerate}}
	\newcommand{\ene}{\end{enumerate}}
\newcommand{\bei}{\begin{itemize}}
	\newcommand{\eni}{\end{itemize}}
\newcommand{\betab}{\begin{tabular}}
	\newcommand{\entab}{\end{tabular}}
\newcommand{\Iop}{{\mathbb{I}_{V_{\mathbb{H}}^{R}}}}
\newcommand{\bfrakq}{\mbox{\boldmath $\mathfrak q$}}
\newcommand{\bfraka}{\mbox{\boldmath $\mathfrak a$}}
\newcommand{\bfrakb}{\mbox{\boldmath $\mathfrak b$}}
\newcommand{\bfrakp}{\mbox{\boldmath $\mathfrak p$}}
\newcommand{\bk}{\mathbf k}
\newcommand{\bi}{\mathbf i}
\newcommand{\bj}{\mathbf j}
\newcommand{\vr}{V_\quat^R}
\newcommand{\ur}{U_\quat^R}
\newcommand{\ra}{\text{ran}}
\newcommand{\kr}{\text{ker}}
\newcommand{\ind}{\text{ind}}
\newcommand{\Iopu}{\mathbb{I}_{\ur}}
\newcommand{\sel}{\sigma^S_{el}}
\newcommand{\ser}{\sigma^S_{er}}
\newcommand{\se}{\sigma_e^S}
\newcommand{\Wy}{\mathcal{W}}
\newcommand{\ws}{\sigma_w^S}
\newcommand{\sk}{\sigma_k^S}
\newcommand{\cK}{\mathcal{K}}
\newcommand{\cR}{\mathcal{R}}
\newcommand{\iso}{\sigma_{\text{iso}}^S}
\newcommand{\acc}{\sigma_{\text{acc}}^S}
\newcommand{\asc}{\text{asc}}
\newcommand{\dsc}{\text{dsc}}
\newcommand{\Br}{\mathfrak{Br}}
\newcommand{\Bs}{\sigma_b^S}
\begin{document}
\title[Weyl and Browder S-spectra]{Weyl and Browder S-spectra in a right quaternionic Hilbert space}
\author{B. Muraleetharan$^{\dagger}$, K. Thirulogasanthar$^{\ddagger}$}
\address{$^{\dagger}$ Department of mathematics and Statistics, University of Jaffna, Thirunelveli, Sri Lanka.}
\address{$^{\ddagger}$ Department of Computer Science and Software Engineering, Concordia University, 1455 De Maisonneuve Blvd. West, Montreal, Quebec, H3G 1M8, Canada.}
\email{bbmuraleetharan@jfn.ac.lk, santhar@gmail.com }
\subjclass{Primary 47A10, 47A53, 47B07}
\date{\today}
\thanks{K. Thirulogasanthar would like to thank the FRQNT, Fonds de la Recherche  Nature et  Technologies (Quebec, Canada) for partial financial support under the grant number 2017-CO-201915. Part of this work was done while he was visiting the University of Jaffna to which he expresses his thanks for the hospitality.
}
\date{\today}
\begin{abstract}
 In this note first we study the Weyl operators and Weyl S-spectrum of a bounded right quaternionic linear operator, in the setting of the so-called S-spectrum, in a right quaternionic Hilbert space. In particular, we give a characterization for the S-spectrum in terms of the Weyl operators.  In the same space we also study the Browder operators and introduce the Browder S-spectrum.
\end{abstract}
\keywords{Quaternions, Quaternionic Hilbert spaces, S-spectrum, Weyl S-spectrum, Browder operator}
\maketitle
\pagestyle{myheadings}
\section{Introduction}
In the complex theory the concept of Weyl spectrum and Browder spectrum are subjects of the theory of perturbation of the spectrum, however it has found applications in operator theory and related areas \cite{Jeri, con, kub}. In the complex case, the Weyl spectrum of a bounded linear operator is the largest part of the spectrum that is invariant under compact perturbations \cite{kub, con}. We shall show that the same is true in the quaternionic Weyl S-spectrum. However, in the complex case, the Browder spectrum is not invariant under compact perturbations \cite{kub}.\\

In the complex setting, in a Hilbert space $\HI$,  for a bounded linear operator, $A$, the point spectrum or the eigenvalues of $A$ contains  isolated eigenvalues of finite algebraic and geometric multiplicities. Also these sets are important in the study of Weyl and Browder spectra \cite{kub}. In the quaternionic setting, let $\vr$ be a separable right Hilbert space,  $A$ be a bounded right linear operator, and $R_\qu(A)=A^2-2\text{Re}(\qu)A+|\qu|^2\Iop$, with $\qu\in\quat$, the set of all quaternions and $\Iop$ be the identity operator on $\vr$, be the pseudo-resolvent operator, the set of right eigenvalues of $R_\qu(A)$ coincide with the point S-spectrum (see proposition 4.5 in \cite{ghimorper}). In this regard, it will be appropriate to define and study the quaternionic isolated S-point spectrum as the quaternions which are eigenvalues of $R_\qu(A)$.\\

Due to the non-commutativity, in the quaternionic case  there are three types of  Hilbert spaces: left, right, and two-sided, depending on how vectors are multiplied by scalars. This fact can entail several problems. For example, when a Hilbert space $\mathcal H$ is one-sided (either left or right) the set of linear operators acting on it does not have a linear structure. Moreover, in a one sided quaternionic Hilbert space, given a linear operator $A$ and a quaternion $\mathfrak{q}\in\quat$, in general we have that $(\mathfrak{q} A)^{\dagger}\not=\overline{\mathfrak{q}} A^{\dagger}$ (see \cite{Mu} for details). These restrictions can severely prevent the generalization  to the quaternionic case of results valid in the complex setting. Even though most of the linear spaces are one-sided, it is possible to introduce a notion of multiplication on both sides by fixing an arbitrary Hilbert basis of $\mathcal H$.  This fact allows to have a linear structure on the set of linear operators, which is a minimal requirement to develop a full theory. Thus, the framework of this paper, is in part, is a right quaternionic Hilbert space equipped with a left multiplication, introduced by fixing a Hilbert basis.\\

In the study of Weyl and Browder S-spectra, the essential S-spectra gets involved. In defining the essential S-spectrum the structure of the so-called quaternionic Calkin algebra is used, in which the set of all bounded quaternionic right linear operators, $\B(\vr)$, should form a quaternionic two-sided Banach $C^*$-algebra with unity. This can only happen if we consider $\vr$ with a left multiplication defined on it, which is a basis dependent multiplication \cite{ghimorper}. However, regardless of which basis we choose the set $\B(\vr)$ will become a quaternionic two-sided Banach $C^*$-algebra with unity. Thus, the invariance under a basis change naturally exists.\\

As far as we know, Weyl and Browder operators and the Weyl and Browder S-spectrum have not been studied in the quaternionic setting yet. In this regard, in this note we investigate the quaternionic Weyl operators and Weyl S-spectrum and provide a characterization to the S-spectrum in terms of the weyl operators (see theorem \ref{WT1}). We also study the Browder operators to certain extent and introduce the Browder spectrum. However, in the complex case, the Browder spectrum and its characterizations depend on the so-called  Riesz idempotent which is defined in terms of the Cauchy integral formula for operators \cite{kub}. In the quaternionic setting, the Cauchy integral formula, and thereby the S-functional calculus, is known only for the slice regular functions and it is defined on an axially symmetric domain in quaternion slices. A quaternion slice is a complex plane contained in the set of all quaternions \cite{GP, ACS, CG}. In this regard, this fact severely affected our ability in studying the Browder S-spectrum in broad on the whole set of quaternions. However, one may be able to study it in an axially symmetric domain. Also in the study of quaternionic Weyl and Browder operators and S-spectra results regarding quaternionic Fredholm operators and quaternionic essential S-spectrum are involved. Materials regarding these two topics are heavily borrowed from the recent paper \cite{MT} as needed here.\\

The article is organized as follows. In section 2 we introduce the set of quaternions and quaternionic Hilbert spaces and their bases, as needed for the development of this article, which may not be familiar to a broad range of audience. In section 3 we define and investigate, as needed, right linear operators and their properties. In section 3.1 we define a basis dependent left multiplication on a right quaternionic Hilbert space. In section 3.2 we deal with the right S-spectrum, left S-spectrum, S-spectrum and its major partitions. In section 4 we recall some facts about the Fredholm operators and its index for a bounded quaternionic right linear operator from \cite{MT}. In section 5, from \cite{MT} we recall results about the essential S-spectrum as needed. We also prove certain results which are omitted from \cite{MT}. In section 6 we introduce quaternionic Weyl operators and Weyl S-spectrum. In particular we provide a characterization to the S-spectrum in terms of the quaternionic Weyl operators. In section 7 we define and study the quaternionic Browder operators and Browder S-spectrum in a limited sense, which is due to the unavailability of a Cauchy integral formula on the whole set of quaternions. Section 8 ends the manuscript with a brief conclusion.

\section{Mathematical preliminaries}
In order to make the paper self-contained, we recall some facts about quaternions which may not be well-known.  For details we refer the reader to \cite{Ad,ghimorper,Vis}.
\subsection{Quaternions}
Let $\quat$ denote the field of all quaternions and $\quat^*$ the group (under quaternionic multiplication) of all invertible quaternions. A general quaternion can be written as
$$\bfrakq = q_0 + q_1 \bi + q_2 \bj + q_3 \bk, \qquad q_0 , q_1, q_2, q_3 \in \mathbb R, $$
where $\bi,\bj,\bk$ are the three quaternionic imaginary units, satisfying
$\bi^2 = \bj^2 = \bk^2 = -1$ and $\bi\bj = \bk = -\bj\bi,  \; \bj\bk = \bi = -\bk\bj,
\; \bk\bi = \bj = - \bi\bk$. The quaternionic conjugate of $\bfrakq$ is
$$ \overline{\bfrakq} = q_0 - \bi q_1 - \bj q_2 - \bk q_3 , $$
while $\vert \bfrakq \vert=(\bfrakq \overline{\bfrakq})^{1/2} $ denotes the usual norm of the quaternion $\bfrakq$.
If $\bfrakq$ is non-zero element, it has inverse
$
\bfrakq^{-1} =  \frac {\overline{\bfrakq}}{\vert \bfrakq \vert^2 }.$
Finally, the set
\begin{eqnarray*}
\mathbb{S}&=&\{I=x_1 \bi+x_2\bj+x_3\bk~\vert
~x_1,x_2,x_3\in\mathbb{R},~x_1^2+x_2^2+x_3^2=1\},
\end{eqnarray*}
contains all the elements whose square is $-1$. It is a $2$-dimensional sphere in $\mathbb H$ identified with $\mathbb R^4$.
\subsection{Quaternionic Hilbert spaces}
In this subsection we  discuss right quaternionic Hilbert spaces. For more details we refer the reader to \cite{Ad,ghimorper,Vis}.
\subsubsection{Right quaternionic Hilbert Space}
Let $V_{\quat}^{R}$ be a vector space under right multiplication by quaternions.  For $\phi,\psi,\omega\in V_{\quat}^{R}$ and $\bfrakq\in \quat$, the inner product
$$\langle\cdot\mid\cdot\rangle:V_{\quat}^{R}\times V_{\quat}^{R}\longrightarrow \quat$$
satisfies the following properties
\begin{enumerate}
	\item[(i)]
	$\overline{\langle \phi\mid \psi\rangle}=\langle \psi\mid \phi\rangle$
	\item[(ii)]
	$\|\phi\|^{2}=\langle \phi\mid \phi\rangle>0$ unless $\phi=0$, a real norm
	\item[(iii)]
	$\langle \phi\mid \psi+\omega\rangle=\langle \phi\mid \psi\rangle+\langle \phi\mid \omega\rangle$
	\item[(iv)]
	$\langle \phi\mid \psi\bfrakq\rangle=\langle \phi\mid \psi\rangle\bfrakq$
	\item[(v)]
	$\langle \phi\bfrakq\mid \psi\rangle=\overline{\bfrakq}\langle \phi\mid \psi\rangle$
\end{enumerate}
where $\overline{\bfrakq}$ stands for the quaternionic conjugate. It is always assumed that the
space $V_{\quat}^{R}$ is complete under the norm given above and separable. Then,  together with $\langle\cdot\mid\cdot\rangle$ this defines a right quaternionic Hilbert space. Quaternionic Hilbert spaces share many of the standard properties of complex Hilbert spaces.

The next two Propositions can be established following the proof of their complex counterparts, see e.g. \cite{ghimorper,Vis}.
\begin{proposition}\label{P1}
Let $\mathcal{O}=\{\varphi_{k}\,\mid\,k\in N\}$
be an orthonormal subset of $V_{\quat}^{R}$, where $N$ is a countable index set. Then following conditions are pairwise equivalent:
\begin{itemize}
\item [(a)] The closure of the linear combinations of elements in $\mathcal O$ with coefficients on the right is $V_{\quat}^{R}$.
\item [(b)] For every $\phi,\psi\in V_{\quat}^{R}$, the series $\sum_{k\in N}\langle\phi\mid\varphi_{k}\rangle\langle\varphi_{k}\mid\psi\rangle$ converges absolutely and it holds:
$$\langle\phi\mid\psi\rangle=\sum_{k\in N}\langle\phi\mid\varphi_{k}\rangle\langle\varphi_{k}\mid\psi\rangle.$$
\item [(c)] For every  $\phi\in V_{\quat}^{R}$, it holds:
$$\|\phi\|^{2}=\sum_{k\in N}\mid\langle\varphi_{k}\mid\phi\rangle\mid^{2}.$$
\item [(d)] $\mathcal{O}^{\bot}=\{0\}$.
\end{itemize}
\end{proposition}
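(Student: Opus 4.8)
The plan is to prove the cycle of implications $(a) \Rightarrow (b) \Rightarrow (c) \Rightarrow (d) \Rightarrow (a)$, which closes up to give pairwise equivalence. The organizing tool throughout is the Fourier expansion of a vector with respect to $\mathcal{O}$: because $V_{\quat}^{R}$ carries right multiplication, the coefficients must sit on the right, namely $c_k := \langle \varphi_k \mid \phi\rangle$. I would first establish, for every finite $F \subset N$, that the partial sum $\phi_F := \sum_{k \in F}\varphi_k \langle \varphi_k \mid \phi\rangle$ is the unique minimizer of $\|\phi - \sum_{k \in F}\varphi_k a_k\|$ over $(a_k)_{k \in F}\subset\quat$. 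Expanding the square and using orthonormality together with properties (iv) and (v) yields the identity $\|\phi - \sum_{k \in F}\varphi_k a_k\|^2 = \|\phi\|^2 - \sum_{k \in F}|c_k|^2 + \sum_{k \in F}|a_k - c_k|^2$. Taking $a_k = c_k$ gives Bessel's inequality $\sum_{k \in F}|c_k|^2 \le \|\phi\|^2$, and hence $\sum_{k \in N}|\langle \varphi_k \mid \phi\rangle|^2 \le \|\phi\|^2 < \infty$.

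For $(a) \Rightarrow (b)$, the density assumed in (a) together with the best-approximation property forces $\phi_F \to \phi$ along the net of finite subsets (a finer $F$ gives an at-least-as-good approximant), so $\phi = \sum_k \varphi_k \langle \varphi_k \mid \phi\rangle$. Continuity of the inner product and property (iv) then give $\langle \phi \mid \psi\rangle = \langle \phi \mid \sum_k \varphi_k \langle \varphi_k \mid \psi\rangle\rangle = \sum_k \langle \phi \mid \varphi_k\rangle \langle \varphi_k \mid \psi\rangle$, while absolute convergence follows from the quaternionic Cauchy--Schwarz estimate $\sum_k |\langle \varphi_k \mid \phi\rangle|\,|\langle \varphi_k \mid \psi\rangle| \le (\sum_k |\langle \varphi_k \mid \phi\rangle|^2)^{1/2}(\sum_k |\langle \varphi_k \mid \psi\rangle|^2)^{1/2}$, both factors being finite by Bessel. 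The implication $(b) \Rightarrow (c)$ is immediate on setting $\psi = \phi$ and using $\langle \phi \mid \varphi_k\rangle \langle \varphi_k \mid \phi\rangle = |\langle \varphi_k \mid \phi\rangle|^2$ from property (i). For $(c) \Rightarrow (d)$, if $\phi \in \mathcal{O}^{\perp}$ then every $\langle \varphi_k \mid \phi\rangle$ vanishes, so (c) forces $\|\phi\|^2 = 0$, whence $\phi = 0$ and $\mathcal{O}^{\perp} = \{0\}$.

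The remaining step $(d) \Rightarrow (a)$ is where I expect the main obstacle to lie, since it is the one point that calls on genuine Hilbert-space machinery rather than bookkeeping. I would let $M$ be the closure of the right-linear span of $\mathcal{O}$ and argue by contradiction: if $M \neq V_{\quat}^{R}$, the orthogonal decomposition $V_{\quat}^{R} = M \oplus M^{\perp}$, valid in a complete quaternionic Hilbert space, produces a nonzero $\phi \in M^{\perp} \subseteq \mathcal{O}^{\perp}$, contradicting (d). This relies on the projection theorem (the closest-point property for a closed subspace) in the quaternionic setting, which I would invoke as the analogue of the complex result, noting that its proof goes through verbatim because $\text{Re}\langle \cdot \mid \cdot\rangle$ makes $V_{\quat}^{R}$ a real Hilbert space to which the usual parallelogram/minimization argument applies. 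The only real care demanded throughout is the consistent placement of quaternionic scalars on the correct side and the conjugation rules (iv)--(v); once these are respected, every estimate mirrors the complex case and commutativity is never used.
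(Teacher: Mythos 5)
Your proposal is correct and matches the paper's treatment: the paper does not write out a proof but simply states that the result "can be established following the proof of their complex counterparts," and your cycle $(a)\Rightarrow(b)\Rightarrow(c)\Rightarrow(d)\Rightarrow(a)$ via Bessel's inequality, best approximation, Cauchy--Schwarz on the moduli, and the projection theorem is exactly that standard adaptation, with the scalars correctly kept on the right throughout.
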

\begin{definition}
The set $\mathcal{O}$ as in Proposition \ref{P1} is called a {\em Hilbert basis} of $V_{\quat}^{R}$.
\end{definition}
\begin{proposition}\label{P2}
Every quaternionic separable Hilbert space $V_{\quat}^{R}$ has a Hilbert basis. All the Hilbert bases of $V_{\quat}^{R}$ have the same cardinality.

Furthermore, if $\mathcal{O}$ is a Hilbert basis of $V_{\quat}^{R}$, then every  $\phi\in V_{\quat}^{R}$ can be uniquely decomposed as follows:
$$\phi=\sum_{k\in N}\varphi_{k}\langle\varphi_{k}\mid\phi\rangle,$$
where the series $\sum_{k\in N}\varphi_k\langle\varphi_{k}\mid\phi\rangle$ converges absolutely in $V_{\quat}^{R}$.
\end{proposition}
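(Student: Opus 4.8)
The plan is to mirror the classical separable-Hilbert-space argument while tracking carefully where the right-handedness of the scalar multiplication enters, since by property (iv) the coefficients must sit on the \emph{right} of the basis vectors for orthogonality to survive.

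First I would establish existence. Since $\vr$ is separable, fix a countable dense subset $\{\psi_n\}_n$ and run the Gram--Schmidt procedure with right coefficients: discarding vectors lying in the right-linear span of their predecessors, set $\varphi_1=\psi_1\|\psi_1\|^{-1}$ and recursively $\widetilde\varphi_n=\psi_n-\sum_{k<n}\varphi_k\langle\varphi_k\mid\psi_n\rangle$, $\varphi_n=\widetilde\varphi_n\|\widetilde\varphi_n\|^{-1}$. The essential check is orthonormality: for $j<n$, property (iv) gives $\langle\varphi_j\mid\varphi_k\langle\varphi_k\mid\psi_n\rangle\rangle=\langle\varphi_j\mid\varphi_k\rangle\langle\varphi_k\mid\psi_n\rangle=\delta_{jk}\langle\varphi_k\mid\psi_n\rangle$, whence $\langle\varphi_j\mid\widetilde\varphi_n\rangle=0$; the right placement of the coefficients is exactly what makes this cancellation go through. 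Because the right-linear span of $\{\varphi_n\}$ contains $\{\psi_n\}$, it is dense, so criterion (a) of Proposition \ref{P1} shows $\{\varphi_n\}$ is a Hilbert basis.

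Next I would prove that all Hilbert bases share the same cardinality. For any orthonormal pair $\varphi\neq\varphi'$ one computes $\|\varphi-\varphi'\|^2=2$, so distinct basis vectors are at mutual distance $\sqrt2$; the open balls of radius $\tfrac{1}{\sqrt2}$ about them are pairwise disjoint, and each must meet a fixed countable dense set, forcing every Hilbert basis to be at most countable. If some basis is finite of cardinality $n$, a finite-dimensional dimension count forces every basis to have cardinality $n$; otherwise every basis is countably infinite, since a finite one would render the space finite-dimensional and preclude an infinite orthonormal set. Either way the cardinalities agree. For the expansion, given a Hilbert basis $\mathcal O=\{\varphi_k\}$ and $\phi\in\vr$, Proposition \ref{P1}(c) yields $\sum_k|\langle\varphi_k\mid\phi\rangle|^2=\|\phi\|^2<\infty$; using $\|\varphi_k\langle\varphi_k\mid\phi\rangle\|=|\langle\varphi_k\mid\phi\rangle|$ and orthonormality, the finite partial sums $s_F=\sum_{k\in F}\varphi_k\langle\varphi_k\mid\phi\rangle$ satisfy $\|s_F-s_{F'}\|^2=\sum_{k\in F\triangle F'}|\langle\varphi_k\mid\phi\rangle|^2$, so $(s_F)$ is Cauchy along the net of finite subsets and converges to some $\phi'$. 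Applying $\langle\varphi_j\mid\cdot\rangle$, using continuity of the inner product and property (iv), gives $\langle\varphi_j\mid\phi'\rangle=\langle\varphi_j\mid\phi\rangle$ for every $j$, so $\phi-\phi'\in\mathcal O^{\bot}=\{0\}$ by Proposition \ref{P1}(d) and $\phi=\phi'$; uniqueness of the coefficients follows by taking $\langle\varphi_j\mid\cdot\rangle$ of the expansion, which isolates $\langle\varphi_j\mid\phi\rangle$.

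The hard part will be the convergence and limit-identification in the last step: one must argue the vector series converges unconditionally (the claimed ``absolute'' convergence is understood as convergence of the net over finite index sets, exactly as in the complex case, since $\sum_k|\langle\varphi_k\mid\phi\rangle|$ need not be finite) and that the non-commutative coefficients do not obstruct passing $\langle\varphi_j\mid\cdot\rangle$ through the limit. Both reduce, via $\|\varphi_k\langle\varphi_k\mid\phi\rangle\|=|\langle\varphi_k\mid\phi\rangle|$ and property (iv), to the scalar $\ell^2$ estimate already furnished by Proposition \ref{P1}.
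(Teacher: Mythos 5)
Your proof is correct and follows essentially the route the paper itself intends, since the paper offers no argument of its own but simply asserts that the proposition ``can be established following the proof of their complex counterparts'' in \cite{ghimorper,Vis}: right-coefficient Gram--Schmidt on a countable dense set, the $\sqrt{2}$-separation argument for cardinality, and Bessel plus Pythagoras for the expansion, with property (iv) invoked exactly where the non-commutativity could bite. Your observation that the stated ``absolute'' convergence of the vector series must be read as unconditional (net) convergence, since $\sum_k\lvert\langle\varphi_k\mid\phi\rangle\rvert$ need not be finite, is a fair and correct refinement rather than a gap.
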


It should be noted that once a Hilbert basis is fixed, every left (resp. right) quaternionic Hilbert space also becomes a right (resp. left) quaternionic Hilbert space \cite{ghimorper,Vis}. See next section 3.2 for more details.

The field of quaternions $\quat$ itself can be turned into a left quaternionic Hilbert space by defining the inner product $\langle \bfrakq \mid \bfrakq^\prime \rangle = \bfrakq \overline{\bfrakq^{\prime}}$ or into a right quaternionic Hilbert space with  $\langle \bfrakq \mid \bfrakq^\prime \rangle = \overline{\bfrakq}\bfrakq^\prime$.
\section{Right quaternionic linear  operators and some basic properties}
In this section we shall define right  $\quat$-linear operators and recall some basic properties. Most of them are very well known. In this manuscript, we follow the notations of \cite{AC} and \cite{ghimorper}.
\begin{definition}
A mapping $A:\D(A)\subseteq V_{\quat}^R \longrightarrow V_{\quat}^R$, where $\D(A)$ stands for the domain of $A$, is said to be right $\quat$-linear operator or, for simplicity, right linear operator, if
$$A(\phi\bfraka+\psi\bfrakb)=(A\phi)\bfraka+(A\psi)\bfrakb,~~\mbox{~if~}~~\phi,\,\psi\in \D(A)~~\mbox{~and~}~~\bfraka,\bfrakb\in\quat.$$
\end{definition}
The set of all right linear operators will be denoted by $\mathcal{L}(V_{\quat}^{R})$ and the identity linear operator on $V_{\quat}^{R}$ will be denoted by $\Iop$. For a given $A\in \mathcal{L}(V_{\quat}^{R})$, the range and the kernel will be
\begin{eqnarray*}
\text{ran}(A)&=&\{\psi \in V_{\quat}^{R}~|~A\phi =\psi \quad\text{for}~~\phi \in\D(A)\}\\
\ker(A)&=&\{\phi \in\D(A)~|~A\phi =0\}.
\end{eqnarray*}
We call an operator $A\in \mathcal{L}(V_{\quat}^{R})$ bounded if
\begin{equation}\label{PE1}
\|A\|=\sup_{\|\phi \|=1}\|A\phi \|<\infty,
\end{equation}
or equivalently, there exist $K\geq 0$ such that $\|A\phi \|\leq K\|\phi \|$ for all $\phi \in\D(A)$. The set of all bounded right linear operators will be denoted by $\B(V_{\quat}^{R})$. Set of all  invertible bounded right linear operators will be denoted by $\mathcal{G} (V_{\quat}^{R})$. We also denote for a set $\Delta\subseteq\quat$, $\Delta^*=\{\oqu~|~\qu\in\Delta\}$.
\\
Assume that $V_{\quat}^{R}$ is a right quaternionic Hilbert space, $A$ is a right linear operator acting on it.
Then, there exists a unique linear operator $A^{\dagger}$ such that
\begin{equation}\label{Ad1}
\langle \psi \mid A\phi \rangle=\langle A^{\dagger} \psi \mid\phi \rangle;\quad\text{for all}~~~\phi \in \D (A), \psi\in\D(A^\dagger),
\end{equation}
where the domain $\D(A^\dagger)$ of $A^\dagger$ is defined by
$$
\D(A^\dagger)=\{\psi\in V_{\quat}^{R}\ |\ \exists \varphi\ {\rm such\ that\ } \langle \psi \mid A\phi \rangle=\langle \varphi \mid\phi \rangle\}.$$
\begin{proposition}\label{KR}\cite{ghimorper}
If $A\in\B(\vr)$ is normal then $\kr(A)=\kr(A^\dagger)$.
\end{proposition}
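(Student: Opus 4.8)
The plan is to show that normality forces $\|A\phi\|=\|A^\dagger\phi\|$ for every $\phi\in\vr$, after which the equality of kernels is immediate. Recall that $A$ normal means $AA^\dagger=A^\dagger A$; since $A\in\B(\vr)$ is bounded, $A^\dagger$ is everywhere defined and bounded, so all the inner products written below make sense on the whole space.

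First I would expand $\|A\phi\|^2$ by means of the defining relation \eqref{Ad1} of the adjoint. Taking $\psi=A\phi$ in $\langle\psi\mid A\phi\rangle=\langle A^\dagger\psi\mid\phi\rangle$ gives
$$\|A\phi\|^2=\langle A\phi\mid A\phi\rangle=\langle A^\dagger A\phi\mid\phi\rangle.$$
Reading the same relation \eqref{Ad1} as $\langle A^\dagger\psi\mid\phi\rangle=\langle\psi\mid A\phi\rangle$, now with $\psi=\phi$ and with second argument $A^\dagger\phi$, yields symmetrically
$$\|A^\dagger\phi\|^2=\langle A^\dagger\phi\mid A^\dagger\phi\rangle=\langle\phi\mid AA^\dagger\phi\rangle.$$

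The key step is to compare the two right-hand sides. Both are real numbers, being the squared norms appearing on the left, so each equals its own quaternionic conjugate. Using property (i) of the inner product, $\langle\phi\mid AA^\dagger\phi\rangle=\overline{\langle AA^\dagger\phi\mid\phi\rangle}=\langle AA^\dagger\phi\mid\phi\rangle$. Invoking normality $AA^\dagger=A^\dagger A$ then gives $\langle AA^\dagger\phi\mid\phi\rangle=\langle A^\dagger A\phi\mid\phi\rangle$, so that $\|A^\dagger\phi\|^2=\|A\phi\|^2$ for all $\phi\in\vr$.

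Finally, $\phi\in\kr(A)$ iff $A\phi=0$ iff $\|A\phi\|=0$ iff $\|A^\dagger\phi\|=0$ iff $A^\dagger\phi=0$ iff $\phi\in\kr(A^\dagger)$, whence $\kr(A)=\kr(A^\dagger)$. The one point demanding care is the quaternionic bookkeeping: because the inner product is only conjugate-linear in its first slot and linear in its second, one must shuttle the \emph{operators} (never quaternion scalars) across \eqref{Ad1}, and must use that the quantities being compared are real before identifying a term with its conjugate. I expect this reality/conjugation check to be the only genuine obstacle; the remainder is a direct transcription of the familiar complex argument.
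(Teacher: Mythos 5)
Your proof is correct: the identity $\|A\phi\|=\|A^\dagger\phi\|$ derived from normality, with the reality of the squared norms used to justify swapping the order of arguments, is exactly the standard argument, and you handle the one quaternionic subtlety (only operators, never scalars, are moved across the adjoint relation) properly. The paper itself gives no proof, citing \cite{ghimorper} for this proposition, and your argument is the one that reference employs.
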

\begin{proposition}\label{IP30}\cite{MT, BT}
Let $A\in\B(\vr, \ur)$ then
$$(a)~~\ra(A)^\perp=\kr(A^\dagger).\quad(b)~~\kr(A)=\ra(A^\dagger)^\perp.$$
\end{proposition}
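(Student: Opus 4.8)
The plan is to establish both identities as the standard range--kernel orthogonality relation, obtained by unwinding the definition of the orthogonal complement together with the defining property \eqref{Ad1} of the adjoint. Since $A\in\B(\vr,\ur)$ is bounded, its domain is all of $\vr$ and that of $A^\dagger$ is all of $\ur$, so no domain issues arise and every vector may be freely inserted into the inner products.

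First I would prove (a). Fix $\psi\in\ur$. By definition $\psi\in\ra(A)^\perp$ means $\langle\psi\mid A\phi\rangle=0$ for every $\phi\in\vr$, the right-hand side being the zero quaternion. Applying \eqref{Ad1} rewrites this as $\langle A^\dagger\psi\mid\phi\rangle=0$ for every $\phi\in\vr$. Choosing the particular vector $\phi=A^\dagger\psi$ and invoking property (ii) of the inner product gives $\|A^\dagger\psi\|^2=\langle A^\dagger\psi\mid A^\dagger\psi\rangle=0$, whence $A^\dagger\psi=0$, i.e. $\psi\in\kr(A^\dagger)$. Conversely, if $\psi\in\kr(A^\dagger)$ then $\langle A^\dagger\psi\mid\phi\rangle=0$ for all $\phi$, so $\langle\psi\mid A\phi\rangle=0$ for all $\phi$ by \eqref{Ad1}, and therefore $\psi\perp\ra(A)$. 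This proves $\ra(A)^\perp=\kr(A^\dagger)$.

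For (b) I would simply apply part (a) to the operator $A^\dagger\in\B(\ur,\vr)$ in place of $A$. This yields $\ra(A^\dagger)^\perp=\kr\big((A^\dagger)^\dagger\big)$, and since $(A^\dagger)^\dagger=A$ for bounded right linear operators, the right-hand side is $\kr(A)$, giving $\kr(A)=\ra(A^\dagger)^\perp$.

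The argument is routine, and the only points needing care are specific to the quaternionic setting: one must respect the scalar conventions (iv) and (v) so that the adjoint relation \eqref{Ad1} is applied on the correct side, and one must have the involution identity $(A^\dagger)^\dagger=A$ available for bounded operators. Neither presents a genuine obstacle---the real-valuedness of the norm in (ii) is exactly what lets the orthogonality condition collapse to $A^\dagger\psi=0$---so I expect the main (modest) effort to lie in verifying $(A^\dagger)^\dagger=A$, which is the standard consequence of the uniqueness of the adjoint guaranteed by \eqref{Ad1}.
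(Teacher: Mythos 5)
Your proof is correct and is the standard range--kernel orthogonality argument; the paper itself states this proposition without proof, citing \cite{MT, BT}, and your derivation of (a) from the adjoint relation \eqref{Ad1} followed by deducing (b) via $(A^\dagger)^\dagger=A$ is exactly the argument those sources use. The one step you flag as needing verification, $(A^\dagger)^\dagger=A$ for bounded right linear operators, is indeed the standard consequence of the uniqueness of the adjoint and poses no difficulty in the quaternionic setting.
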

\begin{definition}\label{PD1}
Let $\vr$ and $\ur$ be right quaternionic  Hilbert spaces. A bounded operator $K:\vr\longrightarrow\ur$ is compact if $K$ maps bounded sets into precompact sets. That is, $\overline{K(U)}$ is compact in $\ur$, where $U=\{\phi\in\vr~|~\|\phi\|<1\}$. Equivalently, for all bounded sequences $\{\phi_n\}_{n=1}^\infty$ in $\vr$ the sequence $\{K\phi_n\}_{n=0}^\infty$ has a convergence subsequence in $\ur$.
\end{definition}
We denote the set of all compact operators from $\vr$ to $\ur$ by $\B_0(\vr,\ur)$ and the compact operators from $\vr$ from $\vr$ will be denoted by $\B_0(\vr)$.
\begin{definition}\label{PD2}
An operator $K:\vr\longrightarrow\ur$ is said to be of finite rank if $\text{ran}(K)\subseteq\ur$ is finite dimensional.
\end{definition}
\begin{proposition}\label{PD22}\cite{MT}
If $A\in\B(\vr,\ur)$ is of finite rank, then $A$ is compact.
\end{proposition}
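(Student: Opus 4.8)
The plan is to reduce the compactness of $A$ to the Bolzano--Weierstrass property of a finite-dimensional space, using the second (sequential) formulation of compactness in Definition \ref{PD1}. First I would take an arbitrary bounded sequence $\{\phi_n\}_{n=1}^\infty$ in $\vr$, say $\|\phi_n\|\le C$ for all $n$. Since $A$ is bounded, $\|A\phi_n\|\le \|A\|\,C$, so $\{A\phi_n\}$ is a bounded sequence lying entirely in the subspace $M=\ra(A)\subseteq\ur$. By hypothesis $M$ is finite dimensional, hence complete and separable, so it is itself a right quaternionic Hilbert space; the goal then becomes to show that every bounded sequence in $M$ has a subsequence converging in $M$, and applying this to $\{A\phi_n\}$ supplies the convergent subsequence demanded by Definition \ref{PD1}.

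Next I would fix an orthonormal Hilbert basis $\{e_1,\dots,e_m\}$ of $M$, which exists by Proposition \ref{P2}. By Proposition \ref{P2} every $\psi\in M$ decomposes uniquely as $\psi=\sum_{j=1}^m e_j\,c_j$ with coefficients $c_j=\langle e_j\mid\psi\rangle\in\quat$, and by the Parseval identity of Proposition \ref{P1}(c) the norm satisfies $\|\psi\|^2=\sum_{j=1}^m|c_j|^2$. Thus the coordinate map $\psi\mapsto(c_1,\dots,c_m)$ is an isometry of $M$ onto $\quat^m$ with its Euclidean norm; identifying each quaternion with an element of $\mathbb R^4$, this realizes $M$ isometrically as $\mathbb R^{4m}$.

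Writing $A\phi_n=\sum_{j=1}^m e_j\,c_j^{(n)}$, boundedness of $\{A\phi_n\}$ forces each coordinate sequence $\{c_j^{(n)}\}_n$ to be bounded in $\quat\cong\mathbb R^4$. Applying the classical Bolzano--Weierstrass theorem in $\mathbb R^{4m}$ (equivalently, extracting convergent subsequences one coordinate at a time through finitely many successive refinements) yields a subsequence $\{A\phi_{n_k}\}$ along which $c_j^{(n_k)}\to c_j$ for every $j$. Setting $\psi=\sum_{j=1}^m e_j\,c_j\in M$, the isometry gives $\|A\phi_{n_k}-\psi\|^2=\sum_{j=1}^m|c_j^{(n_k)}-c_j|^2\to 0$, so $\{A\phi_{n_k}\}$ converges in $M\subseteq\ur$, and $A$ is compact.

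The argument is essentially routine, and I do not expect a genuine obstacle, since finite dimensionality removes the usual difficulties of infinite-dimensional compactness. The only point that truly uses the quaternionic structure is the identification of $M$ with $\mathbb R^{4m}$, and the hard part, such as it is, is simply to record cleanly that the right-quaternionic norm agrees with the Euclidean norm on coordinates; this is exactly what Proposition \ref{P1}(c) guarantees, so the reduction to the real Bolzano--Weierstrass theorem is immediate.
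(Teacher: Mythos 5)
Your argument is correct and is the standard one: a bounded finite-rank operator sends bounded sequences into a bounded subset of the finite-dimensional (hence complete) subspace $\ra(A)$, which an orthonormal basis and the Parseval identity of Proposition \ref{P1}(c) identify isometrically with $\mathbb{R}^{4m}$, so Bolzano--Weierstrass supplies the convergent subsequence required by Definition \ref{PD1}. The paper itself gives no proof here -- it cites \cite{MT} -- and the proof there is this same finite-dimensional-range reduction, so there is nothing to add.
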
 
\begin{proposition}\label{PP02}\cite{MT}
Let $A\in\B(\vr,\ur)$ be a finite rank operator, then $A^\dagger\in\B(\ur,\vr)$ is a finite rank operator and $\dim(\ra(A))=\dim(\ra(A^\dagger))$.
\end{proposition}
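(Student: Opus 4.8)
The plan is to represent the finite-rank operator $A$ as an explicit finite sum of rank-one operators, adjoin it term by term, and read off the structure of $A^\dagger$. The only genuinely quaternionic difficulty will be the bookkeeping of the scalar order forced by properties (iv)--(v) of the inner product.

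First I would set $n=\dim(\ra(A))$ and fix an orthonormal basis $\{e_1,\dots,e_n\}$ of the subspace $\ra(A)\subseteq\ur$, which is closed because it is finite dimensional. Since $A\phi\in\ra(A)$ for every $\phi\in\vr$, Proposition \ref{P2} applied inside $\ra(A)$ gives $A\phi=\sum_{k=1}^n e_k\langle e_k\mid A\phi\rangle$. Rewriting each coefficient via the defining relation \eqref{Ad1} of the adjoint, $\langle e_k\mid A\phi\rangle=\langle A^\dagger e_k\mid\phi\rangle$, yields the representation
$$A\phi=\sum_{k=1}^n e_k\,\langle A^\dagger e_k\mid\phi\rangle,\qquad \phi\in\vr.$$

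Next I would compute $A^\dagger$ from this. For $\psi\in\ur$ and $\phi\in\vr$, relation \eqref{Ad1} and property (iv) give
$$\langle A^\dagger\psi\mid\phi\rangle=\langle\psi\mid A\phi\rangle=\sum_{k=1}^n\langle\psi\mid e_k\rangle\,\langle A^\dagger e_k\mid\phi\rangle.$$
The step I expect to be the main obstacle is to reabsorb the left scalar factors $\langle\psi\mid e_k\rangle$ into the vectors so that the right-hand side becomes a single inner product in the slot $\langle\,\cdot\mid\phi\rangle$. Here property (v), $\langle g\,\qu\mid\phi\rangle=\oqu\,\langle g\mid\phi\rangle$ for $g\in\vr$ and $\qu\in\quat$, is decisive: choosing $\qu=\langle e_k\mid\psi\rangle$, so that $\oqu=\langle\psi\mid e_k\rangle$ by property (i), gives $\langle\psi\mid e_k\rangle\langle A^\dagger e_k\mid\phi\rangle=\langle (A^\dagger e_k)\langle e_k\mid\psi\rangle\mid\phi\rangle$. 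Summing over $k$ and invoking nondegeneracy of the inner product (Proposition \ref{P1}(d)) forces
$$A^\dagger\psi=\sum_{k=1}^n (A^\dagger e_k)\,\langle e_k\mid\psi\rangle,\qquad\psi\in\ur.$$
Thus $\ra(A^\dagger)$ lies in the right-linear span of $\{A^\dagger e_1,\dots,A^\dagger e_n\}$, so $A^\dagger$ is a finite sum of bounded rank-one operators, hence $A^\dagger\in\B(\ur,\vr)$ is of finite rank with $\dim(\ra(A^\dagger))\le n$.

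Finally, to promote the inequality to $\dim(\ra(A^\dagger))=\dim(\ra(A))$, I would show that $A^\dagger e_1,\dots,A^\dagger e_n$ are right-linearly independent. Suppose $\sum_k (A^\dagger e_k)\lambda_k=0$ with $\lambda_k\in\quat$. Pairing against an arbitrary $\phi\in\vr$ and using property (v) gives $\sum_k\overline{\lambda_k}\langle e_k\mid A\phi\rangle=0$; since $A$ maps onto $\ra(A)=\mathrm{span}\{e_i\}$, for each fixed $i$ one may choose $\phi$ with $A\phi=e_i$, whence $\overline{\lambda_i}=0$ and so $\lambda_i=0$. Therefore the $A^\dagger e_k$ are independent and $\dim(\ra(A^\dagger))=n=\dim(\ra(A))$. (Alternatively, since $(A^\dagger)^\dagger=A$, applying the finite-rank bound just established to $A^\dagger$ yields the reverse inequality $\dim(\ra(A))\le\dim(\ra(A^\dagger))$, giving equality.)
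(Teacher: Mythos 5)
Your proof is correct. Note that the paper itself gives no argument for this proposition --- it is quoted from \cite{MT} --- so there is nothing to compare against; your rank-one decomposition $A^\dagger\psi=\sum_{k}(A^\dagger e_k)\langle e_k\mid\psi\rangle$, with the scalar order correctly handled via properties (iv)--(v), together with either the independence argument or the involution $(A^\dagger)^\dagger=A$, is a complete and self-contained proof. (One small slip: nondegeneracy of the inner product comes from property (ii) of the inner product, not Proposition \ref{P1}(d), but this does not affect the argument.)
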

\begin{definition}\label{PD3}
Let $M\subset\vr$ be a closed subspace, then $\text{codim}(M)=\dim(\vr/M)$.
\end{definition}
\begin{definition}\label{PD4}
Let $A:\vr\longrightarrow\ur$ be a bounded operator, then $\text{coker}(A):=\ur/\text{ran}(A)$ and $\dim(\text{coker}(A))=\dim(\ur)-\dim(\text{ran}(A)).$
\end{definition}
\begin{proposition}\label{PC1}\cite{MT}
A bounded operator $K:\vr\longrightarrow\ur$ is compact if and only if there exists finite rank operators $K_n:\vr\longrightarrow\ur$ such that $\|K-K_n\|\longrightarrow 0$ as $n\longrightarrow 0$.
\end{proposition}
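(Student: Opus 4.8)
The statement is the quaternionic analogue of the classical fact that, in a Hilbert space, the compact operators are exactly the operator-norm limits of finite-rank operators. The plan is to prove the two implications separately, the direction ``finite-rank limit $\then$ compact'' being soft, and the converse relying on orthogonal projections onto finite-dimensional subspaces built from a Hilbert basis of $\ur$.

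For the implication that a norm-limit of finite-rank operators is compact, I would show that the compact operators form a norm-closed set. Let $K_n$ be finite rank with $\|K-K_n\|\to 0$; by Proposition \ref{PD22} each $K_n$ is compact. Given a bounded sequence $\{\phi_m\}$ in $\vr$, I would extract by a diagonal argument a single subsequence along which $K_n\phi_m$ converges in $\ur$ for every fixed $n$. Along this diagonal subsequence the estimate
\[
\|K\phi_j-K\phi_k\|\le \|K-K_n\|\,(\|\phi_j\|+\|\phi_k\|)+\|K_n\phi_j-K_n\phi_k\|
\]
shows, after first choosing $n$ large and then $j,k$ large, that $\{K\phi_m\}$ is Cauchy, hence convergent by completeness. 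Thus $K$ is compact.

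For the converse, suppose $K$ is compact. Since $\ur$ is separable it admits a countable Hilbert basis $\{\varphi_k\}_{k\in\N}$ by Proposition \ref{P2}. I would set $P_n\psi=\sum_{k=1}^n\varphi_k\langle\varphi_k\mid\psi\rangle$, which is right linear by properties (iii) and (iv) of the inner product, and define $K_n:=P_nK$. Each $K_n$ has range contained in the finite-dimensional right subspace spanned by $\varphi_1,\dots,\varphi_n$, so $K_n$ is finite rank. From Proposition \ref{P1}(c) one obtains $\|(\Iopu-P_n)\psi\|^2=\sum_{k>n}|\langle\varphi_k\mid\psi\rangle|^2$, which yields both $\|\Iopu-P_n\|\le 1$ and $P_n\psi\to\psi$ for every $\psi\in\ur$.

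It then remains to show $\|K-K_n\|=\|(\Iopu-P_n)K\|\to 0$, and here I would argue by contradiction: if these norms do not tend to zero, there exist $\eps>0$, indices $n_j\to\infty$, and unit vectors $\phi_j\in\vr$ with $\|(\Iopu-P_{n_j})K\phi_j\|\ge\eps$. Compactness of $K$ lets me pass to a subsequence with $K\phi_j\to\psi$ in $\ur$. Splitting
\[
(\Iopu-P_{n_j})K\phi_j=(\Iopu-P_{n_j})(K\phi_j-\psi)+(\Iopu-P_{n_j})\psi,
\]
the first term is bounded by $\|\Iopu-P_{n_j}\|\,\|K\phi_j-\psi\|\le\|K\phi_j-\psi\|\to 0$, while the second tends to $0$ because $P_n\psi\to\psi$ and $n_j\to\infty$. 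This forces $(\Iopu-P_{n_j})K\phi_j\to 0$, contradicting the lower bound $\eps$. The main obstacle, and the only place where the quaternionic structure must be handled with care, is precisely the construction of the projections $P_n$: one must check that with the coefficients placed on the right, as dictated by Proposition \ref{P2}, the $P_n$ are genuine right-linear idempotents of norm one converging strongly to $\Iopu$, so that the Hilbert-space approximation property survives the non-commutativity and the absence of a linear structure on one-sided operator sets.
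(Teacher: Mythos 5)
Your proof is correct: both directions are the standard Hilbert-space argument (norm-closedness of the compacts via a diagonal subsequence, and approximation of a compact $K$ by $P_nK$ with $P_n$ the right-linear orthogonal projections onto $\text{span}\{\varphi_1,\dots,\varphi_n\}$ built from a Hilbert basis of $\ur$), and you correctly verify the only points where the quaternionic structure could interfere, namely that $P_n$ is right linear with $\|\Iopu-P_n\|\le 1$ and $P_n\to\Iopu$ strongly via Proposition \ref{P1}(c). The paper itself gives no proof of this proposition -- it is quoted from \cite{MT} -- so there is nothing to compare against here, but your argument is the expected one and is complete.
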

\begin{corollary}\label{CD1}\cite{MT}
A bounded operator $K:\vr\longrightarrow\ur$ is compact then so is $K^\dagger$.
\end{corollary}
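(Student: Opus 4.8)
The plan is to realize $K^\dagger$ as a norm-limit of finite rank operators and then invoke the finite-rank characterization of compactness. By Proposition \ref{PC1}, the compactness of $K$ furnishes a sequence of finite rank operators $K_n\colon\vr\longrightarrow\ur$ with $\|K-K_n\|\longrightarrow 0$ as $n\longrightarrow\infty$. The natural candidate approximants for $K^\dagger$ are the adjoints $K_n^\dagger\colon\ur\longrightarrow\vr$, and the whole proof reduces to showing these are finite rank and converge to $K^\dagger$ in operator norm.

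First I would record two elementary properties of the adjoint that survive in the right quaternionic setting. Additivity, $(A-B)^\dagger=A^\dagger-B^\dagger$, is immediate from the defining relation \eqref{Ad1}, since $\langle\psi\mid(A-B)\phi\rangle=\langle A^\dagger\psi\mid\phi\rangle-\langle B^\dagger\psi\mid\phi\rangle=\langle(A^\dagger-B^\dagger)\psi\mid\phi\rangle$ for all $\phi,\psi$. Norm preservation, $\|A^\dagger\|=\|A\|$, follows from the characterization $\|A\|=\sup\{\,\abs{\langle\psi\mid A\phi\rangle}: \|\phi\|=\|\psi\|=1\,\}$ (the inner supremum over $\psi$ being attained at $\psi=A\phi/\|A\phi\|$) together with \eqref{Ad1} and property (i) of the inner product, which give $\abs{\langle\psi\mid A\phi\rangle}=\abs{\langle A^\dagger\psi\mid\phi\rangle}=\abs{\langle\phi\mid A^\dagger\psi\rangle}$. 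I would stress that this uses only additivity of $\dagger$ and the modulus identity $\abs{\oqu}=\abs{\qu}$, not any scalar-multiplication behaviour, so the warning from the introduction that $(\qu A)^\dagger\neq\oqu A^\dagger$ does not interfere.

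With these in hand the argument is short. By Proposition \ref{PP02}, each $K_n$, being of finite rank, has finite rank adjoint $K_n^\dagger$. Applying the two properties above gives
\[
\|K^\dagger-K_n^\dagger\|=\|(K-K_n)^\dagger\|=\|K-K_n\|\longrightarrow 0.
\]
Thus $K^\dagger$ is a norm-limit of finite rank operators, and the converse direction of Proposition \ref{PC1} yields that $K^\dagger$ is compact.

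I expect the only point requiring genuine care to be the verification of the two adjoint identities in the noncommutative setting; once additivity and norm preservation of $\dagger$ are secured, the finite-rank approximation machinery transports verbatim from the complex case. Since the delicate phenomenon flagged in the introduction concerns left scalar multiplication of operators, which plays no role in either identity, no real obstacle arises and the corollary follows directly.
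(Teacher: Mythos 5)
Your proof is correct and follows exactly the route the paper intends: the corollary is stated as a consequence of Proposition \ref{PC1} (compact $=$ norm-limit of finite rank) together with Proposition \ref{PP02} (adjoints of finite rank operators are finite rank), with the only quaternionic subtlety being that $\|(K-K_n)^\dagger\|=\|K-K_n\|$ uses only additivity of $\dagger$ and $\abs{\oqu}=\abs{\qu}$, which you verify. Nothing is missing.
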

\begin{proposition}\label{PP2}\cite{MT}
Let $A\in\B(\vr)$ and $K$ be a compact operator on $\vr$, then $AK$ and $KA$ are compact operators.
\end{proposition}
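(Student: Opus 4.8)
The plan is to use the sequential characterization of compactness furnished by Definition \ref{PD1}: a bounded operator is compact precisely when it sends every bounded sequence to a sequence admitting a convergent subsequence. The two products $KA$ and $AK$ are handled in parallel, and in each case the argument rests only on the boundedness (equivalently, the continuity) of $A$ together with the compactness of $K$; the quaternionic scalar structure plays no role beyond the right-linearity already built into the operators.

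For $KA$, let $\{\phi_n\}_{n=1}^\infty$ be any bounded sequence in $\vr$, say $\|\phi_n\|\le M$. Since $A\in\B(\vr)$, the estimate $\|A\phi_n\|\le\|A\|\,\|\phi_n\|\le\|A\|M$ shows that $\{A\phi_n\}$ is again bounded, so compactness of $K$ gives a convergent subsequence of $\{K(A\phi_n)\}=\{(KA)\phi_n\}$; hence $KA$ is compact. For $AK$, again take $\{\phi_n\}$ bounded. Compactness of $K$ yields a subsequence with $K\phi_{n_j}\longrightarrow\psi$ for some $\psi\in\vr$, and since $A$ is bounded and therefore continuous, $(AK)\phi_{n_j}=A(K\phi_{n_j})\longrightarrow A\psi$; thus $\{(AK)\phi_n\}$ has a convergent subsequence and $AK$ is compact.

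An equivalent route uses Proposition \ref{PC1}: choosing finite rank operators $K_n$ with $\|K-K_n\|\to 0$, the operators $AK_n$ and $K_nA$ are again of finite rank, since their ranges lie in the finite dimensional spaces $A(\ra(K_n))$ and $\ra(K_n)$ respectively, hence are compact by Proposition \ref{PD22}; the bounds $\|AK-AK_n\|\le\|A\|\,\|K-K_n\|$ and $\|KA-K_nA\|\le\|A\|\,\|K-K_n\|$ then exhibit $AK$ and $KA$ as uniform limits of finite rank operators, so Proposition \ref{PC1} applies once more. There is no substantive obstacle in either approach; the only points needing care are that boundedness of $A$ delivers the continuity used to pass the limit through $A$ in the second case, and that the sequential notion of compactness from Definition \ref{PD1} is the one invoked throughout. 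I would present the sequential argument as the main proof, since it is the shortest and relies on Definition \ref{PD1} alone.
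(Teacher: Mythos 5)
Your sequential argument is correct and is the standard proof of this fact; the paper itself gives no proof here, deferring to \cite{MT}, so there is nothing to compare against, but both of your routes (the sequential one via Definition \ref{PD1} and the finite-rank approximation via Propositions \ref{PC1} and \ref{PD22}) are sound and complete.
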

\begin{definition}\label{ID1}
Let $A\in\B(\vr)$. A closed subspace $M\subseteq\vr$ is said to be invariant under $A$ if $A(M)\subseteq M$, where $A(M)=\{A\phi~|~\phi\in M\}$.
\end{definition}
\subsection{Left Scalar Multiplications on $V_{\quat}^{R}$.}
We shall extract the definition and some properties of left scalar multiples of vectors on $V_{\quat}^R$ from \cite{ghimorper} as needed for the development of the manuscript. The left scalar multiple of vectors on a right quaternionic Hilbert space is an extremely non-canonical operation associated with a choice of preferred Hilbert basis. From the Proposition \ref{P2}, $V_{\quat}^{R}$ has a Hilbert basis
\begin{equation}\label{b1}
\mathcal{O}=\{\varphi_{k}\,\mid\,k\in N\},
\end{equation}
where $N$ is a countable index set.
The left scalar multiplication on $V_{\quat}^{R}$ induced by $\mathcal{O}$ is defined as the map $\quat\times V_{\quat}^{R}\ni(\bfrakq,\phi)\longmapsto \bfrakq\phi\in V_{\quat}^{R}$ given by
\begin{equation}\label{LPro}
\bfrakq\phi:=\sum_{k\in N}\varphi_{k}\bfrakq\langle \varphi_{k}\mid \phi\rangle,
\end{equation}
for all $(\bfrakq,\phi)\in\quat\times V_{\quat}^{R}$.
\begin{proposition}\cite{ghimorper}\label{lft_mul}
The left product defined in equation \ref{LPro} satisfies the following properties. For every $\phi,\psi\in V_{\quat}^{R}$ and $\bfrakp,\bfrakq\in\quat$,
\begin{itemize}
\item[(a)] $\bfrakq(\phi+\psi)=\bfrakq\phi+\bfrakq\psi$ and $\bfrakq(\phi\bfrakp)=(\bfrakq\phi)\bfrakp$.
\item[(b)] $\|\bfrakq\phi\|=|\bfrakq|\|\phi\|$.
\item[(c)] $\bfrakq(\bfrakp\phi)=(\bfrakq\bfrakp)\phi$.
\item[(d)] $\langle\overline{\bfrakq}\phi\mid\psi\rangle
=\langle\phi\mid\bfrakq\psi\rangle$.
\item[(e)] $r\phi=\phi r$, for all $r\in \mathbb{R}$.
\item[(f)] $\bfrakq\varphi_{k}=\varphi_{k}\bfrakq$, for all $k\in N$.
\end{itemize}
\end{proposition}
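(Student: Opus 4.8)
The plan is to reduce every item to two ingredients: the defining series \eqref{LPro} together with the Fourier expansion of Proposition \ref{P2}, and one auxiliary identity for the Fourier coefficients of $\bfrakq\phi$. First I would settle well-definedness. Since $\mathcal{O}=\{\varphi_k\}$ is orthonormal, the family $\{\varphi_k\bfrakq\langle\varphi_k\mid\phi\rangle\}_{k\in N}$ is orthogonal, and by Proposition \ref{P1}(c) the sum of squared norms is $|\bfrakq|^2\sum_{k\in N}|\langle\varphi_k\mid\phi\rangle|^2=|\bfrakq|^2\|\phi\|^2<\infty$; hence the series \eqref{LPro} converges in $\vr$ and $\bfrakq\phi$ is well defined. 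The key lemma I would then prove is the coefficient identity
\[
\langle\varphi_j\mid\bfrakq\phi\rangle=\bfrakq\langle\varphi_j\mid\phi\rangle,\qquad j\in N,
\]
obtained by pairing \eqref{LPro} against $\varphi_j$ term by term, using additivity and continuity of the inner product in its second slot, property (iv), and orthonormality $\langle\varphi_j\mid\varphi_k\rangle=\delta_{jk}$. The only care needed is that the quaternionic scalar $\bfrakq\langle\varphi_k\mid\phi\rangle$ sits on the right of $\varphi_k$, so that (iv) applies cleanly.

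With this in hand, items (a), (c), (e), (f) are short. Additivity of \eqref{LPro} in $\phi$ gives the first half of (a) from property (iii), while $\bfrakq(\phi\bfrakp)=(\bfrakq\phi)\bfrakp$ follows by pulling $\bfrakp$ out of $\langle\varphi_k\mid\phi\bfrakp\rangle$ via (iv) and then out of the convergent sum by continuity of right multiplication. Item (c) combines the coefficient identity with associativity of quaternion multiplication, $\bfrakq\bigl(\bfrakp\langle\varphi_k\mid\phi\rangle\bigr)=(\bfrakq\bfrakp)\langle\varphi_k\mid\phi\rangle$, after which \eqref{LPro} reassembles $(\bfrakq\bfrakp)\phi$. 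For (e) I would use centrality of reals in $\quat$ to move $r$ to the right of each coefficient, so that the expansion of Proposition \ref{P2} reassembles $\phi r$. Item (f) is immediate: for $\phi=\varphi_k$, orthonormality collapses \eqref{LPro} to the single surviving term $\varphi_k\bfrakq$.

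For (b) I would simply square and invoke Proposition \ref{P1}(c) together with the coefficient identity and multiplicativity of the quaternion modulus:
\[
\|\bfrakq\phi\|^2=\sum_{k\in N}|\langle\varphi_k\mid\bfrakq\phi\rangle|^2=\sum_{k\in N}|\bfrakq\langle\varphi_k\mid\phi\rangle|^2=|\bfrakq|^2\sum_{k\in N}|\langle\varphi_k\mid\phi\rangle|^2=|\bfrakq|^2\|\phi\|^2 .
\]
Finally (d) is the one place where non-commutativity genuinely intervenes. Expanding $\langle\overline{\bfrakq}\phi\mid\psi\rangle$ through \eqref{LPro} and property (v) produces the conjugate $\overline{\overline{\bfrakq}\langle\varphi_k\mid\phi\rangle}=\overline{\langle\varphi_k\mid\phi\rangle}\,\bfrakq=\langle\phi\mid\varphi_k\rangle\bfrakq$, where I use property (i) and the order-reversal of quaternionic conjugation. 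Independently, expanding $\langle\phi\mid\bfrakq\psi\rangle$ by Proposition \ref{P1}(b) and the coefficient identity yields the identical sum $\sum_{k\in N}\langle\phi\mid\varphi_k\rangle\bfrakq\langle\varphi_k\mid\psi\rangle$, so matching the two gives (d).

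The main obstacle is not any single step but the uniform bookkeeping: keeping each quaternionic scalar on its correct side throughout (so that the order-reversal in (d) and the placement forced by (iv), (v) are respected) and justifying the term-by-term interchange of inner product, norm, and multiplication with the infinite sums. All of these interchanges rest on the convergence estimate established in the first step, so that is the load-bearing part of the argument; everything else is algebra driven by the coefficient identity.
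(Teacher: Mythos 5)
Your proof is correct: the convergence estimate for the defining series, the coefficient identity $\langle\varphi_j\mid\bfrakq\phi\rangle=\bfrakq\langle\varphi_j\mid\phi\rangle$, and the careful side-of-the-scalar bookkeeping in (d) are exactly the load-bearing steps, and each item follows as you describe. The paper itself states this proposition without proof, citing \cite{ghimorper}; your argument is the standard one given there, so there is no divergence of approach to report.
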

Furthermore, the quaternionic left scalar multiplication of linear operators is also defined in \cite{Fab1}, \cite{ghimorper}. For any fixed $\bfrakq\in\quat$ and a given right linear operator $A:\D(A)\longrightarrow V_{\quat}^R$, the left scalar multiplication of $A$ is defined as a map $\bfrakq A:\D(A)\longrightarrow V_{\quat}^R$ by the setting
\begin{equation}\label{lft_mul-op}
(\bfrakq A)\phi:=\bfrakq (A\phi)=\sum_{k\in N}\varphi_{k}\bfrakq\langle \varphi_{k}\mid A\phi\rangle,
\end{equation}
for all $\phi\in D(A)$. It is straightforward that $\bfrakq A$ is a right linear operator. If $\bfrakq\phi\in \D(A)$, for all $\phi\in \D(A)$, one can define right scalar multiplication of the right linear operator $A:\D(A)\longrightarrow V_{\quat}^R$ as a map $ A\bfrakq:\D(A)\longrightarrow V_{\quat}^R$ by the setting
\begin{equation}\label{rgt_mul-op}
(A\bfrakq )\phi:=A(\bfrakq \phi),
\end{equation}
for all $\phi\in D(A)$. It is also a right linear operator. One can easily obtain that, if $\bfrakq\phi\in \D(A)$, for all $\phi\in \D(A)$ and $\D(A)$ is dense in $V_{\quat}^R$, then
\begin{equation}\label{sc_mul_aj-op}
(\bfrakq A)^{\dagger}=A^{\dagger}\overline{\bfrakq}~\mbox{~and~}~
(A\bfrakq)^{\dagger}=\overline{\bfrakq}A^{\dagger}.
\end{equation}

\subsection{S-Spectrum}
For a given right linear operator $A:\D(A)\subseteq V_{\quat}^R\longrightarrow V_{\quat}^R$ and $\bfrakq\in\quat$, we define the operator $R_{\bfrakq}(A):\D(A^{2})\longrightarrow\quat$ by  $$R_{\bfrakq}(A)=A^{2}-2\text{Re}(\bfrakq)A+|\bfrakq|^{2}\Iop,$$
where $\bfrakq=q_{0}+\bi q_1 + \bj q_2 + \bk q_3$ is a quaternion, $\text{Re}(\bfrakq)=q_{0}$  and $|\bfrakq|^{2}=q_{0}^{2}+q_{1}^{2}+q_{2}^{2}+q_{3}^{2}.$\\
In the literature, the operator is called pseudo-resolvent since it is not the resolvent operator of $A$ but it is the one related to the notion of spectrum as we shall see in the next definition. For more information, on the notion od $S$-spectrum the reader may consult e.g. \cite{Fab, Fab1, NFC}, and  \cite{ghimorper}.
\begin{definition}
Let $A:\D(A)\subseteq V_{\quat}^R\longrightarrow V_{\quat}^R$ be a right linear operator. The {\em $S$-resolvent set} (also called \textit{spherical resolvent} set) of $A$ is the set $\rho_{S}(A)\,(\subset\quat)$ such that the three following conditions hold true:
\begin{itemize}
\item[(a)] $\ker(R_{\bfrakq}(A))=\{0\}$.
\item[(b)] $\text{ran}(R_{\bfrakq}(A))$ is dense in $V_{\quat}^{R}$.
\item[(c)] $R_{\bfrakq}(A)^{-1}:\text{ran}(R_{\bfrakq}(A))\longrightarrow\D(A^{2})$ is bounded.
\end{itemize}
The \textit{$S$-spectrum} (also called \textit{spherical spectrum}) $\sigma_{S}(A)$ of $A$ is defined by setting $\sigma_{S}(A):=\quat\smallsetminus\rho_{S}(A)$. For a bounded linear operator $A$ we can write the resolvent set as
\begin{eqnarray*}
\rho_S(A)&=& \{\qu\in\quat~|~R_\qu(A)\in\mathcal{G}(V_{\quat}^R)\}\\
&=&\{\qu\in\quat~|~R_\qu(A)~\text{has an inverse in}~\B(V_{\quat}^R)\}\\
&=&\{\qu\in\quat~|~\text{ker}(R_\qu(A))=\{0\}\quad\text{and}\quad \text{ran}(R_\qu(A))=V_\quat^R\}
\end{eqnarray*}
and the spectrum can be written as
\begin{eqnarray*}
\sigma_S(A)&=&\quat\setminus\rho_S(A)\\
&=&\{\qu\in\quat~|~R_\qu(A)~\text{has no inverse in}~\B(V_{\quat}^R)\}\\
&=&\{\qu\in\quat~|~\text{ker}(R_\qu(A))\not=\{0\}\quad\text{or}\quad \text{ran}(R_\qu(A))\not=V_\quat^R\}
\end{eqnarray*}
The right $S$-spectrum $\sigma_r^S(A)$ and the left $S$-spectrum $\sigma_l^S(A)$ are defined respectively as
\begin{eqnarray*}\
\sigma_r^S(A)&=&\{\qu\in\quat~|~R_\qu(A)~~\text{in not right invertible in}~~\B(V_\quat^R)~\}\\
\sigma_l^S(A)&=&\{\qu\in\quat~|~R_\qu(A)~~\text{in not left invertible in}~~\B(V_\quat^R)~\}.
\end{eqnarray*}
The spectrum $\sigma_S(A)$ decomposes into three disjoint subsets as follows:
\begin{itemize}
\item[(i)] the \textit{spherical point spectrum} of $A$: $$\sigma_{pS}(A):=\{\bfrakq\in\quat~\mid~\ker(R_{\bfrakq}(A))\ne\{0\}\}.$$
\item[(ii)] the \textit{spherical residual spectrum} of $A$: $$\sigma_{rS}(A):=\{\bfrakq\in\quat~\mid~\ker(R_{\bfrakq}(A))=\{0\},\overline{\text{ran}(R_{\bfrakq}(A))}\ne V_{\quat}^{R}~\}.$$
\item[(iii)] the \textit{spherical continuous spectrum} of $A$: $$\sigma_{cS}(A):=\{\bfrakq\in\quat~\mid~\ker(R_{\bfrakq}(A))=\{0\},\overline{\text{ran}(R_{\bfrakq}(A))}= V_{\quat}^{R}, R_{\bfrakq}(A)^{-1}\notin\B(V_{\quat}^{R}) ~\}.$$
\end{itemize}
If $A\phi=\phi\bfrakq$ for some $\bfrakq\in\quat$ and $\phi\in V_{\quat}^{R}\smallsetminus\{0\}$, then $\phi$ is called an \textit{eigenvector of $A$ with right eigenvalue} $\bfrakq$. The set of right eigenvalues coincides with the point $S$-spectrum, see \cite{ghimorper}, Proposition 4.5.
\end{definition}
\begin{proposition}\cite{Fab2, ghimorper}\label{PP1}
For $A\in\B(\vr)$, the resolvent set $\rho_S(A)$ is a non-empty open set and the spectrum $\sigma_S(A)$ is a non-empty compact set.
\end{proposition}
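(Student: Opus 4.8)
The plan is to treat the four assertions separately, since each rests on a different mechanism. First, for openness of $\rho_S(A)$, I would observe that $R_\qu(A)=A^2-2\text{Re}(\qu)A+|\qu|^2\Iop$ depends on $\qu$ only through the two \emph{real} quantities $\text{Re}(\qu)$ and $|\qu|^2$, both continuous in $\qu$; hence $\qu\mapsto R_\qu(A)$ is norm-continuous from $\quat$ into $\B(\vr)$. Since $\B(\vr)$ is a Banach algebra with unit $\Iop$, the standard Neumann-series argument shows that the invertible set $\mathcal{G}(\vr)$ is open in $\B(\vr)$ and that inversion is continuous there. For a bounded operator the characterization $\rho_S(A)=\{\qu\in\quat~|~R_\qu(A)\in\mathcal{G}(\vr)\}$ (stated in the excerpt) then exhibits $\rho_S(A)$ as the preimage of the open set $\mathcal{G}(\vr)$ under a continuous map, so it is open, and $\sigma_S(A)=\quat\setminus\rho_S(A)$ is closed.

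Next, for boundedness (hence compactness), I would treat large $|\qu|$ by writing $R_\qu(A)=|\qu|^2\big(\Iop-|\qu|^{-2}(2\text{Re}(\qu)A-A^2)\big)$ and estimating $\|2\text{Re}(\qu)A-A^2\|\le 2|\qu|\,\|A\|+\|A\|^2$ via $|\text{Re}(\qu)|\le|\qu|$. Whenever $|\qu|^2>2|\qu|\,\|A\|+\|A\|^2$, for instance $|\qu|>(1+\sqrt2)\|A\|$, the perturbation $|\qu|^{-2}(2\text{Re}(\qu)A-A^2)$ has norm strictly less than $1$, so the Neumann series converges and $R_\qu(A)\in\mathcal{G}(\vr)$; thus $\qu\in\rho_S(A)$. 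This simultaneously gives $\rho_S(A)\ne\emptyset$ and places $\sigma_S(A)$ inside a bounded ball. Being closed and bounded in $\quat\cong\mathbb{R}^4$, $\sigma_S(A)$ is compact.

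The non-emptiness of $\sigma_S(A)$ is the main obstacle, because the complex proof via analyticity of the resolvent and Liouville's theorem does not transfer verbatim. I would argue by contradiction: assuming $\sigma_S(A)=\emptyset$, the pseudo-resolvent $R_\qu(A)^{-1}$ exists for every $\qu\in\quat$, and from it one forms the left $S$-resolvent operator $S^{-1}(\qu,A)=-R_\qu(A)^{-1}(A-\oqu\Iop)$. The two facts to establish are that $\qu\mapsto S^{-1}(\qu,A)$ is slice regular on all of $\quat$ and that, by the growth estimate above (since $\|R_\qu(A)^{-1}\|$ is of order $|\qu|^{-2}$ while $\|A-\oqu\Iop\|$ grows only linearly), it tends to the zero operator as $|\qu|\to\infty$. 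Invoking the quaternionic (slice-regular) Liouville theorem, such a function must be constant, hence identically $0$; but $S^{-1}(\qu,A)\equiv 0$ would force $A-\oqu\Iop=0$ for every $\qu$ (apply $R_\qu(A)$ on the left), which is absurd. This yields the contradiction.

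A second route I would keep in reserve is to fix a slice $\mathbb{C}_I=\mathbb{R}+\mathbb{R}I$ with $I\in\mathbb{S}$ and, using the basis-dependent left multiplication that turns $\vr$ into a complex Hilbert space over $\mathbb{C}_I$, relate the intersection of $\sigma_S(A)$ with that slice to the ordinary (non-empty) spectrum of an associated complex-linear operator. The delicate point in either approach, and the step I expect to be hardest, is the slice-regularity and complex-linearity bookkeeping for an operator-valued function in this non-commutative, basis-dependent setting; by contrast the Banach-algebra arguments underlying openness, boundedness, and closedness are routine.
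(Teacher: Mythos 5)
The paper offers no proof of this proposition: it is imported verbatim from \cite{Fab2} and \cite{ghimorper}, so there is no internal argument to compare yours against; what follows is an assessment of your outline against the standard proofs in those sources. Your treatment of openness and compactness is complete and correct: continuity of $\qu\mapsto R_\qu(A)$ (it depends on $\qu$ only through $\mathrm{Re}(\qu)$ and $|\qu|^2$), openness of $\mathcal{G}(\vr)$ in the Banach algebra $\B(\vr)$, and the Neumann-series estimate for large $|\qu|$ are exactly the right mechanisms; your radius $(1+\sqrt2)\|A\|$ is cruder than the sharp bound $\sigma_S(A)\subseteq\{\qu:|\qu|\le\|A\|\}$ but entirely sufficient. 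For non-emptiness, your primary route (assume $\rho_S(A)=\quat$, form the left $S$-resolvent $-R_\qu(A)^{-1}(A-\oqu\Iop)$, show it is slice regular and decays like $|\qu|^{-1}$, invoke a Liouville theorem, and derive $A=\oqu\Iop$ for all $\qu$) is precisely the argument of \cite{Fab2}, and your reserve route via a slice $\mathbb{C}_I$ and the associated complex-linear operator is essentially the one used in \cite{ghimorper}. The contradiction step is sound: left-composing with $R_\qu(A)$ does recover $A-\oqu\Iop=0$. The only substantive caveat is the one you flag yourself: the slice regularity of the operator-valued $S$-resolvent and the corresponding vector-valued Liouville theorem are the real content of the non-emptiness claim, and your proposal states rather than proves them; as a self-contained proof it is therefore incomplete at exactly that point, but as a reconstruction of the cited proof it identifies all the right ingredients in the right order.
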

\begin{proposition}\cite{CG}\label{EP}
Let $A\in\B(\vr)$ and let $\pu=p_0+p_1I\in p_0+p_1\mathbb{S}\subseteq\quat\setminus\R$ be an $S-$eigenvalue of $A$. Then all the elements of the sphere $[\pu]=p_0+p_1\mathbb{S}$ are eigenvalues of $A$.
\end{proposition}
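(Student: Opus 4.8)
The plan is to exploit the fact that the pseudo-resolvent $R_{\bfrakq}(A)=A^{2}-2\text{Re}(\bfrakq)A+|\bfrakq|^{2}\Iop$ depends on $\bfrakq$ only through the two real quantities $\text{Re}(\bfrakq)$ and $|\bfrakq|^{2}$, both of which are constant along the sphere $[\pu]$. First I would unwind the hypothesis: by the identification of the set of right eigenvalues with the point $S$-spectrum (Proposition 4.5 of \cite{ghimorper}), saying that $\pu=p_{0}+p_{1}I$ is an $S$-eigenvalue means precisely that $\ker(R_{\pu}(A))\neq\{0\}$, i.e. there is a nonzero $\phi\in\vr$ with $R_{\pu}(A)\phi=0$.

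Next I would compute the invariants along the sphere. Any $\pu'\in[\pu]=p_{0}+p_{1}\mathbb{S}$ has the form $\pu'=p_{0}+p_{1}J$ with $J\in\mathbb{S}$, so $\text{Re}(\pu')=p_{0}=\text{Re}(\pu)$ and, since $|J|=1$, also $|\pu'|^{2}=p_{0}^{2}+p_{1}^{2}=|\pu|^{2}$. Substituting into the definition gives $R_{\pu'}(A)=R_{\pu}(A)$ as operators, whence $\ker(R_{\pu'}(A))=\ker(R_{\pu}(A))\neq\{0\}$. Thus every $\pu'\in[\pu]$ lies in $\sigma_{pS}(A)$, and invoking once more the correspondence between $\sigma_{pS}(A)$ and the right eigenvalues yields that each $\pu'$ is an eigenvalue of $A$.

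There is essentially no analytic difficulty here; the only point requiring care is the bookkeeping between the three a priori distinct notions involved --- the $S$-eigenvalue condition $\ker(R_{\pu}(A))\neq\{0\}$, membership in the point $S$-spectrum, and the existence of a right eigenvector $A\psi=\psi\pu'$ --- which is handled by the cited Proposition 4.5. If instead one prefers to produce the eigenvectors explicitly rather than argue through the pseudo-resolvent, one can proceed by conjugation: starting from $A\phi=\phi\pu$ with $\phi\neq0$ and writing any $\pu'\in[\pu]$ as $\pu'=s^{-1}\pu s$ for a suitable $s\in\quat^{*}$ (such an $s$ exists because two quaternions with equal real part and equal modulus are conjugate, and the conjugacy class of $\pu$ is exactly $p_{0}+p_{1}\mathbb{S}$), one checks $A(\phi s)=(A\phi)s=\phi\pu s=(\phi s)(s^{-1}\pu s)=(\phi s)\pu'$, so $\phi s\neq0$ is an eigenvector of $A$ with right eigenvalue $\pu'$. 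I would expect the main (modest) obstacle in this second route to be verifying that the conjugacy class of $\pu$ fills out the whole sphere $[\pu]$, but in the $S$-spectrum formulation even this is bypassed, so I would present the pseudo-resolvent argument as the primary proof.
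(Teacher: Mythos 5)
Your proposal is correct. The paper itself states this proposition as a result imported from \cite{CG} and supplies no proof, so there is no in-text argument to compare against; both of the routes you sketch are sound. Your primary route --- noting that $R_{\qu}(A)=A^{2}-2\mathrm{Re}(\qu)A+|\qu|^{2}\Iop$ depends on $\qu$ only through $\mathrm{Re}(\qu)$ and $|\qu|^{2}$, which are constant on $[\pu]=p_{0}+p_{1}\mathbb{S}$, so that $R_{\pu'}(A)=R_{\pu}(A)$ as operators and hence $\ker(R_{\pu'}(A))=\ker(R_{\pu}(A))\neq\{0\}$ for every $\pu'\in[\pu]$, followed by the identification of $\sigma_{pS}(A)$ with the set of right eigenvalues (Proposition 4.5 of \cite{ghimorper}, which the paper itself invokes) --- is the cleanest and is exactly how this sphericity of the point $S$-spectrum is usually justified. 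Your alternative conjugation argument $A(\phi s)=(A\phi)s=\phi\pu s=(\phi s)(s^{-1}\pu s)$ is the classical eigenvector-level proof and is essentially the one in the cited reference; the one point you flag, that the conjugacy class of $\pu=p_{0}+p_{1}I$ fills out $p_{0}+p_{1}\mathbb{S}$, is standard, since two quaternions are conjugate if and only if they have the same real part and the same modulus, and $p_{1}\neq 0$ because $\pu\notin\R$. No gaps.
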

\begin{proposition}\cite{MT}\label{SP1}
Let $A\in\B(\vr)$.
\begin{eqnarray}
\sigma_l^S(A)&=&\{\qu\in\quat~~|~~\ra(R_\qu(A))~~\text{is closed or}~~\kr(R_\qu(A))\not=\{0\}\}\label{SE1}.\\
\sigma_r^S(A)&=&\{\qu\in\quat~~|~~\ra(R_\qu(A))~~\text{is closed or}~~\kr(R_{\oqu}(A^\dagger))\not=\{0\}\}\label{SE2}.
\end{eqnarray}
\end{proposition}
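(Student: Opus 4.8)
The plan is to isolate a single characterization of one--sided invertibility in $\B(\vr)$ and then transport it to the pseudo--resolvent via the adjoint. The backbone is the equivalence, for $T\in\B(\vr)$, of the three conditions: (i) $T$ is left invertible in $\B(\vr)$; (ii) $T$ is bounded below, meaning $\|T\phi\|\ge c\|\phi\|$ for some $c>0$ and all $\phi\in\vr$; (iii) $\kr(T)=\{0\}$ and $\ra(T)$ is closed. I would establish (i)$\Rightarrow$(ii) directly from a left inverse $ST=\Iop$ via $\|\phi\|=\|ST\phi\|\le\|S\|\,\|T\phi\|$; (ii)$\Rightarrow$(iii) by observing that injectivity is immediate while closedness of the range follows since a bounded--below $T$ turns a convergent image sequence into a Cauchy preimage sequence; and (iii)$\Rightarrow$(i) by regarding $T$ as a bounded bijection onto the closed (hence complete) subspace $\ra(T)$, invoking the bounded inverse theorem to obtain a bounded $T^{-1}\colon\ra(T)\to\vr$, and composing with the orthogonal projection $P$ of $\vr$ onto $\ra(T)$ to produce the bounded right $\quat$--linear left inverse $S=T^{-1}P$. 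Every step is the verbatim quaternionic analogue of the complex one, valid because $\vr$ admits orthogonal projections onto closed subspaces and the bounded inverse theorem holds for bounded right $\quat$--linear bijections.

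With this lemma, \eqref{SE1} is immediate: $\qu\in\sigma_l^S(A)$ says exactly that $R_\qu(A)$ is not left invertible, and by (i)$\Leftrightarrow$(iii) this happens precisely when $\kr(R_\qu(A))\ne\{0\}$ or $\ra(R_\qu(A))$ is not closed, which is the right--hand side of \eqref{SE1} (read with ``not closed'').

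For \eqref{SE2} I would first record the adjoint identity $R_\qu(A)^\dagger=R_{\oqu}(A^\dagger)$: since $R_\qu(A)=A^2-2\,\text{Re}(\qu)A+|\qu|^2\Iop$ has real coefficients, taking adjoints merely replaces $A$ by $A^\dagger$ and leaves $\text{Re}(\qu)$ and $|\qu|^2$ fixed, using $(rB)^\dagger=rB^\dagger$ for real $r$ from \eqref{sc_mul_aj-op} together with $\text{Re}(\oqu)=\text{Re}(\qu)$ and $|\oqu|=|\qu|$. Next I would characterize right invertibility: $T$ is right invertible iff it is surjective (the nontrivial direction being the bounded inverse theorem applied to the restriction $T|_{(\kr T)^\perp}$, which is a bounded bijection onto $\vr$), and $T$ is surjective iff $\ra(T)$ is closed and $\kr(T^\dagger)=\{0\}$. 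The last equivalence uses Proposition \ref{IP30}(a): density of $\ra(T)$ is the vanishing of $(\ra T)^\perp=\kr(T^\dagger)$, and a closed dense range is all of $\vr$. Applying this to $T=R_\qu(A)$ and substituting $\kr(R_\qu(A)^\dagger)=\kr(R_{\oqu}(A^\dagger))$ gives \eqref{SE2}.

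The main obstacle is the core lemma, specifically the implication (iii)$\Rightarrow$(i): I must confirm that the orthogonal decomposition $\vr=\ra(T)\oplus\ra(T)^\perp$ and the bounded inverse theorem are genuinely available in a right quaternionic Hilbert space, so that the left inverse can be built and seen to be bounded and right $\quat$--linear. Once these standard functional--analytic facts are in hand, the remainder is bookkeeping with adjoints and orthogonal complements; I would also note that both right--hand sides must be read with ``$\ra(R_\qu(A))$ is not closed,'' since a genuinely invertible $R_\qu(A)$ has closed range yet lies in $\rho_S(A)$, outside $\sigma_l^S(A)$ and $\sigma_r^S(A)$.
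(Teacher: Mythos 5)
Your argument is correct, but note that the paper itself offers no proof of Proposition \ref{SP1}: it is imported verbatim from \cite{MT}, so there is no in-paper argument to compare yours against. What you have written is the standard Hilbert-space proof, and every ingredient you invoke does survive the passage to $\vr$: orthogonal projections onto closed subspaces and the bounded inverse theorem are available (a right quaternionic Hilbert space is in particular a real Banach space, and right linearity of the inverse of a right linear bijection is automatic), the adjoint identity $R_\qu(A)^\dagger=R_{\oqu}(A^\dagger)$ follows exactly as you say from the realness of the coefficients, and the surjectivity criterion rests on Proposition \ref{IP30}(a). Your closing observation is the most valuable part and deserves emphasis: as printed, both \eqref{SE1} and \eqref{SE2} must read ``$\ra(R_\qu(A))$ is \emph{not} closed,'' since otherwise any $\qu\in\rho_S(A)$ would have $\ra(R_\qu(A))=\vr$ closed and hence lie in $\sigma_l^S(A)\cap\sigma_r^S(A)$, contradicting the non-emptiness of $\rho_S(A)$ guaranteed by Proposition \ref{PP1}; the same typo propagates into the displayed formula for $\FF(\vr)$ in Remark \ref{RK}. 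With that correction your chain of equivalences --- left invertible $\Leftrightarrow$ bounded below $\Leftrightarrow$ injective with closed range, and right invertible $\Leftrightarrow$ surjective $\Leftrightarrow$ closed range with $\kr(T^\dagger)=\{0\}$ --- delivers both identities with no gaps.
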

\section{Fredholm operators in the quaternionic setting}
In order to study the Weyl and Browder operators and Weyl and Browder S-spectra we need some results regarding the Fredholm operators. We borrow the materials of this section from \cite{MT} as needed for the development of the manuscript. For an enhanced explanation we refer the reader to \cite{MT}. In this regard let $\vr$ and $\ur$ be two separable right quaternionic Hilbert spaces.
\begin{definition}\label{FD1}
A Fredholm operator is an operator $A\in\B(\vr,\ur)$ such that $\text{ker}(A)$ and $\text{coker}(A)=\ur/\text{ran}(A)$ are finite dimensional. The dimension of the cokernel is called the codimension, and it is denoted by $\text{codim}(A).$
\end{definition}
\begin{proposition} \cite{MT}\label{FP1}
If $A\in\B(\vr,\ur)$ is a Fredholm operator, then $\text{ran}(A)$ is closed.
\end{proposition}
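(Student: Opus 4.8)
The plan is to exhibit $\ra(A)$ as the image of a closed subspace under a bounded right linear bijection, and then to deduce closedness from the open mapping theorem.

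First I would reduce to the injective case. Since $\kr(A)$ is finite dimensional it is closed, so the quaternionic projection theorem gives the orthogonal decomposition $\vr=\kr(A)\oplus\kr(A)^\perp$. Writing $A_0:=A|_{\kr(A)^\perp}$, one checks immediately that $A_0$ is bounded, right linear, injective, and that $\ra(A_0)=\ra(A)$. Thus it is enough to prove that $\ra(A_0)$ is closed, and I may assume from the start that $A$ itself is injective (replacing $\vr$ by $\kr(A)^\perp$).

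Next, using that $\cokr(A)=\ur/\ra(A)$ is finite dimensional, say of dimension $n$, I would pick $y_1,\dots,y_n\in\ur$ whose cosets form a basis of the right quaternionic vector space $\ur/\ra(A)$ and set $F=\mathrm{span}\{y_1,\dots,y_n\}$. Being finite dimensional, $F$ is closed; moreover the spanning property gives $\ra(A)+F=\ur$ and the independence of the cosets gives $\ra(A)\cap F=\{0\}$, so that $\ur=\ra(A)\oplus F$ algebraically. I then form the (external, orthogonal) direct sum $\kr(A)^\perp\oplus F$, which is again a right quaternionic Hilbert space, and define $B:\kr(A)^\perp\oplus F\longrightarrow\ur$ by $B(v,f)=A_0 v+f$. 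A routine verification shows that $B$ is bounded and right linear, that it is surjective (because $\ur=\ra(A_0)+F$), and that it is injective (if $B(v,f)=0$ then $A_0 v=-f\in\ra(A)\cap F=\{0\}$, forcing $f=0$ and, by injectivity of $A_0$, $v=0$).

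Finally, I would invoke the open mapping theorem to conclude that the bounded right linear bijection $B$ is a homeomorphism. Since $\kr(A)^\perp\oplus\{0\}$ is closed in $\kr(A)^\perp\oplus F$ and $\ra(A)=\ra(A_0)=B\big(\kr(A)^\perp\oplus\{0\}\big)$, it follows that $\ra(A)$ is closed in $\ur$. The hard part is the analytic input rather than the algebra: one must justify the open mapping theorem for right linear operators between quaternionic Hilbert spaces. This is legitimate because such a space is in particular a real Banach space and $B$ is a continuous real linear bijection, so the classical open mapping theorem supplies a continuous inverse, which is automatically right linear since $B$ is; the remaining ingredients — closedness of finite dimensional subspaces, the orthogonal decomposition along $\kr(A)$, and the splitting off of the finite dimensional complement $F$ — are routine once the quaternionic projection theorem is available.
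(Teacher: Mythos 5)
Your argument is correct. The paper states Proposition \ref{FP1} without proof, citing \cite{MT}, and the proof there is essentially the one you reconstruct: split off the finite dimensional (hence closed) kernel, adjoin a finite dimensional algebraic complement $F$ of $\ra(A)$ to build a bounded right linear bijection $B(v,f)=A_0v+f$ on $\kr(A)^\perp\oplus F$, and invoke the open mapping theorem --- legitimately transferred to the quaternionic setting by viewing the spaces as real Banach spaces --- to conclude that $\ra(A)=B\bigl(\kr(A)^\perp\oplus\{0\}\bigr)$ is closed.
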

\begin{definition}\label{FD2}
Let $A\in\B(\vr,\ur)$ be a Fredholm operator. Then the index of $A$ is the integer, $\text{ind}(A)=\dim(\text{ker}(A))-\dim(\text{coker}(A)).$
\end{definition}
\begin{remark}\label{FR1}
Since $\ra(A)$ is closed, we have $\ur=\ra(A)\oplus\ra(A)^\perp=\ra(A)\oplus\kr(A^\dagger).$
Therefore, $\text{coker}(A)=\ur/\ra(A)\cong\kr(A^\dagger).$ Thus,
$$\text{ind}(A)=\dim(\kr(A))-\dim(\kr(A^\dagger)).$$
\end{remark}
\begin{theorem} \cite{MT}\label{FT1}
	Let $A\in\B(\vr,\ur)$ be bijective, and let $K\in\B_0(\vr,\ur)$ be compact. Then $A+K$ is a Fredholm operator.
\end{theorem}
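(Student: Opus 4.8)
The plan is to factor $A+K$ through the invertible operator $A$ and reduce the problem to the Fredholmness of a perturbation of the identity by a compact operator, i.e. to the quaternionic Riesz--Schauder situation. First I would record that, since $A\in\B(\vr,\ur)$ is a bounded bijection of (real) Banach spaces, the open mapping theorem gives $A^{-1}\in\B(\ur,\vr)$, and one checks directly that $A^{-1}$ is again right $\quat$-linear. Writing $A+K=A(\Iop+A^{-1}K)$ and setting $T:=A^{-1}K\in\B(\vr)$, the composite of the bounded operator $A^{-1}$ with the compact operator $K$ is compact (the image under $K$ of a bounded set is precompact, and $A^{-1}$ is continuous, so its image remains precompact; this is the across-spaces analogue of Proposition \ref{PP2}), hence $T$ is compact. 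Because $A$ is a bijection with bounded inverse it is a linear homeomorphism $\vr\to\ur$, so $\kr(A+K)=\kr(\Iop+T)$, while $A$ carries $\ra(\Iop+T)$ onto $\ra(A+K)$ and induces an isomorphism $\vr/\ra(\Iop+T)\cong\ur/\ra(A+K)$. Consequently $A+K$ is Fredholm if and only if $\Iop+T$ is, so it suffices to prove $\Iop+T$ Fredholm for compact $T\in\B(\vr)$.

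I would establish the Fredholmness of $\Iop+T$ in three steps. For the kernel: on $\kr(\Iop+T)$ one has $T\phi=-\phi$, so $T$ restricts to minus the identity there; compactness of $T$ then forces the closed unit ball of the closed subspace $\kr(\Iop+T)$ to be precompact, whence $\kr(\Iop+T)$ is finite dimensional by the quaternionic version of the Riesz lemma (a normed space with precompact unit ball is finite dimensional). For the range: decompose $\vr=\kr(\Iop+T)\oplus M$ with $M$ the orthogonal complement, which exists since $\kr(\Iop+T)$ is finite dimensional, so that $(\Iop+T)|_M$ is injective, and I claim it is bounded below. If not, there are $\phi_n\in M$ with $\|\phi_n\|=1$ and $(\Iop+T)\phi_n\to 0$; by compactness a subsequence satisfies $T\phi_n\to\psi$, whence $\phi_n=(\Iop+T)\phi_n-T\phi_n\to-\psi$, producing a unit vector $\phi\in M$ with $(\Iop+T)\phi=0$, i.e. $\phi\in\kr(\Iop+T)\cap M=\{0\}$, a contradiction. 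Thus $(\Iop+T)|_M$ is bounded below and $\ra(\Iop+T)=\ra((\Iop+T)|_M)$ is closed.

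For the cokernel I would use the adjoint: since the adjoint is additive, $(\Iop+T)^\dagger=\Iop+T^\dagger$, and $T^\dagger$ is compact by Corollary \ref{CD1}, so the kernel step applied to $T^\dagger$ shows $\kr(\Iop+T^\dagger)$ is finite dimensional. As $\ra(\Iop+T)$ is closed, Proposition \ref{IP30}(a) (cf. Remark \ref{FR1}) gives $\cokr(\Iop+T)\cong\ra(\Iop+T)^\perp=\kr(\Iop+T^\dagger)$, which is therefore finite dimensional. Combining the three steps yields that $\Iop+T$, and hence $A+K$, is Fredholm.

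The main obstacle is the closedness of the range in the second step: this is the place where compactness enters essentially, and where the quaternionic setting must be handled with care, since the argument relies on splitting off an orthogonal complement to the finite-dimensional kernel, on the additivity of the adjoint ($(\Iop+T)^\dagger=\Iop+T^\dagger$), and on the identity $\ra(B)^\perp=\kr(B^\dagger)$ of Proposition \ref{IP30}. Once these standard Hilbert-space facts are confirmed to transfer to $\vr$ (as they do, using that the norm is real-valued and that finite-dimensional subspaces are orthogonally complemented), the three steps assemble into the claim.
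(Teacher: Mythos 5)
Your proposal is correct; note that the paper itself gives no proof of Theorem \ref{FT1} but simply imports it from \cite{MT}, where the argument is the same standard reduction you use: write $A+K=A(\Iop+A^{-1}K)$ and run the Riesz--Schauder argument for $\Iop+T$ with $T$ compact (finite-dimensional kernel via the Riesz lemma, closed range via the bounded-below argument on the orthogonal complement, finite-dimensional cokernel via $(\Iop+T)^\dagger=\Iop+T^\dagger$ and Proposition \ref{IP30}). The only cosmetic point is that the orthogonal complement $M$ of $\kr(\Iop+T)$ exists because that kernel is closed, not because it is finite dimensional.
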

\begin{proposition} \cite{MT}\label{FP2}
If $A\in\B(\vr,\ur)$ is Fredholm then $A^\dagger\in\B(\ur,\vr)$ is Fredholm.
\end{proposition}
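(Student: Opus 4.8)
The plan is to verify directly the two defining conditions of Definition \ref{FD1} for $A^\dagger$, namely that $\kr(A^\dagger)$ and $\cokr(A^\dagger)=\vr/\ra(A^\dagger)$ are both finite dimensional. The finiteness of $\kr(A^\dagger)$ comes almost for free: since $A$ is Fredholm, Proposition \ref{FP1} guarantees that $\ra(A)$ is closed, so the orthogonal decomposition in Remark \ref{FR1} gives $\ur=\ra(A)\oplus\kr(A^\dagger)$ and hence $\cokr(A)\cong\kr(A^\dagger)$. As $\dim\cokr(A)<\infty$ by hypothesis, $\kr(A^\dagger)$ is finite dimensional.

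The substance of the proof is to show that $\ra(A^\dagger)$ is closed; granting this, the conclusion follows symmetrically, since by the same orthogonal decomposition as in Remark \ref{FR1} applied to $A^\dagger$ (using $(A^\dagger)^\dagger=A$) we obtain $\vr=\ra(A^\dagger)\oplus\kr(A)$, so $\cokr(A^\dagger)\cong\kr(A)$, which is finite dimensional because $A$ is Fredholm. To establish closedness, I would first note from Proposition \ref{IP30}(b) that $\overline{\ra(A^\dagger)}=\kr(A)^\perp$, so it suffices to prove $\ra(A^\dagger)=\kr(A)^\perp$. Consider the restriction $\widetilde{A}:=A|_{\kr(A)^\perp}\colon\kr(A)^\perp\longrightarrow\ra(A)$; this is a bounded bijection between quaternionic Hilbert spaces (injective because we have quotiented out the kernel, surjective onto $\ra(A)$ by definition, and $\ra(A)$ is closed, hence itself a Hilbert space). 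By the open mapping theorem $\widetilde{A}$ is a topological isomorphism, and consequently so is its adjoint $\widetilde{A}^\dagger\colon\ra(A)\longrightarrow\kr(A)^\perp$. A direct inner-product computation, together with the inclusion $\ra(A^\dagger)\subseteq\kr(A)^\perp$, shows that $\widetilde{A}^\dagger=A^\dagger|_{\ra(A)}$. Since $A^\dagger$ annihilates $\kr(A^\dagger)$ and $\ur=\ra(A)\oplus\kr(A^\dagger)$, we get $\ra(A^\dagger)=A^\dagger(\ra(A))=\widetilde{A}^\dagger(\ra(A))=\kr(A)^\perp$, which is closed.

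I expect the closed-range step to be the main obstacle: it rests on the quaternionic open mapping (bounded inverse) theorem and on the fact that the adjoint of a topological isomorphism is again one, the latter following from the identity $(ST)^\dagger=T^\dagger S^\dagger$ applied to $\widetilde{A}\widetilde{A}^{-1}=\id$ and $\widetilde{A}^{-1}\widetilde{A}=\id$. Once closedness is in hand the two finiteness conditions hold and $A^\dagger$ is Fredholm; moreover the argument simultaneously delivers the index relation $\ind(A^\dagger)=\dim\kr(A^\dagger)-\dim\cokr(A^\dagger)=\dim\cokr(A)-\dim\kr(A)=-\ind(A)$.
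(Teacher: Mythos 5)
Your argument is correct: the identification $\cokr(A)\cong\kr(A^\dagger)$ via the closed-range decomposition handles the kernel of $A^\dagger$, and your reduction of the closed-range question to the topological isomorphism $\widetilde{A}=A|_{\kr(A)^\perp}\colon\kr(A)^\perp\to\ra(A)$ (open mapping theorem, then pass to the adjoint) is sound; all the ingredients you invoke --- Proposition \ref{FP1}, Proposition \ref{IP30}, the quaternionic open mapping theorem, and $(ST)^\dagger=T^\dagger S^\dagger$ --- are available in this setting. Note, however, that the present paper states this proposition without proof, importing it from \cite{MT}, so there is no in-text proof to compare against; what the paper's own machinery suggests is a much shorter route, the one implicitly used in Remark \ref{FR4}(e): by Theorem \ref{FT3} there are $S_1,S_2$ and compact $K_1,K_2$ with $S_1A=\Iop+K_1$ and $AS_2=\Iopu+K_2$; taking adjoints gives $A^\dagger S_1^\dagger=\Iop+K_1^\dagger$ and $S_2^\dagger A^\dagger=\Iopu+K_2^\dagger$, and since $K_1^\dagger,K_2^\dagger$ are compact by Corollary \ref{CD1}, Theorem \ref{FT3} applied again shows $A^\dagger$ is Fredholm. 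That Atkinson-style argument is essentially two lines, but your direct approach buys more: it exhibits $\ra(A^\dagger)=\kr(A)^\perp$ explicitly and delivers the index relation $\ind(A^\dagger)=-\ind(A)$ (used without proof in Remark \ref{ER2}(b)) as a free byproduct, which the adjoint-of-the-parametrix argument does not immediately give.
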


\begin{theorem} \cite{MT}\label{FT3}
$A\in\B(\vr,\ur)$ is Fredholm if and only if there exist $S_1,S_2\in\B(\ur,\vr)$ and compact operators $K_1$ and $K_2$, on $\vr$ and $\ur$ respectively, such that
$$S_1A=\mathbb{I}_{\vr}+K_1\quad\text{and}\quad AS_2=\mathbb{I}_{\ur}+K_2.$$
\end{theorem}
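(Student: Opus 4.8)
The plan is to prove this quaternionic version of Atkinson's theorem by establishing both implications separately, with the forward direction constructing an explicit two-sided parametrix and the backward direction reading off the finite-dimensionality of $\kr(A)$ and $\cokr(A)$ directly from the hypotheses. In fact, for the forward implication I expect to produce a single operator serving as both $S_1$ and $S_2$.

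For the direction ($\Rightarrow$), suppose $A$ is Fredholm. By Proposition \ref{FP1} the range $\ra(A)$ is closed, so I have the orthogonal decompositions $\vr=\kr(A)\oplus\kr(A)^\perp$ and $\ur=\ra(A)\oplus\ra(A)^\perp$, where $\kr(A)$ is finite dimensional by hypothesis and $\ra(A)^\perp\cong\kr(A^\dagger)\cong\cokr(A)$ is finite dimensional by Remark \ref{FR1}. The restriction $A|_{\kr(A)^\perp}:\kr(A)^\perp\longrightarrow\ra(A)$ is a bounded right-linear bijection onto a closed (hence complete) subspace, so by the open mapping theorem it has a bounded inverse. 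I then define $S\in\B(\ur,\vr)$ by $S:=(A|_{\kr(A)^\perp})^{-1}P$, where $P$ is the orthogonal projection of $\ur$ onto $\ra(A)$. Computing on the two summands of each space gives $SA=\mathbb{I}_{\vr}-P_{\kr(A)}$ and $AS=\mathbb{I}_{\ur}-P_{\ra(A)^\perp}$, where $P_{\kr(A)}$ and $P_{\ra(A)^\perp}$ are the orthogonal projections onto the indicated finite-dimensional subspaces. Being of finite rank, these projections are compact by Proposition \ref{PD22}, so setting $S_1=S_2=S$, $K_1=-P_{\kr(A)}$ and $K_2=-P_{\ra(A)^\perp}$ finishes this direction.

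For the direction ($\Leftarrow$), suppose $S_1,S_2$ and compact $K_1,K_2$ are given with $S_1A=\mathbb{I}_{\vr}+K_1$ and $AS_2=\mathbb{I}_{\ur}+K_2$. The key observation is that, applying Theorem \ref{FT1} with the bijective operator $\mathbb{I}$, both $\mathbb{I}_{\vr}+K_1$ and $\mathbb{I}_{\ur}+K_2$ are themselves Fredholm. Now $\kr(A)\subseteq\kr(S_1A)=\kr(\mathbb{I}_{\vr}+K_1)$, and the latter is finite dimensional; hence $\dim\kr(A)<\infty$. Likewise $\ra(A)\supseteq\ra(AS_2)=\ra(\mathbb{I}_{\ur}+K_2)$, and since $\mathbb{I}_{\ur}+K_2$ is Fredholm its range has finite codimension in $\ur$; the natural surjection $\ur/\ra(\mathbb{I}_{\ur}+K_2)\twoheadrightarrow\ur/\ra(A)$ then forces $\dim\cokr(A)\le\dim\cokr(\mathbb{I}_{\ur}+K_2)<\infty$. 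Therefore $A$ is Fredholm.

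I expect the main obstacle to lie in the forward direction, namely verifying that the piecewise-defined parametrix $S$ is a genuine bounded right-linear operator on all of $\ur$, which rests on the bounded invertibility of $A|_{\kr(A)^\perp}$. This in turn requires the open mapping theorem in the right quaternionic Hilbert space setting, together with the facts that closed subspaces are complete and that orthogonal projections are right-linear, so some care is needed to confirm that the classical argument transfers without commutativity of the scalars. Once that is secured, the identities $SA=\mathbb{I}_{\vr}-P_{\kr(A)}$ and $AS=\mathbb{I}_{\ur}-P_{\ra(A)^\perp}$ are routine to check summand by summand.
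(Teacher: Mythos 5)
The paper itself offers no proof of this theorem: it is imported verbatim from \cite{MT}, so there is nothing in this document to compare your argument against. That said, your proof is the standard Atkinson parametrix argument and it is correct in the quaternionic setting: the forward direction correctly uses Proposition \ref{FP1} to get closedness of $\ra(A)$, inverts $A|_{\kr(A)^\perp}$ onto $\ra(A)$ via the open mapping theorem (valid here, since a right quaternionic Hilbert space is in particular a real Banach space and right-linearity of the inverse is inherited from $A$), and identifies $\mathbb{I}_{\vr}-SA$ and $\mathbb{I}_{\ur}-AS$ as the finite-rank projections onto $\kr(A)$ and $\ra(A)^\perp\cong\kr(A^\dagger)\cong\cokr(A)$, which are compact by Proposition \ref{PD22}. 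The converse is also sound: Theorem \ref{FT1} applied to $\mathbb{I}+K_i$ gives Fredholmness of the right-hand sides, the inclusion $\kr(A)\subseteq\kr(S_1A)$ bounds $\dim\kr(A)$, and the surjection $\ur/\ra(AS_2)\twoheadrightarrow\ur/\ra(A)$ bounds $\dim\cokr(A)$; note that closedness of $\ra(A)$ need not be verified separately since the paper's Definition \ref{FD1} only requires finite-dimensional kernel and cokernel. The only caveat worth recording is the one you already flag, namely that orthogonal projections and the inverse of $A|_{\kr(A)^\perp}$ must be checked to be right linear, which is routine because the inner product is right linear in its second slot.
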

\begin{remark} \cite{MT}\label{FR4}
Let $A\in\B(\vr,\ur),$ then
\begin{enumerate}
\item [(a)] $A$ is said to be left semi-Fredholm if there exists $B\in\B(\ur,\vr)$ and a compact operator $K_1$ on $\vr$ such that $BA=\Iop+K_1$. The set of all left semi-Fredholm operators are denoted by $\mathcal{F}_l(\vr,\ur)$ \cite{con}.
\item [(b)] $A$ is said to be right semi-Fredholm if there exists $B\in\B(\ur,\vr)$ and a compact operator $K_2$ on $\ur$ such that $AB=\Iopu+K_2$. The set of all right semi-Fredholm operators are denoted by $\mathcal{F}_r(\vr,\ur)$ \cite{con}.
\item[(c)] By theorem \ref{FT3}, the set of all Fredholm operators, $\mathcal{F}(\vr,\ur)=\mathcal{F}_l(\vr,\ur)\cap \mathcal{F}_r(\vr,\ur)$.
\item[(d)] From theorem \ref{FT3} it is also clear that every invertible right linear operator is Fredholm.
\item[(e)]Let $\mathcal{S}\FF(\vr)=\FF_l(\vr)\cup\FF_r(\vr)$. From theorem \ref{FT3} and corollary \ref{CD1}, we have
\begin{eqnarray*}
A\in\FF_l(\vr)&\Leftrightarrow& A^\dagger\in\FF_r(\vr)\\
A\in\mathcal{S}\FF(\vr)&\Leftrightarrow& A^\dagger\in\mathcal{S}\FF(\vr)\\
A\in\FF(\vr)&\Leftrightarrow& A^\dagger\in\FF(\vr).
\end{eqnarray*}
\end{enumerate}
\end{remark}
\begin{theorem} \cite{MT}\label{FT2}
Let $\vr, \ur$ and $W_\quat^R$ be right quaternionic Hilbert spaces. If $A_1\in\B(\vr,\ur)$ and $A_2\in\B(\ur,W_\quat^R)$ are two Fredholm operators, then $A_2A_1\in\B(\vr, W_\quat^R)$ is also a Fredholm operator, and it satisfies $\ind(A_2A_1)=\ind(A_1)+\ind(A_2)$.
\end{theorem}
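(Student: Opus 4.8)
The plan is to prove the statement in two stages: first that $A_2A_1$ is Fredholm, and then the index identity by means of a six-term exact sequence of kernels and cokernels.

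\textbf{Stage 1 (Fredholmness).} I would apply the Atkinson-type characterization of Theorem \ref{FT3}. Since $A_1$ is Fredholm there are bounded $S_1,S_1'$ and compact $K_1,K_1'$ with $S_1A_1=\Iop+K_1$ and $A_1S_1'=\Iopu+K_1'$, and since $A_2$ is Fredholm there are bounded $T_1,T_2$ and compact $L_1,L_2$ with $T_1A_2=\Iopu+L_1$ and $A_2T_2=\mathbb{I}_{W_\quat^R}+L_2$. Then I compute
$$(S_1T_1)(A_2A_1)=S_1(\Iopu+L_1)A_1=\Iop+K_1+S_1L_1A_1,$$
$$(A_2A_1)(S_1'T_2)=A_2(\Iopu+K_1')T_2=\mathbb{I}_{W_\quat^R}+L_2+A_2K_1'T_2.$$
By Proposition \ref{PP2} the operators $S_1L_1A_1$ and $A_2K_1'T_2$ are compact (bounded composed with compact composed with bounded), so both right-hand sides are an identity plus a compact operator. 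Theorem \ref{FT3} then gives that $A_2A_1$ is Fredholm.

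\textbf{Stage 2 (index additivity).} Now all three operators are Fredholm, so the six spaces $\kr(A_1),\kr(A_2),\kr(A_2A_1)$ and $\cokr(A_1),\cokr(A_2),\cokr(A_2A_1)$ are finite-dimensional right quaternionic vector spaces. I would write down the natural sequence
$$0\to\kr(A_1)\xrightarrow{\ \iota\ }\kr(A_2A_1)\xrightarrow{\ A_1\ }\kr(A_2)\xrightarrow{\ \pi\ }\cokr(A_1)\xrightarrow{\ \overline{A_2}\ }\cokr(A_2A_1)\to\cokr(A_2)\to0,$$
where $\iota$ is the inclusion, the second arrow is the restriction of $A_1$, the map $\pi$ sends $y\mapsto y+\ra(A_1)$, the map $\overline{A_2}$ is induced by $A_2$, and the last arrow is induced by the identity on $W_\quat^R$. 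Each arrow is right $\quat$-linear because the relevant ranges are right submodules, and a direct inspection of kernels and images at each node (exactly as in the complex case) shows the sequence is exact.

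\textbf{Conclusion and main obstacle.} For a finite exact sequence of finite-dimensional quaternionic vector spaces the alternating sum of the dimensions vanishes, since $\quat$ is a division ring and the rank–nullity bookkeeping carries over. Applied to the sequence above this yields
$$\dim\kr(A_1)-\dim\kr(A_2A_1)+\dim\kr(A_2)-\dim\cokr(A_1)+\dim\cokr(A_2A_1)-\dim\cokr(A_2)=0,$$
which rearranges, using $\ind(\cdot)=\dim\kr(\cdot)-\dim\cokr(\cdot)$ from Definition \ref{FD2}, to $\ind(A_2A_1)=\ind(A_1)+\ind(A_2)$. The only genuinely delicate part is Stage 2: one must confirm that the connecting maps are well-defined and right $\quat$-linear, that exactness holds at every node, and that additivity of dimension along an exact sequence holds verbatim for finite-dimensional right quaternionic vector spaces. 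All of these go through because $\quat$ is a skew field, but the non-commutativity means each verification must be carried out on the right.
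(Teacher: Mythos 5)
Your argument is correct. One preliminary remark: this paper does not actually prove Theorem \ref{FT2} --- it is imported from \cite{MT} and stated without proof --- so there is no in-paper argument to compare against, and your proposal has to stand on its own. It does. Stage 1 is a valid application of the Atkinson-type characterization: the identities $(S_1T_1)(A_2A_1)=\Iop+K_1+S_1L_1A_1$ and $(A_2A_1)(S_1'T_2)=\mathbb{I}_{W_\quat^R}+L_2+A_2K_1'T_2$ are right, and the ``only if'' direction of Theorem \ref{FT3} then yields Fredholmness of $A_2A_1$. The only cosmetic gap is that Proposition \ref{PP2} is stated in the paper for operators on a single space $\vr$, whereas you need the (equally immediate) two-space version to see that $S_1L_1A_1$ and $A_2K_1'T_2$ are compact; this is worth a sentence but is not a real obstacle. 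Stage 2 is the classical kernel--cokernel exact sequence, and the points you flag as delicate do all go through: $\ra(A_1)$, $\ra(A_2)$ and $\ra(A_2A_1)$ are right $\quat$-submodules because the operators are right linear, so the cokernels are right quaternionic vector spaces and the maps $\pi$, $\overline{A_2}$ and the final projection are well defined and right linear; since $\quat$ is a division ring, finite-dimensional right $\quat$-modules have a well-defined dimension, rank--nullity holds, and hence the alternating sum of dimensions along a finite exact sequence vanishes. An alternative, more pedestrian route --- the one most operator-theory texts take --- is to decompose $\kr(A_2)$ and $\ra(A_1)$ around their intersection and count dimensions directly; your homological bookkeeping reaches the same identity $\ind(A_2A_1)=\ind(A_1)+\ind(A_2)$ with less case analysis, at the cost of having to verify exactness at four interior nodes, which you correctly identify as the content of the proof.
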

\begin{lemma} \cite{MT}\label{FL2}
Let $F\in\B(\vr)$ be a finite rank operator, then $\ind(\Iop+F)=0.$
\end{lemma}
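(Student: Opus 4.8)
The plan is to first confirm that $\Iop+F$ is Fredholm and then to reduce the index to a finite-dimensional computation. Since $F$ has finite rank it is compact by Proposition \ref{PD22}; as $\Iop$ is bijective, Theorem \ref{FT1} shows that $\Iop+F$ is Fredholm. By Remark \ref{FR1} it then suffices to prove
$$\dim(\kr(\Iop+F))=\dim(\kr((\Iop+F)^\dagger)).$$
I would set $M=\ra(F)$, which is finite-dimensional and hence a closed subspace, and decompose orthogonally $\vr=M\oplus M^\perp$. Because $\ra(F)\subseteq M$, every vector is sent by $F$ into $M$, so in the block form attached to this decomposition $F$ has vanishing ``lower row''; consequently
$$\Iop+F=\begin{pmatrix} \id_M+F_{11} & F_{12}\\ 0 & \id_{M^\perp}\end{pmatrix},$$
where $F_{11}=F|_M:M\to M$ and $F_{12}:M^\perp\to M$ is the $(1,2)$-block of $F$. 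This operator is block upper-triangular with the invertible corner $\id_{M^\perp}$.

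From this block form the two relevant kernels can be read off directly. Solving $(\Iop+F)(x\oplus y)=0$ with $x\in M$, $y\in M^\perp$ forces $y=0$ and $(\id_M+F_{11})x=0$, so $\kr(\Iop+F)=\kr(\id_M+F_{11})$ and in particular $\dim(\kr(\Iop+F))=\dim(\kr(\id_M+F_{11}))$. Since the decomposition is orthogonal, $(\Iop+F)^\dagger=\Iop+F^\dagger$ is block lower-triangular with diagonal blocks $(\id_M+F_{11})^\dagger$ and $\id_{M^\perp}$, and the same computation gives $\dim(\kr((\Iop+F)^\dagger))=\dim(\kr((\id_M+F_{11})^\dagger))$.

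It remains to treat $T:=\id_M+F_{11}$ as an operator on the finite-dimensional space $M$. The rank--nullity relation for right $\quat$-linear maps gives $\dim(\kr(T))=\dim(M)-\dim(\ra(T))$, and likewise for $T^\dagger$, while Proposition \ref{PP02} applied to the (automatically finite rank) operator $T$ yields $\dim(\ra(T))=\dim(\ra(T^\dagger))$. Hence $\dim(\kr(T))=\dim(\kr(T^\dagger))$, and combining with the previous paragraph, $\ind(\Iop+F)=0$. The step needing the most care is this reduction: one must check that the orthogonal block decomposition is compatible with the quaternionic adjoint, so that the $(1,1)$-block of $(\Iop+F)^\dagger$ really is $(\id_M+F_{11})^\dagger$ computed inside $M$, and that rank--nullity is available for right $\quat$-linear operators on a finite-dimensional right quaternionic Hilbert space.
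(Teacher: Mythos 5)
Your argument is correct. Since the lemma is imported from \cite{MT}, the paper itself contains no proof to compare against, so I can only judge the proposal on its own terms: the chain (finite rank $\Rightarrow$ compact $\Rightarrow$ $\Iop+F$ Fredholm by Theorem \ref{FT1}, then $\ind$ computed via Remark \ref{FR1}) is exactly right, the block decomposition over $\vr=\ra(F)\oplus\ra(F)^\perp$ is legitimate, and both cautionary points you flag at the end are genuine and check out: the $(1,1)$-block of the adjoint taken with respect to an orthogonal decomposition is the adjoint of the $(1,1)$-block, and rank--nullity does hold for right $\quat$-linear maps on finite-dimensional right quaternionic spaces because $\quat$ is a division ring (alternatively you can avoid rank--nullity entirely by using Proposition \ref{IP30} inside $M$: $\dim\kr(T)=\dim M-\dim\ra(T^\dagger)$ and $\dim\kr(T^\dagger)=\dim M-\dim\ra(T)$, then invoke Proposition \ref{PP02}). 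One small imprecision: for the lower-triangular adjoint the kernel is not $\kr((\id_M+F_{11})^\dagger)\oplus\{0\}$ but the graph $\{x\oplus(-F_{12}^\dagger x):x\in\kr((\id_M+F_{11})^\dagger)\}$; the dimension count you need is unaffected. A common variant of this proof takes $M=\ra(F)+\ra(F^\dagger)$ instead, which is still finite-dimensional but is invariant under both $F$ and $F^\dagger$, so that $\Iop+F$ becomes block \emph{diagonal}, $(\id_M+F|_M)\oplus\id_{M^\perp}$, and the kernel computations for the operator and its adjoint become symmetric; your triangular version costs one extra line but is equally valid.
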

\begin{theorem} \cite{MT}\label{FT4}
Let $A\in\B(\vr,\ur)$ be a Fredholm operator, then for any compact operator $K\in\B(\vr,\ur)$, $A+K$ is a Fredholm operator and $\ind(A+K)=\ind(A)$.
\end{theorem}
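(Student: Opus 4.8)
The plan is to prove the two assertions in turn, first that $A+K$ is Fredholm and then that its index is unchanged, in both cases reducing matters to the multiplicativity of the index (Theorem \ref{FT2}) applied to a parametrix of $A$ furnished by Theorem \ref{FT3}. The whole argument is purely algebraic once one supplies the single analytic fact about compact perturbations of the identity isolated below.

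For Fredholmness, Theorem \ref{FT3} supplies $S_1,S_2\in\B(\ur,\vr)$ and compact $K_1,K_2$ with $S_1A=\mathbb{I}_{\vr}+K_1$ and $AS_2=\mathbb{I}_{\ur}+K_2$. I would then simply compute $S_1(A+K)=\mathbb{I}_{\vr}+(K_1+S_1K)$ and $(A+K)S_2=\mathbb{I}_{\ur}+(K_2+KS_2)$. Since $S_1K$ and $KS_2$ are compact by Proposition \ref{PP2}, the two parenthesized remainders are compact, so the same $S_1,S_2$ witness the Fredholmness of $A+K$ through the converse direction of Theorem \ref{FT3}.

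The main work, and what I expect to be the main obstacle, is a preliminary claim needed for the index: $\ind(\mathbb{I}+C)=0$ for an \emph{arbitrary} compact $C$, whereas Lemma \ref{FL2} only handles finite rank remainders. I would bridge this gap by approximation. By Proposition \ref{PC1} choose a finite rank $F$ with $\|C-F\|<1$, so that $U:=\mathbb{I}+(C-F)$ is invertible via a Neumann series. Then $\mathbb{I}+C=U+F=U(\mathbb{I}+U^{-1}F)$, where $U^{-1}F$ is again finite rank. As $U$ is invertible, it is Fredholm by Remark \ref{FR4} and has index $0$ since it is bijective, while $\ind(\mathbb{I}+U^{-1}F)=0$ by Lemma \ref{FL2}; hence Theorem \ref{FT2} gives $\ind(\mathbb{I}+C)=\ind(U)+\ind(\mathbb{I}+U^{-1}F)=0$.

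With this claim available I would finish as follows. Using that $\ra(A)$ is closed (Proposition \ref{FP1}), decompose $\vr=\kr(A)\oplus\kr(A)^\perp$ and $\ur=\ra(A)\oplus\ra(A)^\perp$, and define $B\in\B(\ur,\vr)$ to invert the bijective restriction $A\colon\kr(A)^\perp\to\ra(A)$ and annihilate $\ra(A)^\perp$. This yields a two-sided parametrix with finite rank remainders, $BA=\mathbb{I}_{\vr}-P_{\kr(A)}$ and $AB=\mathbb{I}_{\ur}-P_{\ra(A)^\perp}$, so $B$ is itself Fredholm by Theorem \ref{FT3}, and Lemma \ref{FL2} together with Theorem \ref{FT2} give $\ind(B)+\ind(A)=\ind(BA)=0$. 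Finally $B(A+K)=\mathbb{I}_{\vr}-P_{\kr(A)}+BK=\mathbb{I}_{\vr}+C'$ with $C'$ compact by Proposition \ref{PP2}, so the preliminary claim gives $\ind(B(A+K))=0$, and Theorem \ref{FT2} gives $\ind(B)+\ind(A+K)=0$. Comparing the two relations yields $\ind(A+K)=-\ind(B)=\ind(A)$, as required.
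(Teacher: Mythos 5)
Your argument is correct, but there is nothing in the paper to compare it against: Theorem \ref{FT4} is quoted from \cite{MT} without proof, so you have in effect supplied the missing argument. What you give is the standard Atkinson-style proof, and it assembles correctly from the tools the paper does state. The genuinely new ingredient you contribute --- and it is exactly the right one --- is the upgrade of Lemma \ref{FL2} from finite-rank to arbitrary compact perturbations of the identity via the factorization $\Iop+C=U(\Iop+U^{-1}F)$ with $U=\Iop+(C-F)$ invertible by a Neumann series; this goes through verbatim in the quaternionic setting because $\B(\vr)$ is a (real) Banach algebra, so $\|C-F\|<1$ still forces invertibility, and then Theorem \ref{FT2}, Corollary \ref{CI1} and Lemma \ref{FL2} give $\ind(\Iop+C)=0$. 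Two citation-level points are worth tightening, though neither is a gap in substance: (i) Proposition \ref{PP2} is stated only for operators on the single space $\vr$, whereas you need compactness of $S_1K$, $KS_2$ and $BK$ for compositions between $\vr$ and $\ur$; the two-space version is immediate from Definition \ref{PD1} (a bounded operator carries precompact sets to precompact sets), but you should say so rather than lean on \ref{PP2} as stated. (ii) The boundedness of your parametrix $B$ rests on the inverse of the bounded bijection $A\colon\kr(A)^{\perp}\to\ra(A)$ being bounded, i.e.\ on the open mapping theorem for right quaternionic Hilbert spaces; this is true and standard, but it is not among the results quoted in the paper, so it deserves an explicit word. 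With those two remarks added, your proof is complete and self-contained.
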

\begin{corollary} \cite{MT}\label{CI1} Every invertible operator $A\in\B(\vr)$ is Fredholm and $\ind(A)=0$.
\end{corollary}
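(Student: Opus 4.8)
The plan is to verify the two defining conditions directly from invertibility, since an invertible operator is about as well-behaved as a Fredholm operator can be. Let $A\in\B(\vr)$ be invertible, so that $A^{-1}\in\B(\vr)$ exists. First I would observe that $A$ is injective and surjective, whence $\kr(A)=\{0\}$ and $\ra(A)=\vr$. Consequently $\cokr(A)=\vr/\ra(A)=\{0\}$, so both $\kr(A)$ and $\cokr(A)$ are finite dimensional (indeed zero dimensional), which is exactly Definition \ref{FD1}. This already gives the Fredholm conclusion; alternatively, it is recorded in Remark \ref{FR4}(d), and it can be recovered from Theorem \ref{FT3} by taking $S_1=S_2=A^{-1}$ together with the zero operator (which is compact) as $K_1$ and $K_2$, since then $A^{-1}A=\Iop$ and $AA^{-1}=\Iop$.

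For the index, I would simply feed the two dimension counts into Definition \ref{FD2}. Since $\dim(\kr(A))=0$ and $\dim(\cokr(A))=0$, we obtain
$$\ind(A)=\dim(\kr(A))-\dim(\cokr(A))=0-0=0.$$
If one prefers to argue through the adjoint, note that $A^{\dagger}$ is also invertible with $(A^{\dagger})^{-1}=(A^{-1})^{\dagger}$, so $\kr(A^{\dagger})=\{0\}$, and then Remark \ref{FR1} gives $\ind(A)=\dim(\kr(A))-\dim(\kr(A^{\dagger}))=0$.

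Honestly, there is no substantive obstacle here: the result is an immediate specialization of the Fredholm machinery to the case of trivial kernel and cokernel, and the only care needed is the bookkeeping that invertibility forces both dimensions to vanish. The statement functions mainly as a convenient reference point for the later sections on Weyl and Browder S-spectra, where one repeatedly contrasts invertible operators (index zero, not in the spectrum) with genuinely Fredholm perturbations.
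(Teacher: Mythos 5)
Your proof is correct: invertibility gives $\kr(A)=\{0\}$ and $\ra(A)=\vr$, hence $\cokr(A)=\{0\}$, so $A$ is Fredholm by Definition \ref{FD1} and $\ind(A)=0-0=0$ by Definition \ref{FD2} (or via Remark \ref{FR1}, since $\kr(A^\dagger)=\ra(A)^\perp=\{0\}$). The paper states this corollary without proof, citing \cite{MT}, and your direct verification --- together with the alternative route through Theorem \ref{FT3} taking $S_1=S_2=A^{-1}$ and $K_1=K_2=0$ --- is exactly the standard argument intended.
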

\begin{corollary} \cite{MT}\label{CN}
Let $n$ be a non-negative integer. If $A\in\FF(\vr)$ then $A^n\in\FF(\vr)$ and $\ind(A^n)=n~\ind(A)$.
\end{corollary}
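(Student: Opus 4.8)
The plan is to argue by induction on $n$, taking the multiplicativity of the Fredholm index in Theorem \ref{FT2} as the sole engine. The statement has two parts bundled together, namely that each power $A^n$ is again Fredholm and that the index accumulates additively to $n\,\ind(A)$; both will fall out of the same induction.

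First I would dispose of the base cases. For $n=0$ one interprets $A^0=\Iop$, the identity operator, which is invertible and hence Fredholm with $\ind(\Iop)=0$ by Corollary \ref{CI1}; this agrees with $0\cdot\ind(A)=0$. For $n=1$ there is nothing to prove, since $A^1=A\in\FF(\vr)$ by hypothesis and $\ind(A)=1\cdot\ind(A)$ trivially.

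For the inductive step I would assume that $A^k\in\FF(\vr)$ with $\ind(A^k)=k\,\ind(A)$, and write $A^{k+1}=A\cdot A^k$. Applying Theorem \ref{FT2} with all three spaces taken to be $\vr$ (that is, $\ur=W_\quat^R=\vr$, with $A^k$ playing the role of the first factor and $A$ the second), the composition of two Fredholm operators on $\vr$ is again Fredholm, so $A^{k+1}\in\FF(\vr)$, and the index formula gives
$$\ind(A^{k+1})=\ind(A^k)+\ind(A)=k\,\ind(A)+\ind(A)=(k+1)\,\ind(A),$$
which closes the induction and establishes the claim for every $n\in\N$.

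I do not expect any genuine obstacle here: the entire analytic content already resides in Theorem \ref{FT2}, and the corollary is a formal consequence of it. The only point demanding a moment's care is the $n=0$ case, where one must adopt the convention $A^0=\Iop$ and invoke Corollary \ref{CI1} to assign the identity index zero; once that convention is in place, the induction proceeds mechanically with no quaternion-specific subtleties.
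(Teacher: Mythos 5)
Your proof is correct and is exactly the intended argument: the paper states this corollary without proof (citing \cite{MT}), and it follows from Theorem \ref{FT2} by induction just as you describe, with Corollary \ref{CI1} handling the $n=0$ convention $A^0=\Iop$. No gaps.
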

\begin{theorem} \cite{MT}\label{FT6}
An operator $A\in\B(\vr)$ is left semi-Fredholm if and only if $\ra(A)$ is closed and $\kr(A)$ is finite dimensional. Hence
\begin{eqnarray}
\FF_l(\vr)&=&\{A\in\B(\vr)~~|~~\ra(A)~\text{ is closed and}~ \dim(\kr(A))<\infty\}\\
\FF_r(\vr)&=&\{A\in\B(\vr)~~|~~\ra(A)~\text{ is closed and}~ \dim(\kr(A^\dagger))<\infty\}
\end{eqnarray}
\end{theorem}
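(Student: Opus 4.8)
The plan is to prove the left semi-Fredholm characterization in detail and then obtain the right one by passing to adjoints. Recall from Remark \ref{FR4}(a) that $A\in\FF_l(\vr)$ means there exist $B\in\B(\vr)$ and a compact operator $K_1$ on $\vr$ with $BA=\Iop+K_1$. So the task splits into the two implications of ``$A\in\FF_l(\vr)\iff\ra(A)$ closed and $\dim\kr(A)<\infty$,'' after which the $\FF_r$ identity is deduced.

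For the forward implication, first I would show $\kr(A)$ is finite dimensional. If $\phi\in\kr(A)$ then $0=BA\phi=(\Iop+K_1)\phi$, so $K_1\phi=-\phi$; hence $K_1$ maps $\kr(A)$ into itself and coincides with $-\Iop$ there. Since $K_1$ is compact (Definition \ref{PD1}), the operator $-\Iop$ on $\kr(A)$ is compact, and the identity on a Hilbert space is compact only if the space is finite dimensional (Riesz). Next I would show $\ra(A)$ is closed by proving $A$ is bounded below on $\kr(A)^\perp$. Arguing by contradiction, suppose there are $\phi_n\in\kr(A)^\perp$ with $\|\phi_n\|=1$ and $\|A\phi_n\|\to0$. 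Then $\|BA\phi_n\|\le\|B\|\,\|A\phi_n\|\to0$, while $BA\phi_n=\phi_n+K_1\phi_n$; compactness of $K_1$ gives a convergent subsequence $K_1\phi_{n_k}$, which forces $\phi_{n_k}$ to converge to a unit vector $\phi\in\kr(A)^\perp$. Continuity of $A$ yields $A\phi=0$, so $\phi\in\kr(A)\cap\kr(A)^\perp=\{0\}$, contradicting $\|\phi\|=1$. Being bounded below on $\kr(A)^\perp$ makes $A(\kr(A)^\perp)=\ra(A)$ closed by a standard Cauchy-sequence argument.

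For the converse, assume $\ra(A)$ is closed and $\dim\kr(A)<\infty$. Then $\kr(A)$ is closed, so $\vr=\kr(A)\oplus\kr(A)^\perp$, and the orthogonal projection $P$ onto $\kr(A)^\perp$ has $\Iop-P$ of finite rank, hence compact (Proposition \ref{PD22}). The restriction $\widetilde A:=A|_{\kr(A)^\perp}:\kr(A)^\perp\to\ra(A)$ is a bounded bijection onto the closed (hence complete) subspace $\ra(A)$, so its inverse is bounded by the open mapping theorem. Defining $B$ to equal $\widetilde A^{-1}$ on $\ra(A)$ and $0$ on $\ra(A)^\perp=\kr(A^\dagger)$ (Proposition \ref{IP30}) produces a bounded operator with $BA=P=\Iop-(\Iop-P)=\Iop+K_1$, where $K_1=-(\Iop-P)$ is compact. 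Thus $A\in\FF_l(\vr)$.

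Finally, the $\FF_r$ identity follows by duality: by Remark \ref{FR4}(e), $A\in\FF_r(\vr)$ iff $A^\dagger\in\FF_l(\vr)$, and applying the left characterization to $A^\dagger$ shows this holds iff $\ra(A^\dagger)$ is closed and $\dim\kr(A^\dagger)<\infty$. Converting ``$\ra(A^\dagger)$ closed'' into ``$\ra(A)$ closed'' uses the closed-range equivalence, namely that $\ra(A)$ is closed iff $\ra(A^\dagger)$ is closed, which in this Hilbert-space setting follows from the bounded-below analysis above applied to both $A$ and $A^\dagger$ together with Proposition \ref{IP30}. I expect the main obstacle to be the closed-range step in the forward direction, i.e. the compactness-and-contradiction argument establishing that $A$ is bounded below on $\kr(A)^\perp$, and, for the right case, verifying the quaternionic closed-range equivalence rather than merely invoking its complex analogue.
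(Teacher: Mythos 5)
Your argument is correct and is the standard Atkinson-type proof on which this theorem rests (the paper itself only states the result, citing \cite{MT}, so there is no in-text proof to diverge from): finite-dimensionality of $\kr(A)$ because $K_1$ acts as $-\Iop$ there, closedness of $\ra(A)$ via bounded-belowness on $\kr(A)^\perp$, the converse via the bounded pseudo-inverse of $A|_{\kr(A)^\perp}$ extended by $0$ on $\ra(A)^\perp$, and duality for $\FF_r(\vr)$. The one step you only sketch, the quaternionic closed-range equivalence ($\ra(A)$ closed iff $\ra(A^\dagger)$ closed), does carry over verbatim from the complex case (since $\ra(A^\dagger)=A^\dagger(\ra(A))$ by Proposition \ref{IP30} and $A|_{\kr(A)^\perp}:\kr(A)^\perp\to\ra(A)$ is a bounded bijection when $\ra(A)$ is closed), so there is no genuine gap.
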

\begin{remark} \cite{MT}\label{ER2}
Let $A\in\B(\vr)$.
\begin{enumerate}
\item [(a)] The so-called Weyl operators are Fredholm operators on $\vr$ with null index. That is, the set of all Weyl operators,
$$\Wy(\vr)=\{A\in\FF(\vr)~~|~~\ind(A)=0\}.$$
\item[(b)] Since, by remark \ref{FR4} (e), $A\in\FF(\vr)\Leftrightarrow A^\dagger\in\FF(\vr)$ and $\ind(A)=-\ind(A^\dagger)$, $A\in\Wy(\vr)\Leftrightarrow A^\dagger\in\Wy(\vr)$.
\item[(c)] By theorem \ref{FT1} and lemma \ref{FL2}, if $F$ is a finite rank operator, then $\Iop+F\in\Wy(\vr)$. 
\item[(d)]By theorem \ref{FT2}, $A,B\in\Wy(\vr)\Rightarrow AB\in\Wy(\vr)$.
\item[(e)]By theorem \ref{FT4}, $A\in\Wy(\vr), K\in\B_0(\vr)\Rightarrow A+K\in\Wy(\vr).$

\item[(f)] By corollary \ref{CI1}, $A\in\B(\vr)$ is invertible, then $A\in\Wy(\vr)$
\item[(g)]Suppose $\dim(\vr)<\infty$, then
$\ind(A)=0$ for any $A\in\B(\vr)$.
Therefore, every operator in $\B(\vr)$ is a Fredholm operator with index zero. In this case, $\Wy(\vr)=\B(\vr)$.
\end{enumerate}
\end{remark}
\section{essential S-spectrum}
 Most part of this section is borrowed from \cite{MT} as needed here. For details we refer the reader to \cite{MT}. We also give proofs to some new results which are omitted in \cite{MT}.
\begin{theorem}\cite{ghimorper}\label{ET0}
Let $\vr$ be a right quaternionic Hilbert space equipped with a left scalar multiplication. Then the set $\B(\vr)$ equipped with the point-wise sum, with the left and right scalar multiplications defined in equations \ref{lft_mul-op} and \ref{rgt_mul-op}, with the composition as product, with the adjunction $A\longrightarrow A^\dagger$, as in equation \ref{Ad1}, as $^*-$ involution and with the norm defined in equation \ref{PE1}, is a quaternionic two-sided Banach $C^*$-algebra with unity $\Iop$.
\end{theorem}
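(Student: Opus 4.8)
The plan is to reduce the entire verification to properties of one distinguished family of operators, the \emph{left-multiplication operators}, and then to transport the Banach-space and $C^*$-axioms from the inner-product structure of $\vr$. For each $\bfrakq\in\quat$ define $L_\bfrakq\in\B(\vr)$ by $L_\bfrakq\phi=\bfrakq\phi=\sum_{k\in N}\varphi_k\bfrakq\langle\varphi_k\mid\phi\rangle$. Reading off Proposition \ref{lft_mul} directly, $L_\bfrakq$ is a bounded right linear operator (part (a) gives $L_\bfrakq(\phi\bfrakp)=(L_\bfrakq\phi)\bfrakp$) with $\|L_\bfrakq\|=|\bfrakq|$ by (b); moreover the assignment $\bfrakq\mapsto L_\bfrakq$ satisfies $L_{\bfrakq+\bfrakp}=L_\bfrakq+L_\bfrakp$, $L_{\bfrakq\bfrakp}=L_\bfrakq L_\bfrakp$ (part (c)), $L_1=\Iop$, and, via part (d) together with the adjoint relation \ref{Ad1}, $L_\bfrakq^\dagger=L_{\overline{\bfrakq}}$. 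The key observation is that the operator scalar multiplications \ref{lft_mul-op} and \ref{rgt_mul-op} are simply $\bfrakq A=L_\bfrakq A$ and $A\bfrakq=AL_\bfrakq$, compositions inside $\B(\vr)$. With this identification the two-sided $\quat$-vector-space axioms (distributivity, $r\phi=\phi r$ for real $r$, associativity $(\bfrakq A)\bfrakp=\bfrakq(A\bfrakp)$, and mutual compatibility of the left and right actions) collapse to associativity of composition and the homomorphism properties of $L$, so they hold automatically.

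Next I would dispatch the Banach-algebra part. Completeness of $\B(\vr)$ under the operator norm \ref{PE1} follows the classical route: a Cauchy sequence $\{A_n\}$ is pointwise Cauchy, hence converges to a map $A$ that is readily checked to be right linear and bounded, with $\|A_n-A\|\to0$. Submultiplicativity $\|AB\|\le\|A\|\|B\|$ and $\|\Iop\|=1$ are immediate from the definition of the norm, yielding the unital Banach-algebra structure under composition. Compatibility of the scalar norms, $\|\bfrakq A\|=|\bfrakq|\,\|A\|=\|A\bfrakq\|$, follows from $\|L_\bfrakq\|=|\bfrakq|$ together with Proposition \ref{lft_mul}(b) applied pointwise.

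I would then treat the involution $A\mapsto A^\dagger$. Additivity and the involutivity $(A^\dagger)^\dagger=A$ follow directly from the defining relation \ref{Ad1}, and anti-multiplicativity $(AB)^\dagger=B^\dagger A^\dagger$ is the usual double-adjoint computation with the quaternionic inner product. The scalar--involution relations $(\bfrakq A)^\dagger=A^\dagger\overline{\bfrakq}$ and $(A\bfrakq)^\dagger=\overline{\bfrakq}A^\dagger$ of \ref{sc_mul_aj-op} become, in the $L$-language, $(L_\bfrakq A)^\dagger=A^\dagger L_\bfrakq^\dagger=A^\dagger L_{\overline{\bfrakq}}$ and $(AL_\bfrakq)^\dagger=L_{\overline{\bfrakq}}A^\dagger$, i.e.\ immediate consequences of $L_\bfrakq^\dagger=L_{\overline{\bfrakq}}$. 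Finally the $C^*$-identity $\|A^\dagger A\|=\|A\|^2$ is obtained in the standard way: for $\|\phi\|=1$, equation \ref{Ad1} gives $\|A\phi\|^2=\langle A\phi\mid A\phi\rangle=\langle A^\dagger A\phi\mid\phi\rangle$, and quaternionic Cauchy--Schwarz bounds this by $\|A^\dagger A\phi\|\le\|A^\dagger A\|$, so $\|A\|^2\le\|A^\dagger A\|$; the reverse inequality follows from submultiplicativity once one notes $\|A^\dagger\|=\|A\|$, which itself drops out of the same inequality applied to $A$ and to $A^\dagger$.

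The main obstacle I anticipate is not any single estimate but the bookkeeping forced by the basis-dependent, non-canonical left multiplication. One must verify that $\bfrakq A$ is genuinely right linear, that adjunction sends the left action to the \emph{right} action (the asymmetry $(\bfrakq A)^\dagger=A^\dagger\overline{\bfrakq}$ rather than $\overline{\bfrakq}A^\dagger$), and that all the compatibility identities among addition, composition, the two scalar actions, and $\dagger$ remain mutually consistent. Once the reduction $\bfrakq A=L_\bfrakq A$, $A\bfrakq=AL_\bfrakq$ with $L_\bfrakq^\dagger=L_{\overline{\bfrakq}}$ is in place, each axiom reduces to a short computation, and the $C^*$-identity is the only point at which the Hilbert-space geometry (Cauchy--Schwarz) is essentially invoked.
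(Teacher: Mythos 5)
The paper itself offers no proof of this statement: Theorem \ref{ET0} is imported verbatim from \cite{ghimorper}, so there is no internal argument to compare yours against. Judged on its own terms, your proposal is correct, and it is in fact the standard way this result is established in the literature: one packages the basis-dependent left multiplication into the operators $L_{\bfrakq}\phi=\bfrakq\phi$, checks that $\bfrakq\mapsto L_{\bfrakq}$ is a unital, norm-preserving $*$-homomorphism of $\quat$ into $\B(\vr)$ (with $L_{\bfrakq}^{\dagger}=L_{\overline{\bfrakq}}$ coming from Proposition \ref{lft_mul}(d)), and then observes that the operator scalar actions of equations \ref{lft_mul-op} and \ref{rgt_mul-op} are just composition with $L_{\bfrakq}$ on the appropriate side, so every module and compatibility axiom collapses to associativity of composition. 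Your treatment of the remaining ingredients is also sound: completeness of $\B(\vr)$ is the classical pointwise-Cauchy argument (unaffected by the quaternionic scalars, since the operator norm is real-valued), the asymmetry $(\bfrakq A)^{\dagger}=A^{\dagger}\overline{\bfrakq}$ drops out of $(L_{\bfrakq}A)^{\dagger}=A^{\dagger}L_{\overline{\bfrakq}}$ exactly as you say, and the $C^*$-identity via $\|A\phi\|^{2}=\langle A^{\dagger}A\phi\mid\phi\rangle$ and quaternionic Cauchy--Schwarz is the standard chain $\|A\|^{2}\le\|A^{\dagger}A\|\le\|A^{\dagger}\|\,\|A\|$ closed off by $\|A^{\dagger}\|=\|A\|$. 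The only cosmetic caveat is that a fully rigorous writeup would also record the absolute convergence of the defining series for $L_{\bfrakq}$ (guaranteed by Propositions \ref{P1} and \ref{P2}) before manipulating it termwise, and would spell out that $L_{1}=\Iop$ follows from the basis expansion of Proposition \ref{P2}; neither point presents any difficulty.
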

\begin{remark}\label{ER0}
In the above theorem, if the left scalar multiplication is left out on $\vr$, then $\B(\vr)$ becomes a real Banach $C^*$-algebra with unity $\Iop$.
\end{remark}
\begin{theorem}\cite{Fa}\label{ET1}
The set of all compact operators, $\B_0(\vr)$ is a closed biideal of $\B(\vr)$ and is closed under adjunction.
\end{theorem}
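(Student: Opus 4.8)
The plan is to verify, in turn, the several assertions packaged into the statement: that $\B_0(\vr)$ is a real-linear subspace of $\B(\vr)$ stable under the left and right quaternionic scalar multiplications, that it absorbs $\B(\vr)$ on both sides (the two-sided ideal, i.e. biideal, property), that it is norm-closed, and finally that it is stable under the adjunction $\dagger$. Almost every ingredient is already recorded in the preceding propositions, so the proof is largely a matter of assembling them correctly, paying attention only to the places where non-commutativity or the basis-dependent left multiplication could intrude.

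For the linear-subspace part I would first note that if $K_1,K_2\in\B_0(\vr)$, then by Proposition \ref{PC1} each is a norm-limit of finite-rank operators, say $K_i=\lim_n F_n^{(i)}$; since $F_n^{(1)}+F_n^{(2)}$ is again finite-rank and $\|K_1+K_2-(F_n^{(1)}+F_n^{(2)})\|\to 0$, the same proposition returns $K_1+K_2\in\B_0(\vr)$ (a direct diagonal-sequence argument works equally well). For stability under scalars, observe that the left-multiplication operator $\qu\Iop\colon\phi\mapsto\qu\phi$ lies in $\B(\vr)$ with norm $|\qu|$ by Proposition \ref{lft_mul}(a),(b), and that $\qu K=(\qu\Iop)K$ while $K\qu=K(\qu\Iop)$ by the very definitions \eqref{lft_mul-op} and \eqref{rgt_mul-op}; hence Proposition \ref{PP2} immediately yields $\qu K,\,K\qu\in\B_0(\vr)$. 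The two-sided ideal property is then nothing but Proposition \ref{PP2} read directly: for $A\in\B(\vr)$ and $K\in\B_0(\vr)$ both $AK$ and $KA$ are compact.

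Closedness is the one clause deserving care, but here too Proposition \ref{PC1} does the work, since it identifies $\B_0(\vr)$ with the operator-norm closure of the finite-rank operators, and a closure is by definition a closed set. Should a self-contained argument be preferred, I would run the classical diagonal argument: given compact $K_n\to K$ in norm and a bounded sequence $\{\phi_m\}$, extract nested subsequences on which each $K_n$ converges, diagonalize to obtain $\{\phi_{m_j}\}$, and estimate
$$\|K\phi_{m_j}-K\phi_{m_k}\|\le\|(K-K_n)\phi_{m_j}\|+\|K_n\phi_{m_j}-K_n\phi_{m_k}\|+\|(K_n-K)\phi_{m_k}\|$$
to conclude that $\{K\phi_{m_j}\}$ is Cauchy, hence convergent by completeness of $\vr$. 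The only quaternionic inputs are the real-valued triangle inequality and $\|\qu\phi\|=|\qu|\,\|\phi\|$, both of which hold verbatim, so non-commutativity causes no difficulty.

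Finally, stability under adjunction is exactly Corollary \ref{CD1}: $K\in\B_0(\vr)$ implies $K^\dagger\in\B_0(\vr)$. Assembling these pieces gives the theorem. I expect the closedness clause to be the only genuinely non-formal step — as in the complex case — but because Proposition \ref{PC1} already furnishes the finite-rank approximation characterization, that obstacle is effectively dissolved, and the argument reduces to a clean combination of the cited results.
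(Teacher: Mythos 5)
Your argument is correct, but there is nothing in the paper to compare it against: Theorem \ref{ET1} is imported verbatim from the reference \cite{Fa} and the paper offers no proof of it at all. What you have written is in effect the proof the paper leaves implicit, assembled entirely from results it does state: sums of compacts via the finite-rank approximation of Proposition \ref{PC1} (or the direct nested-subsequence extraction), the two-sided ideal property as a direct reading of Proposition \ref{PP2}, closedness from the identification $\B_0(\vr)=\overline{\{\text{finite-rank operators}\}}$ that Proposition \ref{PC1} provides (together with Proposition \ref{PD22} for the reverse inclusion), and stability under $\dagger$ from Corollary \ref{CD1}. The one genuinely quaternionic point is scalar stability, since the left multiplication is basis-dependent and $\B(\vr)$ is only a two-sided module after a Hilbert basis is fixed; your reduction of $\qu K=(\qu\Iop)K$ and $K\qu=K(\qu\Iop)$ to Proposition \ref{PP2}, using that $\qu\Iop$ is a bounded right-linear operator of norm $|\qu|$ by Proposition \ref{lft_mul}, handles this correctly and is exactly the step a complex-variable template would miss. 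So your proposal is a sound, self-contained substitute for the citation; it is not a divergence from the paper's method so much as a filling-in of a proof the paper delegates to the literature.
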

On the quotient space $B(\vr)/B_0(\vr)$ the coset of $A\in\B(\vr)$ is
$$[A]=\{S\in\B(\vr)~|~S=A+K~~~~\text{for some}~~~K\in\B_0(\vr)\}=A+\B_0(\vr).$$
On the quotient space define the product
$$[A][B]=[AB].$$
Since $\B_0(\vr)$ is a closed subspace of $\B(\vr)$, with the above product, $\B(\vr)/\B_0(\vr)$ is a unital Banach algebra with unit $[\Iop]$. We call this algebra the quaternionic Calkin algebra. Define the natural quotient map
$$\pi:\B(\vr)\longrightarrow \B(\vr)/\B_0(\vr)\quad\text{by}\quad \pi(A)=[A]=A+\B_0(\vr).$$
Note that $[0]=\B_0(\vr)$ and hence
$$\kr(\pi)=\{A\in\B(\vr)~~|~~\pi(A)=[0]\}=\B_0(\vr).$$
Since $\B_0(\vr)$ is an ideal of $\B(\vr)$, for $A,B\in\B(\vr)$, we have
\begin{enumerate}
\item[(a)]$\pi(A+B)=(A+B)+\B_0(\vr)=(A+\B_0(\vr))+(B+\B_0(\vr))=\pi(A)+\pi(B)$.
\item[(b)]$\pi(AB)=AB+\B_0(\vr)=(A+\B_0(\vr))(B+\B_0(\vr))=\pi(A)\pi(B).$
\item[(c)]$\pi(\Iop)=[\Iop]$.
\end{enumerate}
Hence $\pi$ is a unital homomorphism. The norm on $\B(\vr)/\B_0(\vr)$ is given by
$$\|[A]\|=\inf_{K\in\B_0(\vr)}\|A+K\|\leq\|A\|.$$
Therefore $\pi$ is a contraction.
\begin{definition}\label{ED1}
The essential $S$-spectrum (or the Calkin $S$-spectrum) $\se(A)$ of $A\in\B(\vr)$ is the $S$-spectrum of $\pi(A)$ in the unital Banach algebra $B(\vr)/\B_0(\vr)$. That is, $$\se(A)=\sigma_S(\pi(A)).$$ Similarly, the left essential $S$-spectrum $\sel(A)$ and the right essential $S$-spectrum $\ser(A)$ are the left and right $S$-spectrum of $\pi(A)$ respectively. That is,
$$\sel(A)=\sigma_l^S(\pi(A))\quad\text{and}\quad \ser(A)=\sigma_r^S(\pi(A))$$ in  $B(\vr)/\B_0(\vr)$.\\
Clearly, by definition, $\se(A)=\sel(A)\cup\ser(A)$ and $\se(A)$ is a compact subset of $\quat$.
\end{definition}
\begin{proposition} \cite{MT}\label{EP1}
Let $A\in\B(\vr)$, then
\begin{eqnarray}
\sel(A)&=&\{\qu\in\quat~|~R_\qu(A)\in\B(\vr)\setminus\mathcal{F}_l(\vr)\}\label{EE1}\\
\ser(A)&=&\{\qu\in\quat~|~R_\qu(A)\in\B(\vr)\setminus\mathcal{F}_r(\vr)\}\label{EE2}
\end{eqnarray}
\end{proposition}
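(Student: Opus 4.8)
The plan is to push everything through the quotient homomorphism $\pi:\B(\vr)\longrightarrow\B(\vr)/\B_0(\vr)$ and unwind the definition $\sel(A)=\sigma_l^S(\pi(A))$, $\ser(A)=\sigma_r^S(\pi(A))$. The first observation I would record is that the pseudo-resolvent commutes with $\pi$: since $\pi$ is a unital homomorphism and the coefficients $2\,\text{Re}(\qu)$ and $|\qu|^2$ are \emph{real},
\begin{equation*}
R_\qu(\pi(A))=\pi(A)^2-2\,\text{Re}(\qu)\pi(A)+|\qu|^2[\Iop]=\pi\big(A^2-2\,\text{Re}(\qu)A+|\qu|^2\Iop\big)=\pi(R_\qu(A)).
\end{equation*}
Hence, by the definition of the left and right $S$-spectrum in the Calkin algebra, $\qu\in\sel(A)$ precisely when $\pi(R_\qu(A))$ fails to be left invertible in $\B(\vr)/\B_0(\vr)$, and $\qu\in\ser(A)$ precisely when $\pi(R_\qu(A))$ fails to be right invertible.

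The heart of the argument is the equivalence, for $T\in\B(\vr)$, that $\pi(T)$ is left invertible in $\B(\vr)/\B_0(\vr)$ if and only if $T\in\FF_l(\vr)$. For the forward direction, suppose $[S]$ satisfies $[S][T]=[\Iop]$; choosing a representative $S$ this says $ST-\Iop\in\B_0(\vr)$, i.e. $ST=\Iop+K_1$ with $K_1$ compact, which is exactly the defining condition of a left semi-Fredholm operator in Remark \ref{FR4}(a). Conversely, if $T\in\FF_l(\vr)$ there are $B\in\B(\vr)$ and a compact $K_1$ with $BT=\Iop+K_1$; applying $\pi$ and using $\pi(K_1)=[0]$ gives $\pi(B)\pi(T)=[\Iop]$, so $\pi(T)$ is left invertible. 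The same coset computation with the product reversed, together with Remark \ref{FR4}(b), shows $\pi(T)$ is right invertible if and only if $T\in\FF_r(\vr)$.

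Combining the two steps with $T=R_\qu(A)$ yields $\qu\in\sel(A)\iff R_\qu(A)\notin\FF_l(\vr)$ and $\qu\in\ser(A)\iff R_\qu(A)\notin\FF_r(\vr)$, which are exactly \eqref{EE1} and \eqref{EE2}. The computations are all routine; the only points demanding care are verifying that $R_\qu$ genuinely commutes with $\pi$ (which hinges on the coefficients being real, so that no basis-dependent left multiplication by quaternions intervenes) and handling the coset arithmetic cleanly in both directions of the invertibility/semi-Fredholm equivalence. I would expect this semi-Fredholm characterization to be where the substantive content sits, although it reduces almost immediately to Remark \ref{FR4}; the rest is bookkeeping with the homomorphism $\pi$.
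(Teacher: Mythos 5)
Your proof is correct and follows exactly the intended route: the paper itself only cites \cite{MT} for this proposition, and the standard argument is precisely yours --- observe that $\pi(R_\qu(A))=R_\qu(\pi(A))$ because the coefficients of the pseudo-resolvent are real, then identify left (resp.\ right) invertibility of a coset $[T]$ in the Calkin algebra with the defining condition $BT=\Iop+K_1$ (resp.\ $TB=\Iop+K_2$) of Remark \ref{FR4}. Nothing is missing; the semi-Fredholm equivalence you isolate is exactly what Remark \ref{FR4} was set up to deliver.
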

\begin{corollary} \cite{MT}(Atkinson theorem)\label{EC2}
Let $A\in\B(\vr)$, then
\begin{equation}\label{EE3}
\se(A)=\{\qu\in\quat~~~|~~~R_\qu(A)\in\B(\vr)\setminus\mathcal{F}(\vr)\}.
\end{equation}
\end{corollary}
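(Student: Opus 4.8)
The plan is to reduce the statement to the semi-Fredholm characterizations of the left and right essential S-spectra that are already in hand, rather than re-proving Atkinson's theorem from scratch. The starting point is the decomposition recorded in Definition \ref{ED1}, namely $\se(A)=\sel(A)\cup\ser(A)$, so it suffices to combine the two descriptions of these pieces into a single condition on $R_\qu(A)$.

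First I would invoke Proposition \ref{EP1}, which already identifies
$$\sel(A)=\{\qu\in\quat~|~R_\qu(A)\in\B(\vr)\setminus\mathcal{F}_l(\vr)\}\quad\text{and}\quad\ser(A)=\{\qu\in\quat~|~R_\qu(A)\in\B(\vr)\setminus\mathcal{F}_r(\vr)\}.$$
Taking the union, a quaternion $\qu$ lies in $\se(A)$ precisely when $R_\qu(A)$ fails to be left semi-Fredholm or fails to be right semi-Fredholm. By the elementary observation that the complement of an intersection is the union of the complements, this is equivalent to $R_\qu(A)\notin\mathcal{F}_l(\vr)\cap\mathcal{F}_r(\vr)$. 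I would then apply Remark \ref{FR4}(c), which records the identity $\mathcal{F}(\vr)=\mathcal{F}_l(\vr)\cap\mathcal{F}_r(\vr)$; substituting it yields
$$\se(A)=\{\qu\in\quat~|~R_\qu(A)\notin\mathcal{F}(\vr)\}=\{\qu\in\quat~|~R_\qu(A)\in\B(\vr)\setminus\mathcal{F}(\vr)\},$$
which is exactly the asserted formula \eqref{EE3}.

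Since every step is a direct appeal to an already-established result, I do not anticipate a genuine obstacle. The only point demanding a little care is the logical bookkeeping in the De Morgan step: one must confirm that ``$R_\qu(A)$ is not left semi-Fredholm or not right semi-Fredholm'' is genuinely the negation of ``$R_\qu(A)$ is both left and right semi-Fredholm,'' i.e. of ``$R_\qu(A)$ is Fredholm,'' so that no spurious or missing quaternions enter the union. An alternative, more intrinsic route would bypass the semi-Fredholm decomposition altogether: since $\pi$ is a unital homomorphism and $\text{Re}(\qu),|\qu|^2\in\R$, one has $\pi(R_\qu(A))=R_\qu(\pi(A))$, so that $\qu\in\se(A)=\sigma_S(\pi(A))$ if and only if $\pi(R_\qu(A))$ is non-invertible in the Calkin algebra; Theorem \ref{FT3} then equates invertibility of $\pi(R_\qu(A))$ with $R_\qu(A)$ being Fredholm, closing the argument along exactly the lines of the classical Atkinson theorem. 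Either path delivers the conclusion.
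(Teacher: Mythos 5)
Your primary argument is correct and is exactly the intended derivation: the paper presents this as a corollary of Proposition \ref{EP1}, obtained by taking the union $\se(A)=\sel(A)\cup\ser(A)$, applying De Morgan, and invoking $\mathcal{F}(\vr)=\mathcal{F}_l(\vr)\cap\mathcal{F}_r(\vr)$ from Remark \ref{FR4}(c). Your alternative route through the Calkin algebra and Theorem \ref{FT3} is also sound, but the first path is the one the paper's structure indicates.
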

\begin{proposition}\label{EP4}
For $A\in\B(\vr)$, $\se(A)\not=\emptyset$ if and only if $\dim(\vr)=\infty.$
\end{proposition}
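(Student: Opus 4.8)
The plan is to reduce both implications to the Atkinson-type characterization in Corollary \ref{EC2}, namely $\se(A)=\{\qu\in\quat\mid R_\qu(A)\in\B(\vr)\setminus\FF(\vr)\}$, together with the identity $\se(A)=\sigma_S(\pi(A))$ that expresses the essential S-spectrum as the S-spectrum of the image of $A$ in the quaternionic Calkin algebra $\B(\vr)/\B_0(\vr)$.

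For the direction $\dim(\vr)<\infty\Rightarrow\se(A)=\emptyset$, I would invoke Remark \ref{ER2}(g): when $\vr$ is finite-dimensional, every $T\in\B(\vr)$ is Fredholm of index $0$. Hence $R_\qu(A)\in\FF(\vr)$ for every $\qu\in\quat$, and Corollary \ref{EC2} immediately gives $\se(A)=\emptyset$. Equivalently, in finite dimensions $\B_0(\vr)=\B(\vr)$, so the Calkin algebra degenerates to the zero algebra and $\sigma_S(\pi(A))=\emptyset$ for trivial reasons.

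For the converse $\dim(\vr)=\infty\Rightarrow\se(A)\neq\emptyset$, the structural point is that the Calkin algebra is now non-trivial. By the quaternionic analogue of Riesz's lemma, the identity $\Iop$ fails to be compact once $\vr$ is infinite-dimensional, so $\Iop\notin\B_0(\vr)$ and therefore $[\Iop]\neq[0]$. Thus, passing Theorem \ref{ET0} to the quotient, $\B(\vr)/\B_0(\vr)$ is a unital quaternionic two-sided Banach $C^*$-algebra whose unit is nonzero. I would then appeal to the non-emptiness of the S-spectrum: exactly as Proposition \ref{PP1} asserts $\sigma_S(T)\neq\emptyset$ for a bounded operator, the S-spectrum of any element of a non-trivial unital quaternionic Banach algebra is non-empty. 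Applying this to $\pi(A)$ yields $\se(A)=\sigma_S(\pi(A))\neq\emptyset$.

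The main obstacle is precisely this last step, since Proposition \ref{PP1} is phrased for operators on a Hilbert space while $\pi(A)$ lives in the abstract Calkin algebra. One must either establish non-emptiness of the S-spectrum directly at the level of quaternionic Banach $C^*$-algebras, or represent the Calkin algebra concretely on some right quaternionic Hilbert space so that Proposition \ref{PP1} applies verbatim. The direct route mirrors the complex Gelfand--Mazur argument: the map $\qu\mapsto R_\qu(\pi(A))^{-1}$ is a slice-regular, $\B(\vr)/\B_0(\vr)$-valued function on $\rho_S(\pi(A))$ that decays as $|\qu|\to\infty$, so were the S-spectrum empty this function would be entire and vanishing at infinity, hence identically $0$ by a Liouville-type theorem, contradicting $[\Iop]\neq[0]$. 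Checking slice-regularity and the Liouville step in the quaternionic setting is the only genuinely non-routine ingredient; everything else follows from the cited results.
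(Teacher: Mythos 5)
The paper actually states Proposition \ref{EP4} without giving any proof (it is one of the results in Section 5 left unproved and uncited), so there is no in-paper argument to compare yours against; I can only assess your proposal on its own terms. Your proposal follows the standard route and is correct in outline. The finite-dimensional direction is exactly as you say: Remark \ref{ER2}(g) gives $R_\qu(A)\in\FF(\vr)$ for all $\qu$, so Corollary \ref{EC2} yields $\se(A)=\emptyset$; equivalently $\B_0(\vr)=\B(\vr)$ and the Calkin algebra is trivial. For the converse you correctly isolate the one genuinely non-trivial ingredient: $[\Iop]\neq[0]$ when $\dim(\vr)=\infty$ (the identity is not compact), and then the non-emptiness of the S-spectrum of the element $\pi(A)$ of the non-trivial unital quaternionic two-sided Banach algebra $\B(\vr)/\B_0(\vr)$.

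Two remarks on that last step. First, you do not need to re-derive the Liouville/Gelfand--Mazur argument yourself: the non-emptiness and compactness of the S-spectrum for elements of a unital quaternionic two-sided Banach algebra is already available in the references the paper leans on (\cite{Fab2}, \cite{NFC}); note that the paper's own Definition \ref{ED1} already asserts compactness of $\se(A)=\sigma_S(\pi(A))$ by implicitly appealing to this Banach-algebra-level theory, so invoking non-emptiness from the same source is the consistent move. Second, if you do want to run the slice-regularity argument directly, be careful that the pseudo-resolvent inverse $R_\qu(\pi(A))^{-1}$ is not itself the slice-regular object; the S-resolvents are $S_L^{-1}(\qu,a)=-R_\qu(a)^{-1}(a-\overline{\qu})$ and its right analogue, and the decay at infinity plus the quaternionic Liouville theorem are applied to those. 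With either the citation or that correction, your proof is complete; as written, the only gap is that you lean on Proposition \ref{PP1}, which is stated for operators on $\vr$ rather than for abstract algebra elements, and you yourself flag this.
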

\begin{proposition} \cite{MT}\label{EP5}
For every $A\in\B(\vr)$ and $K\in\B_0(\vr)$, we have $\se(A+K)=\se(A)$. In the same way, $\sel(A+K)=\sel(A)$ and $\ser(A+K)=\ser(A)$.
\end{proposition}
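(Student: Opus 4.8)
The plan is to exploit the fact that, by construction, $\B_0(\vr)$ is precisely the kernel of the quotient map $\pi:\B(\vr)\to\B(\vr)/\B_0(\vr)$, so that adding a compact operator cannot change the image in the Calkin algebra. First I would recall that all three essential $S$-spectra are defined purely through the element $\pi(A)$ of the Calkin algebra, namely $\se(A)=\sigma_S(\pi(A))$, $\sel(A)=\sigma_l^S(\pi(A))$ and $\ser(A)=\sigma_r^S(\pi(A))$, each computed in the unital Banach algebra $\B(\vr)/\B_0(\vr)$. Hence it suffices to show that $\pi(A+K)=\pi(A)$ whenever $K\in\B_0(\vr)$.

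The key step is then immediate from the additivity of $\pi$ together with the identification $\kr(\pi)=\B_0(\vr)$: since $K\in\B_0(\vr)$ we have $\pi(K)=[0]$, and therefore
$$\pi(A+K)=\pi(A)+\pi(K)=\pi(A)+[0]=\pi(A).$$
Consequently $\pi(A+K)$ and $\pi(A)$ are the very same element of the Calkin algebra, so their left, right, and two-sided $S$-spectra computed in that algebra must agree. This yields $\se(A+K)=\se(A)$, $\sel(A+K)=\sel(A)$ and $\ser(A+K)=\ser(A)$ simultaneously.

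There is essentially no obstacle here: once the essential $S$-spectra are phrased intrinsically in the Calkin algebra, invariance under compact perturbation is a formal consequence of $K$ lying in $\kr(\pi)$. The only point that warrants a line of verification is that the pseudo-resolvent is compatible with the quotient, i.e. that $\pi(R_\qu(A))=R_\qu(\pi(A))$ for every $\qu\in\quat$; but this holds because $\pi$ is a unital homomorphism, so it sends $A^2-2\text{Re}(\qu)A+|\qu|^2\Iop$ to $\pi(A)^2-2\text{Re}(\qu)\pi(A)+|\qu|^2[\Iop]$, and this compatibility is already built into the definition of $\sigma_S(\pi(A))$.
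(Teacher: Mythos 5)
Your argument is correct: since the paper defines $\se(A)=\sigma_S(\pi(A))$, $\sel(A)=\sigma_l^S(\pi(A))$ and $\ser(A)=\sigma_r^S(\pi(A))$ directly in the Calkin algebra, and $\pi(A+K)=\pi(A)$ because $\kr(\pi)=\B_0(\vr)$, the invariance is immediate, and you rightly flag the one point needing verification, namely $\pi(R_\qu(A))=R_\qu(\pi(A))$, which holds since $\pi$ is a unital homomorphism and the coefficients $2\text{Re}(\qu)$, $|\qu|^2$ are real. The paper itself gives no proof (it cites \cite{MT}), but your route is exactly the one its Definition \ref{ED1} sets up; the only alternative would be the Atkinson-style argument via Proposition \ref{EP1} and Theorem \ref{FT4}, writing $R_\qu(A+K)=R_\qu(A)+K_1$ with $K_1$ compact, as the paper does in proving Proposition \ref{PK1}.
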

\begin{definition}\label{ED2}
Let $A\in\B(\vr)$ and $k\in\Z\setminus\{0\}$. Define,
$$\sigma_k^S(A)=\{\qu\in\quat~~|~~R_\qu(A)\in\FF(\vr)\quad\text{and}\quad \ind(R_\qu(A))=k\}.$$
Also
$$\sigma_0^S=\{\qu\in\sigma_S(A)~~|~~R_\qu(A)\in\Wy(\vr)\}.$$
\end{definition}
\begin{proposition} \cite{MT}\label{EP7}
Let $A\in\B(\vr)$, then $\displaystyle\sigma_S(A)=\se(A)\cup\bigcup_{k\in\Z}\sigma_k^S(A).$
\end{proposition}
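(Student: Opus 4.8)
The plan is to prove the two set inclusions by a direct case analysis on the pseudo-resolvent $R_\qu(A)$, using Atkinson's theorem (Corollary \ref{EC2}) to detect the Fredholm property and Corollary \ref{CI1} to tie invertibility to a vanishing index. The whole argument is bookkeeping: for each $\qu\in\quat$ the operator $R_\qu(A)$ is either Fredholm or not, and when it is Fredholm it carries a well-defined integer index, so the claimed decomposition is just a sorting of $\sigma_S(A)$ according to these two data.

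For the inclusion $\sigma_S(A)\subseteq\se(A)\cup\bigcup_{k\in\Z}\sigma_k^S(A)$, I would fix $\qu\in\sigma_S(A)$, so that $R_\qu(A)$ fails to be invertible in $\B(\vr)$. If $R_\qu(A)\notin\FF(\vr)$, then Corollary \ref{EC2} gives $\qu\in\se(A)$. Otherwise $R_\qu(A)\in\FF(\vr)$; setting $k=\ind(R_\qu(A))$, either $k\neq 0$, whence $\qu\in\sigma_k^S(A)$ by Definition \ref{ED2}, or $k=0$, in which case $R_\qu(A)\in\Wy(\vr)$ and, since $\qu\in\sigma_S(A)$, Definition \ref{ED2} yields $\qu\in\sigma_0^S(A)$. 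In every case $\qu$ lies in the right-hand side.

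For the reverse inclusion I would check each member of the union separately. If $\qu\in\se(A)$, then by Corollary \ref{EC2} the operator $R_\qu(A)$ is not Fredholm; since every invertible operator is Fredholm (Corollary \ref{CI1}), $R_\qu(A)$ is not invertible and hence $\qu\in\sigma_S(A)$. If $\qu\in\sigma_k^S(A)$ with $k\neq 0$, then $R_\qu(A)$ is Fredholm with nonzero index, so it cannot be invertible, as invertibility would force $\ind(R_\qu(A))=0$ by Corollary \ref{CI1}; thus again $\qu\in\sigma_S(A)$. Finally, if $\qu\in\sigma_0^S(A)$, the membership $\qu\in\sigma_S(A)$ is built into Definition \ref{ED2}.

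There is no real obstacle here, but the one point that warrants an explicit remark is the role of the index-zero stratum. The definition of $\sigma_0^S(A)$ deliberately restricts to $\qu\in\sigma_S(A)$ in order to exclude the resolvent set $\rho_S(A)$, on which $R_\qu(A)$ is invertible and therefore a Weyl operator of index $0$. Without this restriction the union would spill over into $\rho_S(A)$ and violate the equality, so I would flag this as the reason the two sides agree exactly rather than merely up to $\rho_S(A)$.
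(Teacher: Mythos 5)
Your proof is correct. The paper states Proposition \ref{EP7} without proof, citing \cite{MT}, so there is no in-text argument to compare against; your case analysis --- sorting each $\qu\in\sigma_S(A)$ by whether $R_\qu(A)$ is Fredholm (via Corollary \ref{EC2}) and, if so, by its index, with Corollary \ref{CI1} handling the reverse inclusion --- is exactly the standard argument the cited reference uses, and your remark that the restriction to $\sigma_S(A)$ in the definition of $\sigma_0^S(A)$ is what keeps the union from spilling into $\rho_S(A)$ identifies the only genuinely delicate point.
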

\begin{definition}\label{Din} For $A\in\B(\vr)$, we define
$$\sigma_{+\infty}^S(A)=\{\qu\in\quat~|~R_\qu(A)\in\mathcal{SF}(\vr)\quad\text{and}\quad \ind(R_\qu(A))=+\infty\},$$
$$\sigma_{-\infty}^S(A)=\{\qu\in\quat~|~R_\qu(A)\in\mathcal{SF}(\vr)\quad\text{and}\quad \ind(R_\qu(A))=-\infty\}.$$
\end{definition}
\begin{proposition}\label{Pin}
For $A\in\B(\vr)$ we have $$\sigma_{+\infty}^S(A)\cup\sigma_{-\infty}^S(A)=\{\qu\in\sigma_S(A)~~|~~R_\qu(A)\in\mathcal{SF}(\vr)\setminus\FF(\vr)\}.$$
\end{proposition}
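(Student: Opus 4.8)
The plan is to prove the two inclusions separately, using the characterization of left and right semi-Fredholm operators from Theorem \ref{FT6} together with the identities $\mathcal{SF}(\vr)=\FF_l(\vr)\cup\FF_r(\vr)$ and $\FF(\vr)=\FF_l(\vr)\cap\FF_r(\vr)$ (Remark \ref{FR4}). Throughout, I use that every semi-Fredholm operator has closed range, so by Remark \ref{FR1} its cokernel is isomorphic to the kernel of its adjoint and its (extended) index is $\ind(B)=\dim(\kr(B))-\dim(\kr(B^\dagger))$, an element of $\Z\cup\{\pm\infty\}$.

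For the inclusion $\subseteq$, suppose $\qu\in\sigma_{+\infty}^S(A)\cup\sigma_{-\infty}^S(A)$, so that $R_\qu(A)\in\mathcal{SF}(\vr)$ with $\ind(R_\qu(A))=\pm\infty$. Since every Fredholm operator has finite index by Definition \ref{FD2}, we get $R_\qu(A)\notin\FF(\vr)$, whence $R_\qu(A)\in\mathcal{SF}(\vr)\setminus\FF(\vr)$. Moreover $R_\qu(A)$ cannot be invertible, for otherwise Corollary \ref{CI1} would force $R_\qu(A)\in\FF(\vr)$; therefore $\qu\in\sigma_S(A)$. This places $\qu$ in the right-hand set.

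For the reverse inclusion $\supseteq$, take $\qu\in\sigma_S(A)$ with $R_\qu(A)\in\mathcal{SF}(\vr)\setminus\FF(\vr)$. Because $\FF(\vr)=\FF_l(\vr)\cap\FF_r(\vr)$ while $R_\qu(A)\in\FF_l(\vr)\cup\FF_r(\vr)$ but $R_\qu(A)\notin\FF(\vr)$, exactly one of the two memberships holds. If $R_\qu(A)\in\FF_l(\vr)\setminus\FF_r(\vr)$, then by Theorem \ref{FT6} the range is closed and $\dim(\kr(R_\qu(A)))<\infty$, while the failure of the right semi-Fredholm condition (with closed range already granted) forces $\dim(\kr(R_\qu(A)^\dagger))=\infty$; hence $\ind(R_\qu(A))=-\infty$ and $\qu\in\sigma_{-\infty}^S(A)$. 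Symmetrically, if $R_\qu(A)\in\FF_r(\vr)\setminus\FF_l(\vr)$, then $\dim(\kr(R_\qu(A)^\dagger))<\infty$ and $\dim(\kr(R_\qu(A)))=\infty$, so $\ind(R_\qu(A))=+\infty$ and $\qu\in\sigma_{+\infty}^S(A)$. In either case $\qu$ lies in the left-hand union, completing the argument.

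The proof is essentially bookkeeping with the definitions, so I do not anticipate a genuine obstacle. The only point requiring care is the extension of the index to semi-Fredholm operators as a value in $\Z\cup\{\pm\infty\}$, together with the verification that, for an operator that is semi-Fredholm but not Fredholm, \emph{exactly one} of $\dim(\kr(R_\qu(A)))$ and $\dim(\kr(R_\qu(A)^\dagger))$ is infinite. It is precisely this dichotomy, supplied by Theorem \ref{FT6} once closed range is known, that pins the index down to either $+\infty$ or $-\infty$ and rules out an indeterminate $\infty-\infty$.
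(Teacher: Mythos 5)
Your argument is correct and follows essentially the same route as the paper: both hinge on Theorem \ref{FT6} to identify $\ind(R_\qu(A))=+\infty$ with $R_\qu(A)\in\FF_r(\vr)\setminus\FF_l(\vr)$ and $\ind(R_\qu(A))=-\infty$ with $R_\qu(A)\in\FF_l(\vr)\setminus\FF_r(\vr)$, and then observe that the union of these two cases is exactly $\mathcal{SF}(\vr)\setminus\FF(\vr)$ intersected with the S-spectrum. The only cosmetic difference is that the paper records the intermediate identifications $\sigma_{+\infty}^S(A)=\sel(A)\setminus\ser(A)$ and $\sigma_{-\infty}^S(A)=\ser(A)\setminus\sel(A)$ to get the containment in $\se(A)\subseteq\sigma_S(A)$, whereas you argue non-invertibility directly via Corollary \ref{CI1}; both are fine.
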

\begin{proof}
We have
\begin{eqnarray*}
\sigma_{+\infty}^S(A)&=&\{\qu\in\quat~|~R_\qu(A)\in\mathcal{SF}(\vr)\quad\text{and}\quad \ind(R_\qu(A))=+\infty\}\\
&=&\{\qu\in\quat~|~R_\qu(A)\in\mathcal{SF}(\vr)\quad\text{and}\quad \dim(\kr(R_\qu(A)))=+\infty\}\\
&=&\{\qu\in\quat~|~R_\qu(A)\in\mathcal{F}_r(\vr)\setminus\mathcal{F}_l(\vr)\}\quad\text{by theorem}~\ref{FT6}\\
&=&\sel(A)\setminus\ser(A)\subseteq\se(A)\subseteq\sigma_S(A).
\end{eqnarray*}
Similarly
\begin{eqnarray*}
\sigma_{-\infty}^S(A)&=&\{\qu\in\quat~|~R_\qu(A)\in\mathcal{SF}(\vr)\quad\text{and}\quad \ind(R_\qu(A))=-\infty\}\\
&=&\{\qu\in\quat~|~R_\qu(A)\in\mathcal{SF}(\vr)\quad\text{and}\quad \dim(\kr(R_{\oqu}(A^\dagger)))=+\infty\}\\
&=&\{\qu\in\quat~|~R_\qu(A)\in\mathcal{F}_l(\vr)\setminus\mathcal{F}_r(\vr)\}\quad\text{by theorem}~\ref{FT6}\\
&=&\ser(A)\setminus\sel(A)\subseteq\se(A)\subseteq\sigma_S(A).
\end{eqnarray*}
Therefore we get
$$\sigma_{+\infty}^S(A)\cup\sigma_{-\infty}^S(A)=\{\qu\in\sigma_S(A)~~|~~R_\qu(A)\in\mathcal{SF}(\vr)\setminus\FF(\vr)\}.$$
\end{proof}
\begin{proposition}\label{Pes}
Let $A\in\B(\vr)$, then we have
$$\se(A)=\left(\sel(A)\cap\ser(A)\right)\cup\sigma_{+\infty}^S(A)\cup\sigma_{-\infty}^S(A)$$
with
$\left(\sel(A)\cap\ser(A)\right)\cap\left(\sigma_{+\infty}^S(A)\cup\sigma_{-\infty}^S(A)\right)=\emptyset.$
\end{proposition}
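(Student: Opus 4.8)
The plan is to reduce the proposition to an elementary set-theoretic identity once $\sigma_{+\infty}^S(A)$ and $\sigma_{-\infty}^S(A)$ have been re-expressed through $\sel(A)$ and $\ser(A)$. In fact the chain of equalities already carried out in the proof of Proposition \ref{Pin} yields
$$\sigma_{+\infty}^S(A)=\sel(A)\setminus\ser(A),\qquad \sigma_{-\infty}^S(A)=\ser(A)\setminus\sel(A),$$
because the condition $R_\qu(A)\in\FF_r(\vr)\setminus\FF_l(\vr)$ is, via Proposition \ref{EP1}, exactly the statement that $\qu\in\sel(A)$ and $\qu\notin\ser(A)$, and symmetrically for the minus-infinity set. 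I would record these two identities as the opening step.

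Abbreviating $L=\sel(A)$ and $R=\ser(A)$, the right-hand side of the claimed equality is then $(L\cap R)\cup(L\setminus R)\cup(R\setminus L)$. For the union I would use $(L\cap R)\cup(L\setminus R)=L$, followed by $L\cup(R\setminus L)=L\cup R$, so that the right-hand side collapses to $L\cup R$. By Definition \ref{ED1} this union is precisely $\se(A)$, establishing the first assertion. For the disjointness statement, I would note that $L\cap R$ meets neither $L\setminus R$ (which contains no point of $R$) nor $R\setminus L$ (which contains no point of $L$); distributing the intersection over the union gives $(L\cap R)\cap\big((L\setminus R)\cup(R\setminus L)\big)=\emptyset$ at once.

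I expect no genuine analytic obstacle here: all the Fredholm-theoretic content — Theorem \ref{FT6} describing the semi-Fredholm classes by closed range together with finite-dimensional kernels, and Proposition \ref{EP1} identifying $\sel$ and $\ser$ as the complements of those classes — has already been folded into Proposition \ref{Pin}. The only place demanding care is the bookkeeping of orientation, namely that $\sigma_{+\infty}^S(A)$ records the failure of the left semi-Fredholm property while the right one persists, and $\sigma_{-\infty}^S(A)$ the reverse; with that fixed, the argument is nothing more than the canonical partition of $L\cup R$ into the three disjoint pieces $L\cap R$, $L\setminus R$ and $R\setminus L$.
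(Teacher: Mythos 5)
Your proof is correct and follows essentially the same route as the paper's: both arguments rest on Proposition \ref{Pin} (whose proof already identifies $\sigma_{+\infty}^S(A)=\sel(A)\setminus\ser(A)$ and $\sigma_{-\infty}^S(A)=\ser(A)\setminus\sel(A)$) together with the canonical disjoint partition of $\se(A)=\sel(A)\cup\ser(A)$ into $\sel(A)\cap\ser(A)$ and the two set differences. The paper merely phrases the two pieces as the loci where $R_\qu(A)\not\in\mathcal{SF}(\vr)$ and where $R_\qu(A)\in\mathcal{SF}(\vr)\setminus\FF(\vr)$, while you make the underlying set algebra explicit; the content is the same.
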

\begin{proof}
Since $\se(A)=\sel(A)\cup\ser(A)\subseteq\sigma_S(A)$, we get by proposition \ref{EP1},
$$\sel(A)\cap\ser(A)=\{\qu\in\quat~|~R_\qu(A)\not\in\mathcal{SF}(\vr)\}\subseteq\se(A),$$
and  by proposition \ref{Pin} we get
$$\sigma_{+\infty}^S(A)\cup\sigma_{-\infty}^S(A)=\{\qu\in\sigma_S(A)~~|~~R_\qu(A)\in\mathcal{SF}(\vr)\setminus\FF(\vr)\}.$$
Therefore
$$\se(A)=\left(\sel(A)\cap\ser(A)\right)\cup\sigma_{+\infty}^S(A)\cup\sigma_{-\infty}^S(A),$$
and $\left(\sel(A)\cap\ser(A)\right)\cap\left(\sigma_{+\infty}^S(A)\cup\sigma_{-\infty}^S(A)\right)=\emptyset.$
\end{proof}
\begin{proposition}\label{PK1}
Let $\overline{\Z}=\Z\cup\{+\infty,-\infty\}$. For $A\in\B(\vr)$, $K\in\B_0(\vr)$, and $k\in\overline{\Z}\setminus\{0\}$ we have $\sigma_k^S(A+K)=\sigma_k^S(A)$.
\end{proposition}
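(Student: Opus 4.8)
The plan is to reduce the whole statement to one algebraic observation: for each fixed $\qu\in\quat$, the pseudo-resolvent $R_\qu(A+K)$ differs from $R_\qu(A)$ by a compact operator. First I would expand
$$R_\qu(A+K)=(A+K)^2-2\text{Re}(\qu)(A+K)+|\qu|^2\Iop$$
and subtract $R_\qu(A)=A^2-2\text{Re}(\qu)A+|\qu|^2\Iop$, so that the $|\qu|^2\Iop$ terms cancel and
$$R_\qu(A+K)-R_\qu(A)=AK+KA+K^2-2\text{Re}(\qu)K=:K'.$$
Each of the four summands is compact: $AK$ and $KA$ are compact by Proposition \ref{PP2}, $K^2=KK$ is compact by the same proposition (with $K$ playing the role of the bounded factor), and $2\text{Re}(\qu)K$ is a real scalar multiple of a compact operator. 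Since $\B_0(\vr)$ is a two-sided ideal, hence in particular a linear subspace, by Theorem \ref{ET1}, the sum $K'$ lies in $\B_0(\vr)$. Thus $R_\qu(A+K)=R_\qu(A)+K'$ with $K'\in\B_0(\vr)$.

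For the finite case $k\in\Z\setminus\{0\}$ I would argue directly with the index. If $\qu\in\sigma_k^S(A)$, then $R_\qu(A)\in\FF(\vr)$ with $\ind(R_\qu(A))=k$; applying Theorem \ref{FT4} to the Fredholm operator $R_\qu(A)$ and the compact operator $K'$ shows that $R_\qu(A+K)=R_\qu(A)+K'$ is again Fredholm with the same index $k$, whence $\qu\in\sigma_k^S(A+K)$. The reverse inclusion follows from the symmetric remark that $A=(A+K)+(-K)$ with $-K\in\B_0(\vr)$: running the identical argument with $A+K$ in place of $A$ and $-K$ in place of $K$ gives $\sigma_k^S(A+K)\subseteq\sigma_k^S(A)$. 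Combining the two inclusions yields $\sigma_k^S(A+K)=\sigma_k^S(A)$ for every finite nonzero $k$.

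For the infinite cases $k=\pm\infty$ I would avoid interpreting ``$\ind=\pm\infty$ is preserved'' directly and instead invoke the characterizations established inside the proof of Proposition \ref{Pin}, namely
$$\sigma_{+\infty}^S(A)=\sel(A)\setminus\ser(A),\qquad \sigma_{-\infty}^S(A)=\ser(A)\setminus\sel(A).$$
By Proposition \ref{EP5} the left and right essential S-spectra are themselves invariant under compact perturbation, i.e. $\sel(A+K)=\sel(A)$ and $\ser(A+K)=\ser(A)$. Substituting these two equalities into the displayed identities immediately gives $\sigma_{+\infty}^S(A+K)=\sigma_{+\infty}^S(A)$ and $\sigma_{-\infty}^S(A+K)=\sigma_{-\infty}^S(A)$, completing all cases of $k\in\overline{\Z}\setminus\{0\}$.

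I expect the only genuine content to be the compactness of $K'$ in the first step; it is short, but it is the hinge on which both the finite and the infinite cases turn. Once $K'$ is known to be compact, the finite case is an immediate application of the index-stability theorem, and the infinite case is handled most cleanly by routing through the left and right essential spectra rather than through a direct index argument — choosing that route, instead of attempting a semi-Fredholm index-preservation statement by hand, is the one place where a little care in selecting the argument avoids unnecessary work.
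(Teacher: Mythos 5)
Your proof is correct and follows essentially the same route as the paper's: write $R_\qu(A+K)=R_\qu(A)+K'$ with $K'$ compact, apply Theorem \ref{FT4} for the finite indices, and handle $k=\pm\infty$ via Propositions \ref{Pin} and \ref{EP5}. Your version is in fact slightly more careful, since you include the $K^2$ term in $K'$ (which the paper's displayed formula for $K_1$ omits) and you make the reverse inclusion for finite $k$ explicit by perturbing $A+K$ by $-K$.
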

\begin{proof}Let
 $A\in\B(\vr)$, $K\in\B_0(\vr)$, then
 $R_\qu(A+K)=R_\qu(A)+K_1$, where $K_1=AK+KA-2\text{Re}(\qu)K$ and, by proposition \ref{PP2}, $K_1\in\B_0(\vr)$. Therefore, for $k\in\Z\setminus\{0\}$, by theorem \ref{FT4}, $R_\qu(A)\in\FF(\vr)$ implies $R_\qu(A+K)\in\FF(\vr)$ and $\ind(R_\qu(A+K))=\ind(R_\qu(A))$. Thus, for $k\in\Z\setminus\{0\}$,
 $$\sigma_k^S(A+K)=\sigma_k^S(A).$$
 Now by propositions \ref{EP5} and \ref{Pin}, we have
 $$\sigma_{+\infty}^S(A+K)=\sigma_{+\infty}^S(A)\quad\text{and}\quad \sigma_{-\infty}^S(A+K)=\sigma_{-\infty}^S(A).$$
\end{proof}
\begin{remark}\label{RK}
Let $A\in\B(\vr)$. The results of proposition \ref{PK1} not true for $\sigma_0^S(A)$. Since $\sigma_0^S(A)=\{\qu\in\sigma_S(A)~|~R_\qu(A)\in\Wy(\vr)\}$ and, according to theorem \ref{FT6},
\begin{eqnarray*}
& &\FF(\vr)=\FF_l(\vr)\cap\FF_r(\vr)\\
&=&\{A\in\B(\vr)~|~\ra(R_\qu(A))~~\text{ closed,}~~\dim(\kr(R_\qu(A)))<\infty~~~\text{and}~~\dim(\kr(R_{\oqu}(A^\dagger)))<\infty\}
\end{eqnarray*}
we can write 
$$\sigma_0^S(A)=\{\qu\in\sigma_S(A)~|~\ra(R_\qu(A))~~\text{ closed,}~~\dim(\kr(R_\qu(A)))=\dim(\kr(R_{\oqu}(A^\dagger)))<\infty\}.$$
Since $\sigma_{pS}(A)=\{\qu\in\quat~|~\kr(R_\qu(A))\not=\{0\}\}$, we have
$$\sigma_0^S(A)=\{\qu\in\sigma_S(A)~|~\ra(R_\qu(A))=\overline{\ra(R_\qu(A))}\not=\vr, \dim(\kr(R_\qu(A)))=\dim(\kr(R_{\oqu}(A^\dagger)))<\infty\}.$$
Therefore, if $\dim(\vr)<\infty$, we have
$$\sigma_0^S(A)=\sigma_{pS}(A)=\sigma_S(A).$$
Suppose that  $\dim(\vr)<\infty$, then $\B(\vr)=\B_0(\vr)$. Since
$$
\Iop^2-2\text{Re}(\qu)\Iop+|\qu|^2\Iop=(1-2\text{Re}(\qu)+|\qu|^2)\Iop
=(1-\qu)(1-\oqu)\Iop,$$
$R_\qu(\Iop)$ is invertible if and only if $\qu\not=1$ and $\oqu\not=1$. That is, $R_\qu(\Iop)$ is invertible if and only if $\qu\not=1$. Thus $\sigma_0^S(\Iop)=\sigma_S(\Iop)=\{1\}.$ Also $R_\qu(\Iop-\Iop)=R_\qu(0)=|\qu|^2\Iop$ is invertible if and only if $\qu\not=0$, thus $\sigma_0^S(0)=\sigma_S(0)=\{0\}.$ That is,
$$\sigma_0^S(\Iop+K)\not=\sigma_0^S(\Iop)$$
with $K=-\Iop$, a compact operator on $\vr$.
\end{remark}	
\section{The Weyl S-spectrum on $\vr$}
In this section we define the S-Weyl spectrum on $\vr$ and give a characterization to the S-spectrum in terms of the Weyl spectrum.
\begin{definition}\label{WD1}
The S-Weyl spectrum of an operator $A\in\B(\vr)$ is the set
$$\ws(A)=\bigcap_{K\in\B_0(\vr)}\sigma_S(A+K).$$
Hence, by the definition, $\ws(A)$ is the largest part of $\sigma_S(A)$ such that $\ws(A+K)=\ws(A)$ for every $K\in\B_0(\vr)$. Since the S-Weyl spectrum is the intersection of compact sets in $\quat$, $\ws(A)$ is a compact subset of $\quat$.
\end{definition}
\begin{definition}\label{WD2}
Let $\iso(A)$ denotes the set of all isolated points of the S-spectrum $\sigma_S(A)$, that is
$$\iso(A)=\{\qu\in\sigma_S(A)~~|~~\qu~~\text{is an isolated point of}~~\sigma_S(A)\}.$$
Its compliment in $\sigma_S(A)$,  $\acc(A)=\sigma_S(A)\setminus\iso(A)$, is the set of all accumulation points. Also we denote $\pi_0(A)=\iso(A)\cap \sigma_0^S(A)$.
\end{definition}
\begin{remark}
By proposition \ref{EP}, the isolated eigenvalues are in fact isolated spheres in $\quat$. However we denote the sphere $[\qu]$ by $\qu$.
\end{remark}
\begin{lemma}\label{WL1}
Let $A\in\B(\vr)$. If $A\in\mathcal{SF}(\vr)$ with $\ind(A)\leq 0$ then there is a compact (in fact a finite rank) operator $K\in\B_0(\vr)$ such that $\kr(A+K)=\{0\}$.
\end{lemma}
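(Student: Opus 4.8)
The plan is to reduce the claim to the finite dimensionality of $\kr(A)$ and then kill that kernel with a finite rank perturbation that sends it into $\ra(A)^\perp$. The first and only delicate step is this reduction. Since $A\in\mathcal{SF}(\vr)=\FF_l(\vr)\cup\FF_r(\vr)$, theorem \ref{FT6} ensures that $\ra(A)$ is closed in either case. I claim $\dim(\kr(A))<\infty$: if instead $A\in\FF_r(\vr)\setminus\FF_l(\vr)$, then $\ra(A)$ is closed but $\dim(\kr(A))=\infty$, so by remark \ref{FR1} $\ind(A)=\dim(\kr(A))-\dim(\kr(A^\dagger))=+\infty$, contradicting $\ind(A)\leq 0$. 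Hence $A\in\FF_l(\vr)$ and $\dim(\kr(A))=:n<\infty$, while $\ind(A)\le 0$ gives $n\le\dim(\kr(A^\dagger))$.

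Next I would construct the perturbation. By proposition \ref{P2} applied to the finite dimensional space $\kr(A)$, fix an orthonormal basis $\{e_1,\dots,e_n\}$ of $\kr(A)$; since $\dim(\kr(A^\dagger))\ge n$, choose orthonormal vectors $f_1,\dots,f_n$ in $\kr(A^\dagger)$, which equals $\ra(A)^\perp$ by proposition \ref{IP30}(a). Set
$$K\phi=\sum_{j=1}^{n}f_j\langle e_j\mid\phi\rangle,\qquad\phi\in\vr.$$
Property (iv) of the inner product gives $K(\phi\bfraka)=(K\phi)\bfraka$, so $K$ is right linear; its range is the right span of $f_1,\dots,f_n$, so $K$ is of finite rank and hence compact by proposition \ref{PD22}, i.e.\ $K\in\B_0(\vr)$.

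Finally I would verify $\kr(A+K)=\{0\}$. If $(A+K)\phi=0$ then $A\phi=-K\phi$; the left side lies in $\ra(A)$, while the right side lies in $\ra(A)^\perp$ because each $f_j\in\ra(A)^\perp$ and that subspace is stable under right multiplication by property (v). As $\ra(A)$ is closed we have $\vr=\ra(A)\oplus\ra(A)^\perp$, so both sides vanish: $A\phi=0$ and $K\phi=0$. The first gives $\phi\in\kr(A)$; the second, together with the right linear independence of the orthonormal family $\{f_j\}$, forces $\langle e_j\mid\phi\rangle=0$ for every $j$, and expanding $\phi$ in the basis $\{e_j\}$ of $\kr(A)$ then yields $\phi=0$.

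The hard part is really just the opening reduction: the index inequality is exactly what rules out the right semi-Fredholm branch with infinite dimensional kernel, and it is the finiteness of $\kr(A)$ (not of $\kr(A^\dagger)$) that keeps $K$ of finite rank. The remainder is the quaternionic transcription of the classical Hilbert space argument, where the only care needed is to keep every scalar factor on the right so that $K$ stays right linear and $\ra(A)^\perp$ remains a right subspace.
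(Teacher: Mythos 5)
Your proposal is correct and follows essentially the same route as the paper: the same finite rank operator $K\phi=\sum_j \psi_j\langle\phi_j\mid\phi\rangle$ sending an orthonormal basis of $\kr(A)$ into $\ra(A)^\perp=\kr(A^\dagger)$, and the same argument that $A\phi=-K\phi\in\ra(A)\cap\ra(A)^\perp=\{0\}$ forces $\phi=0$. The only cosmetic difference is that the paper closes via the isometry $\|\psi\|=\|K\psi\|$ on $\kr(A)$ rather than your coefficient-by-coefficient expansion, and your opening reduction spelling out why $\ind(A)\le 0$ rules out the $\FF_r\setminus\FF_l$ branch is just a more explicit version of the paper's appeal to theorem \ref{FT6}.
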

\begin{proof}
Let $A\in\mathcal{SF}(\vr)$. If $\ind(A)\leq 0$, then $\dim(\kr(A))\leq\dim(\kr(A^\dagger))$, and by  theorem \ref{FT6}, $\dim(\kr(A))<\infty$. Let $\{\phi_i\}_{i=1}^n$ be an orthonormal basis for $\kr(A)$ and let $B$ be an orthonormal basis for $\kr(A^\dagger)=\ra(A)^\perp$, where the cardinality of $B$, $|B|\geq n$. Let $\{\psi_k\}_{k=1}^n\subseteq B$ be an orthonormal set. Define the map $K:\vr\longrightarrow\vr$ by
$$K\phi=\sum_{j=1}^n\psi_j\langle \phi_j|\phi\rangle\quad \text{for each}~~\phi\in\vr,$$
which is clearly right linear. Since
$$\ra(K)\subseteq\text{right}-\quat-\text{span}\{\psi_j\}_{j=1}^n\subseteq\text{right}-\quat-\text{span}~ B=\kr(A^\dagger)=\ra(A)^\perp,$$
$K$ is bounded and finite rank, hence compact. Let $\psi\in\kr(A)$, then by proposition \ref{P1},
$$\|\psi\|^2=\sum_{i=1}^n|\langle\phi_i|\psi\rangle|^2=\|K\psi\|^2.$$
Now, if $\psi\in\kr(A+K)$, then $A\psi=-K\psi$, and therefore,
$$A\psi\in\ra(A)\cap\ra(K)\subseteq\ra(A)\cap\ra(A)^\perp=\{0\}.$$
Thus $A\psi=0$ and $\|\psi\|=\|K\psi\|=\|A\psi\|=0$, which implies $\psi=0$. Hence $\kr(A+K)=\{0\}$.
\end{proof}
\begin{proposition}\label{WP1}
Let $A\in\mathcal{SF}(\vr)$, then $\ind(A)=0$ if and only if there exist a compact operator (in fact, finite rank) operator $K\in\B_0(\vr)$ such that $A+K$ is invertible.
\end{proposition}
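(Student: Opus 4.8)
The plan is to prove both implications by combining Lemma \ref{WL1} with the invariance of the Fredholm index under compact perturbations (Theorem \ref{FT4}). A preliminary observation I would record is that an operator $A\in\mathcal{SF}(\vr)$ with $\ind(A)=0$ is in fact Fredholm: since $\ind(A)=\dim(\kr(A))-\dim(\kr(A^\dagger))$ is finite and equal to zero, both $\dim(\kr(A))$ and $\dim(\kr(A^\dagger))$ are finite, so by Theorem \ref{FT6} and Remark \ref{FR4}(c) we have $A\in\FF_l(\vr)\cap\FF_r(\vr)=\FF(\vr)$; in particular $\ra(A)$ is closed by Proposition \ref{FP1}.

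For the forward direction, suppose $\ind(A)=0$. Since $\ind(A)\leq 0$, Lemma \ref{WL1} furnishes a finite rank operator $K\in\B_0(\vr)$ with $\kr(A+K)=\{0\}$. By Theorem \ref{FT4}, $A+K$ is again Fredholm and $\ind(A+K)=\ind(A)=0$. Combined with $\dim(\kr(A+K))=0$ this yields $\dim(\kr((A+K)^\dagger))=0$, that is $\kr((A+K)^\dagger)=\{0\}$. By Proposition \ref{IP30} (equivalently Remark \ref{FR1}), $\ra(A+K)^\perp=\kr((A+K)^\dagger)=\{0\}$; since $A+K$ is Fredholm its range is closed, and a closed subspace with trivial orthogonal complement is the whole space, so $\ra(A+K)=\vr$. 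Hence $A+K$ is a bounded bijective right linear operator, therefore invertible, i.e. $A+K\in\mathcal{G}(\vr)$.

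For the converse, suppose $A+K$ is invertible for some $K\in\B_0(\vr)$. By Corollary \ref{CI1}, $A+K$ is Fredholm with $\ind(A+K)=0$. Writing $A=(A+K)+(-K)$ with $-K$ compact, Theorem \ref{FT4} shows that $A$ is Fredholm with $\ind(A)=\ind(A+K)=0$, as required.

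The main obstacle is the forward direction: its content is not merely killing the kernel (which Lemma \ref{WL1} already provides), but ensuring that the \emph{same} perturbation simultaneously produces surjectivity. The crucial point that makes this work is that the finite rank perturbation $K$ leaves the index unchanged, so that the vanishing of $\kr(A+K)$ forces the vanishing of $\kr((A+K)^\dagger)$, and hence, via closedness of the range, surjectivity. Everything else is bookkeeping resting on the index theory recalled in Section 4.
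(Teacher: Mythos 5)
Your proof is correct and follows essentially the same route as the paper: Lemma \ref{WL1} to kill the kernel, Theorem \ref{FT4} to preserve the zero index and hence force $\kr((A+K)^\dagger)=\{0\}$, closedness of the range to conclude surjectivity, and the compact-perturbation argument for the converse. The only difference is that you spell out why $\ind(A)=0$ already places $A$ in $\FF(\vr)$, a step the paper asserts without comment.
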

\begin{proof}
If $A\in\mathcal{SF}(\vr)$ with $\ind(A)=0$, then $A\in\FF(\vr)$ and , by lemma \ref{WL1}, there exist a compact (in fact, finite rank) operator $K\in\B_0(\vr)$ such that $\kr(A+K)=\{0\}$. Now, by theorem \ref{FT4}, $A+K\in\FF(\vr)$ and $\ind(A+K)=\ind(A)=0$. Thus $\kr((A+K)^\dagger)=\{0\}$ as $\kr(A+K)=\{0\}$. Therefore, $\kr((A+K)^\dagger)=\ra(A+K)^\perp=\{0\}$, which means $\ra(A+K)=\vr$. Therefore $A+K$ is invertible.\\
Conversely, if there exists $K\in\B_0(\vr)$ such that $A+K$ is invertible, then by remark \ref{ER2} (f), $A+K$ is Weyl. Since $A=(A+K)-K$, by theorem \ref{FT4}, $A\in\FF(\vr)$ and $\ind(A)=0$.
\end{proof}
The following theorem characterizes the S-spectrum in terms of the Weyl operators, see definition \ref{ED2} and proposition \ref{WP2}.
\begin{theorem}\label{WT1}(Schechter Theorem) If $A\in\B(\vr)$, then
$$\ws(A)=\se(A)\cup\bigcup_{k\in\Z\setminus\{0\}}\sk(A)=\sigma_S(A)\setminus\sigma_0^S(A).$$
\end{theorem}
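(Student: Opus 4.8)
The plan is to prove the two equalities separately. The right-hand equality is essentially bookkeeping: by Proposition \ref{EP7} and Definition \ref{ED2}, $\sigma_S(A)$ is the disjoint union of $\se(A)$ (the $\qu$ for which $R_\qu(A)$ is not Fredholm), the sets $\sk(A)$ with $k\neq 0$ (where $R_\qu(A)$ is Fredholm of index $k$), and $\sigma_0^S(A)$ (where $R_\qu(A)$ is Weyl but non-invertible, so $\qu\in\sigma_S(A)$). These three families are pairwise disjoint, since being Fredholm and the value of the index are mutually exclusive alternatives, so deleting $\sigma_0^S(A)$ leaves exactly $\se(A)\cup\bigcup_{k\neq 0}\sk(A)$. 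It therefore remains to show $\ws(A)=\sigma_S(A)\setminus\sigma_0^S(A)$. Throughout I use that, as in the proof of Proposition \ref{PK1}, $R_\qu(A+K)-R_\qu(A)=AK+KA+K^2-2\,\text{Re}(\qu)K$ is compact for every $K\in\B_0(\vr)$ (the term $K^2$ is compact as well), so by Theorem \ref{FT4} the operators $R_\qu(A+K)$ and $R_\qu(A)$ are simultaneously Fredholm with the same index; in particular $\se$ and the $\sk$ with $k\neq0$ are invariant under compact perturbations (Propositions \ref{EP5}, \ref{PK1}).

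For the inclusion $\sigma_S(A)\setminus\sigma_0^S(A)\subseteq\ws(A)$ I would argue by contraposition. If $\qu\notin\ws(A)$ there is a $K\in\B_0(\vr)$ with $R_\qu(A+K)$ invertible, hence Weyl by Remark \ref{ER2}(f). Since $R_\qu(A)=R_\qu(A+K)+(\text{compact})$, Theorem \ref{FT4} shows $R_\qu(A)$ is Weyl, i.e. Fredholm of index $0$; thus $\qu\in\rho_S(A)\cup\sigma_0^S(A)$, which is exactly the complement of $\sigma_S(A)\setminus\sigma_0^S(A)$. This direction is routine.

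The reverse inclusion $\ws(A)\subseteq\sigma_S(A)\setminus\sigma_0^S(A)$ is the heart of the matter. Since $\ws(A)\subseteq\sigma_S(A)$, it suffices to prove $\sigma_0^S(A)\cap\ws(A)=\emptyset$, i.e. for each $\qu\in\sigma_0^S(A)$ to produce a single $K\in\B_0(\vr)$ with $R_\qu(A+K)$ invertible (so that $\qu\notin\sigma_S(A+K)$, whence $\qu\notin\ws(A)$ as $\ws(A)\subseteq\sigma_S(A+K)$). Here $R_\qu(A)$ is Weyl and non-invertible, so $E:=\ker R_\qu(A)$ is finite dimensional with $\dim E=\dim\ker R_{\oqu}(A^\dagger)=:n\geq 1$ (Remark \ref{FR1}, Theorem \ref{FT6}), and $\qu\in\sigma_{pS}(A)$. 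Because $R_\qu(A)=A^2-2\,\text{Re}(\qu)A+|\qu|^2\Iop$ is a real polynomial in $A$, it commutes with $A$, so $E$ is $A$-invariant and $R_\qu(A|_E)=0$. The main obstacle is precisely that one cannot imitate the complex Schechter argument: in the complex case the resolvent $A-\lambda$ is linear in $A$, so the finite-rank operator furnished by Proposition \ref{WP1} may be taken as the perturbation itself; here $R_\qu$ is quadratic in $A$, and as $K$ ranges over $\B_0(\vr)$ the difference $R_\qu(A+K)-R_\qu(A)$ sweeps out only a nonlinear, non-surjective family of compact operators, so Proposition \ref{WP1} cannot be applied verbatim. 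The obvious repair—deflating the eigen-sphere by a $K$ supported on $E$—also fails: writing $A=\bigl(\begin{smallmatrix}A_1&B\\0&A_2\end{smallmatrix}\bigr)$ on $E\oplus E^\perp$ (block-triangular since $E$ is $A$-invariant), such a $K$ leaves the $(2,2)$-block $R_\qu(A_2)$ of $R_\qu(A+K)$ unchanged, and $R_\qu(A_2)$ is non-injective exactly when $\ra R_\qu(A)\cap E\neq\{0\}$, the generic situation; one checks that then $R_\qu(A+K)$ retains a nontrivial kernel no matter how $A_1$ is modified.

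To overcome this I would not try to prescribe $R_\qu(A+K)$, but instead construct a finite-rank $K$ with nonzero off-diagonal blocks relative to $E\oplus E^\perp$—using both $E=\ker R_\qu(A)$ and $F=\ra R_\qu(A)^\perp=\ker R_{\oqu}(A^\dagger)$—so as to simultaneously inject on the finite-dimensional eigenpart and rotate $\ra R_\qu(A)$ away from $E$, and then verify directly that $\ker R_\qu(A+K)=\{0\}$. Since $R_\qu(A+K)$ is automatically Weyl (index $0$), trivial kernel forces invertibility, which is exactly what is needed. Carrying out this injectivity check for the quadratic expression $R_\qu(A+K)$ is the step I expect to be the real work; a cleaner but less self-contained alternative would be to pass to a slice $\C_I$ containing $\qu$, complexify $A$, remove the conjugate pair $\alpha\pm\beta I$ from the complex Weyl spectrum by a symmetric compact perturbation, and transfer the result back. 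Either way, combining the resulting lemma with the easy inclusion gives $\ws(A)=\sigma_S(A)\setminus\sigma_0^S(A)=\se(A)\cup\bigcup_{k\neq0}\sk(A)$.
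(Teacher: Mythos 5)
Your overall architecture coincides with the paper's: everything reduces to the claim that each $\qu\in\sigma_0^S(A)$ can be removed from the S-spectrum by a single compact perturbation of $A$, and this is combined with the invariance of $\se$ and of $\sk$ for $k\neq 0$ under compact perturbations (Propositions \ref{EP5} and \ref{PK1}) and the decomposition of Proposition \ref{EP7}. Your bookkeeping for the second equality and your contrapositive argument for $\sigma_S(A)\setminus\sigma_0^S(A)\subseteq\ws(A)$ are both correct. The gap is at the crux: for $\qu\in\sigma_0^S(A)$ you never actually produce a compact $K$ with $R_\qu(A+K)$ invertible. You describe the shape of a candidate $K$ (finite rank, with off-diagonal blocks relative to $\kr(R_\qu(A))\oplus\kr(R_\qu(A))^\perp$) and then state that the injectivity verification ``is the step I expect to be the real work.'' As written this is a plan rather than a proof, and the plan is underspecified --- without an explicit choice of the blocks there is nothing one could check. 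Since this single lemma is the only non-formal content of the theorem, the proposal does not establish the statement.

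It is only fair to add that the obstruction you isolate is genuine and that the paper's own proof does not engage with it: the paper writes $R_\qu(A+K)=R_\qu(A)+K_1$ with $K_1=AK+KA+K^2-2\text{Re}(\qu)K$ compact and then invokes Proposition \ref{WP1} to conclude that $R_\qu(A+K)$ is invertible. But Proposition \ref{WP1} supplies \emph{some} compact perturbation of $R_\qu(A)$ that is invertible; it does not say that this perturbation can be realized in the form $K_1$ for a compact $K$, which is exactly your objection to transplanting the complex Schechter argument, where $A-\lambda\Iop$ is linear in $A$. So you have identified a real defect in the quoted argument without repairing it. Of your two suggested repairs, the slice route is the one I would pursue: for $\qu=q_0+q_1I$ one has $R_\qu(A)=(A-q_0\Iop-q_1J)(A-q_0\Iop+q_1J)$ with $J\phi=\phi I$, the two $\C_I$-linear factors are intertwined by the antilinear isometry $\phi\mapsto\phi j$ (with $j\in\mathbb{S}$, $jI=-Ij$), hence have equal index and are therefore each $\C_I$-Weyl; a Lemma \ref{WL1}-type finite-rank correction, built so as to be right $\quat$-linear on $\kr(R_\qu(A))=\kr(A-\qu)\oplus\kr(A-\qu)j$, then makes one factor (and, by the same intertwining, the other) invertible. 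Until such a construction is written out and verified, the theorem should not be regarded as proved by either your argument or the one in the paper.
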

\begin{proof}
Let $A\in\B(\vr)$.\\
{\em Claim:} If $\qu\in\sigma_0^S(A)$, then there is a $K\in\B_0(\vr)$ such that $\qu\not\in\sigma_S(A+K)$.\\
For, if $\qu\in\sigma_0^S(A)$, then $R_\qu(A)\in\mathcal{SF}(\vr)$ with $\ind(R_\qu(A))=0$. Since, for $K\in\B_0(\vr)$, $R_\qu(A+K)=R_\qu(A)+K_1$ with, by proposition \ref{PP2}, $K_1=AK+KA-2\text{Re}(\qu)K\in\B_0(\vr)$, by proposition \ref{WP1}, $R_\qu(A+K)$ is invertible. Therefore, $\qu\in\rho_S(A+K)$ or $\qu\not\in\sigma_S(A+K)$ as claimed.\\
Now by proposition \ref{EP7},
$$\sigma_S(A)=\se(A)\cup\bigcup_{k\in\Z}\sk(A),$$
where all the above sets are pairwise disjoint, therefore,
$$\sigma_0^S(A)=\sigma_S(A)\setminus\left(\se(A)\cup\bigcup_{k\in\Z\setminus\{0\}}\sk(A)\right).$$
Let $\qu\in\sigma_S(A)$, if $\qu\in\se(A)\cup\bigcup_{k\in\Z\setminus\{0\}}\sk(A)$, then by propositions \ref{EP5} and \ref{PK1}, for every $K\in\B_0(\vr)$,
$\qu\in\se(A+K)\cup\bigcup_{k\in\Z\setminus\{0\}}\sk(A+K)$. On the other hand, if $\qu\in\sigma_0^S(A)$, then by the above claim $\qu\not\in\sigma_S(A+K)$. Therefore, since the S-Weyl spectrum $\ws(A)$ is the largest part of the S-spectrum $\sigma_S(A)$ that remains invariant under compact perturbations, we have
$$\ws(A)=\se(A)\cup\bigcup_{k\in\Z\setminus\{0\}}\sk(A)=\sigma_S(A)\setminus\sigma_0^S(A).$$
\end{proof}
\begin{remark}\label{WR1}
Let $A\in\B(\vr)$. From the above theorem, theorem \ref{WT1}, we can  make the following straight forward observations:
\begin{enumerate}
\item [(a)] $\se(A)\subseteq \ws(A)\subseteq\sigma_S(A)$.
\item[(b)]$\displaystyle\se(A)=\ws(A)\Longleftrightarrow \bigcup_{k\in\Z\setminus\{0\}}\sk(A)=\emptyset.$
\item[(c)]$\sigma_S(A)=\ws(A)\cup\sigma_0^S(A)$ and $\ws(A)\cap\sigma_0^S(A)=\emptyset$.
\item[(d)]$\ws(A)=\sigma_S(A)\Longleftrightarrow\sigma_0^S(A)=\emptyset$.
\item[(e)] $\displaystyle\se(A)=\ws(A)=\sigma_S(A)\Longleftrightarrow\bigcup_{k\in\Z}\sk(A)=\emptyset.$
\end{enumerate}
\end{remark}
\begin{proposition}\label{WP2}
For every $A\in\B(\vr)$, $\ws(A)=\{\qu\in\quat~~|~~R_\qu(A)\in\B(\vr)\setminus\Wy(\vr)\}.$
\end{proposition}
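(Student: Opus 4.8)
The plan is to reduce the identity directly to the Schechter characterization already established in Theorem \ref{WT1}, together with the Atkinson-type description of the essential S-spectrum from Corollary \ref{EC2}. The key observation is that the right-hand side is exactly the set of $\qu$ for which $R_\qu(A)$ fails to be a Weyl operator, and that ``not Weyl'' splits cleanly into two mutually exclusive cases according to whether $R_\qu(A)$ is Fredholm.

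First I would unwind the definition of $\Wy(\vr)$ from Remark \ref{ER2}(a): $R_\qu(A)\in\Wy(\vr)$ means $R_\qu(A)\in\FF(\vr)$ and $\ind(R_\qu(A))=0$. Negating, $R_\qu(A)\in\B(\vr)\setminus\Wy(\vr)$ holds precisely when either $R_\qu(A)\notin\FF(\vr)$, or $R_\qu(A)\in\FF(\vr)$ with $\ind(R_\qu(A))\neq 0$, and these two alternatives are disjoint.

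Next I would identify each alternative with a known spectral piece. The first alternative, $R_\qu(A)\in\B(\vr)\setminus\FF(\vr)$, is exactly the condition $\qu\in\se(A)$ by the Atkinson theorem (Corollary \ref{EC2}). For the second alternative, any Fredholm operator has finite-dimensional kernel and cokernel, hence finite index, so $\ind(R_\qu(A))\neq 0$ forces $\ind(R_\qu(A))=k$ for some $k\in\Z\setminus\{0\}$; by Definition \ref{ED2} this says $\qu\in\sk(A)$. Taking the union over all admissible $k$ yields $\bigcup_{k\in\Z\setminus\{0\}}\sk(A)$.

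Combining the two cases gives
$$\{\qu\in\quat~|~R_\qu(A)\in\B(\vr)\setminus\Wy(\vr)\}=\se(A)\cup\bigcup_{k\in\Z\setminus\{0\}}\sk(A),$$
and the right-hand side is precisely $\ws(A)$ by Theorem \ref{WT1}, which completes the argument. There is no serious obstacle here; the only point requiring a moment's care is the finiteness of the Fredholm index, which guarantees that the ``index $\neq 0$'' case is genuinely exhausted by the countable union over $k\in\Z\setminus\{0\}$ and does not leak into operators of infinite index, since those are already excluded by Fredholmness.
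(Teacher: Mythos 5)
Your argument is correct and reaches the same conclusion, but it runs through a different half of Theorem \ref{WT1} than the paper does. The paper's proof uses the complement form $\ws(A)=\sigma_S(A)\setminus\sigma_0^S(A)$: it first notes (via Remark \ref{ER2}(f), i.e.\ invertible $\Rightarrow$ Weyl) that $R_\qu(A)\notin\Wy(\vr)$ forces $\qu\in\sigma_S(A)$, and then simply removes $\sigma_0^S(A)=\{\qu\in\sigma_S(A)\mid R_\qu(A)\in\Wy(\vr)\}$ from the S-spectrum. You instead use the union form $\ws(A)=\se(A)\cup\bigcup_{k\in\Z\setminus\{0\}}\sk(A)$, identifying ``not Fredholm'' with $\se(A)$ via the Atkinson theorem (Corollary \ref{EC2}) and ``Fredholm with nonzero index'' with $\bigcup_{k\neq 0}\sk(A)$ via Definition \ref{ED2}. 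A small advantage of your route is that the sets $\se(A)$ and $\sk(A)$ ($k\neq 0$) are defined over all of $\quat$ rather than inside $\sigma_S(A)$, so you never need the preliminary step that a non-Weyl pseudo-resolvent puts $\qu$ in the S-spectrum; the paper's route is marginally shorter but relies on that observation to pass from a subset of $\sigma_S(A)$ to a subset of $\quat$. Your closing remark about the finiteness of the Fredholm index, which guarantees that the union over $k\in\Z\setminus\{0\}$ exhausts the nonzero-index case, is exactly the right point to flag. Both proofs are sound and both are, in the end, direct consequences of Theorem \ref{WT1}.
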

\begin{proof}
If $\qu\in\rho_S(A)$, then $R_\qu(A)$ is invertible. Since, by remark \ref{ER2} (f), invertible operators are Weyl, $R_\qu(A)\in\Wy(\vr)$. Therefore, if $R_\qu(A)\not\in\Wy(\vr)$, then $\qu\in\sigma_S(A)$. Also by definition \ref{ED2} and theorem \ref{WT1} we have $\ws(A)=\sigma_S(A)\setminus\sigma_0^S(A)$ and $\sigma_0^S(A)=\{\qu\in\sigma_S(A)~~|~~R_\qu(A)\in\Wy(\vr)\}$. Therefore, $\ws(A)=\{\qu\in\quat~~|~~R_\qu(A)\in\B(\vr)\setminus\Wy(\vr)\}$.
\end{proof}
\begin{remark}\label{WR2}
By remark \ref{ER2} (b), $A\in\Wy(\vr)$ if and only if $A^\dagger\in\Wy(\vr)$. Also by proposition \ref{WP2}, $\qu\in\ws(A)$ if and only if $\oqu\in\ws(A^\dagger)$. Therefore. $\ws(A)=\ws(A^\dagger)^*$.
\end{remark}
\begin{proposition}\label{WP3} 
$$\Wy(\vr)=\{A\in\B(\vr)~~|~~0\in\rho_S(A)\cup\sigma_0^S(A)\}=\{A\in\FF(\vr)~~|~~~0\in\rho_S(A)\cup\sigma_0^S(A)\}.$$
\end{proposition}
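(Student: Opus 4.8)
The plan is to reduce the whole statement to the single operator $A^2$, exploiting that $R_0(A)=A^2-2\text{Re}(0)A+|0|^2\Iop=A^2$. First I would reformulate the membership condition. On the one hand, $0\in\rho_S(A)$ says precisely that $R_0(A)=A^2$ is invertible; on the other hand, $0\in\sigma_0^S(A)$ says that $0\in\sigma_S(A)$ (equivalently $A^2$ is not invertible) together with $R_0(A)=A^2\in\Wy(\vr)$. Since every invertible operator is Weyl by remark \ref{ER2} (f), the two cases combine to the single clean condition $A^2\in\Wy(\vr)$. Hence the proposition collapses to proving $A\in\Wy(\vr)\Longleftrightarrow A^2\in\Wy(\vr)$.

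The forward implication is immediate from corollary \ref{CN}: if $A\in\Wy(\vr)$ then $A\in\FF(\vr)$ with $\ind(A)=0$, so $A^2\in\FF(\vr)$ and $\ind(A^2)=2\,\ind(A)=0$, i.e. $A^2\in\Wy(\vr)$. The converse is the crux. Assuming $A^2\in\Wy(\vr)\subseteq\FF(\vr)$, I would first establish that $A$ itself is Fredholm directly from definition \ref{FD1}, using only the elementary inclusions $\kr(A)\subseteq\kr(A^2)$ and $\ra(A^2)\subseteq\ra(A)$. The first inclusion gives $\dim(\kr(A))\leq\dim(\kr(A^2))<\infty$. The second produces a natural right-linear surjection $\vr/\ra(A^2)\twoheadrightarrow\vr/\ra(A)$, whence $\dim(\cokr(A))\leq\dim(\cokr(A^2))<\infty$. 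Thus $A\in\FF(\vr)$, and corollary \ref{CN} now applies to yield $0=\ind(A^2)=2\,\ind(A)$, forcing $\ind(A)=0$ and so $A\in\Wy(\vr)$.

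Finally the second equality is automatic: since $\Wy(\vr)\subseteq\FF(\vr)$ and the first equality identifies $\{A\in\B(\vr)\mid 0\in\rho_S(A)\cup\sigma_0^S(A)\}$ with $\Wy(\vr)$, intersecting the defining condition with $\FF(\vr)$ removes nothing. The step I expect to be the main obstacle is the converse Fredholm argument: one must make sure that no circular appeal to $A$ being Fredholm is used in forming $\cokr(A)$, and that finite algebraic codimension of $\ra(A)$ is admissible under definition \ref{FD1}. The surjection $\vr/\ra(A^2)\twoheadrightarrow\vr/\ra(A)$ settles both points, reducing the finiteness of $\dim(\cokr(A))$ to that of $\dim(\cokr(A^2))$, which is given.
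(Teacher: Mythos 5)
Your proof is correct, but it takes a genuinely different route from the paper's. You first collapse the membership condition $0\in\rho_S(A)\cup\sigma_0^S(A)$ to the single statement $R_0(A)=A^2\in\Wy(\vr)$ and then prove the purely index-theoretic equivalence $A\in\Wy(\vr)\Leftrightarrow A^2\in\Wy(\vr)$, handling the converse by showing directly from the inclusions $\kr(A)\subseteq\kr(A^2)$ and $\ra(A^2)\subseteq\ra(A)$ that $A\in\FF(\vr)$, and then invoking $\ind(A^2)=2\,\ind(A)$. The paper instead argues the forward direction through the compact-perturbation machinery: from $A\in\Wy(\vr)$ it produces, via Proposition \ref{WP1}, a compact $K$ with $A+K$ invertible, concludes $0\notin\sigma_S(A+K)$, hence $0\notin\ws(A)$, and then applies Theorem \ref{WT1} to get $0\in\rho_S(A)\cup\sigma_0^S(A)$; for the converse it splits into the two cases $0\in\rho_S(A)$ and $0\in\sigma_0^S(A)$ and, in the second case, passes from $A^2\in\Wy(\vr)$ to $A\in\Wy(\vr)$ citing Theorem \ref{FT2}. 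Your version buys two things: it is self-contained (no dependence on the Weyl S-spectrum $\ws$ or on Theorem \ref{WT1}, so it could be placed before them), and it supplies the detail the paper elides, since Theorem \ref{FT2} as stated only gives that products of Fredholm operators are Fredholm, not the converse implication from $A^2\in\FF(\vr)$ to $A\in\FF(\vr)$ — your kernel/cokernel inclusions (or, alternatively, factoring the parametrices of $A^2$ from Theorem \ref{FT3} as $(S_1A)A=\Iop+K_1$ and $A(AS_2)=\Iop+K_2$) are exactly what is needed to justify that step. The paper's route, on the other hand, exhibits the proposition as a corollary of the Schechter-type theorem and keeps the perturbation-theoretic picture in view.
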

\begin{proof}
Let $A\in\B(\vr)$. If $A\in\Wy(\vr)$, then $A\in\FF(\vr)$ and, by remark \ref{ER2} (e), $A+K\in\Wy(\vr)$ for some $K\in\B_0(\vr)$. Therefore, by proposition \ref{WP1}, $A+K$ is invertible and hence $R_0(A+K)=(A+K)^2$ is invertible. Thus $0\in\rho_S(A+K)$, which means $0\not\in\sigma_S(A+K)$. Therefore, by the definition of the Weyl S-spectrum $0\not\in\ws(A)$, and hence by theorem \ref{WT1}, $0\in \rho_S(A)\cup\sigma_0^S(A)$ . Conversely, let $0\in\rho_S(A)\cup\sigma_0^S(A)$. If $0\in\rho_S(A)$, then $R_0(A)=A^2$ is invertible and hence $A$ is invertible. Therefore, by remark \ref{ER2} (f), $A\in\Wy(\vr)$. If $0\in\sigma_0^S(A)$, then by definition \ref{ED2}, $R_0(A)=A^2\in\Wy(\vr)$ and $0\in\sigma_S(A)$. Thus by theorem \ref{FT2}, $A\in\Wy(\vr)$. Hence, $\Wy(\vr)=\{A\in\B(\vr)~~|~~0\in\rho_S(A)\cup\sigma_0^S(A)\}$, and since $\Wy(\vr)\subseteq\FF(\vr)$, we get $\Wy(\vr)=\{A\in\FF(\vr)~~|~~~0\in\rho_S(A)\cup\sigma_0^S(A)\}.$
\end{proof}
\begin{remark}
\begin{enumerate}
\item [(a)]Let $A\in\B(\vr)$, by theorem \ref{WT1}, $\se(A)\subseteq\ws(A)$. By proposition \ref{EP4}, $\se(A)\not=\emptyset$ if and only if $\dim(\vr)=\infty$. Hence, $\ws(A)=\emptyset$ implies $\dim(\vr)<\infty$. Further, since $\ws(A)=\{\qu\in\quat~~|~~R_\qu(A)\in\B(\vr)\setminus\Wy(\vr)\}$, by remark \ref{ER2} (g), $\dim(\vr)<\infty$ implies $\ws(A)=\emptyset$. Therefore, $\ws(A)\not=\emptyset$ if and only if $\dim(\vr)=\infty$.
\item[(b)]Let $K\in\B_0(\vr)$ and $\qu\not=0$. Since $R_\qu(K)=K^2-2\text{Re}(\qu)K+|\qu|^2\Iop$, clearly $|\qu|^2\Iop$ is Fredholm with $\ind(|\qu|^2\Iop)=0$ and $K^2-2\text{Re}(\qu)K$ is compact, by theorem \ref{FT4}, $R_\qu(K)$ is Fredholm with $\ind(R_\qu(K))=0$. That is, $R_\qu(K)\in\Wy(\vr)$. Therefore, by proposition \ref{WP2}, $\ws(K)\setminus\{0\}=\emptyset$. Thus, if $\dim(\vr)=\infty$, then by item (a) $\ws(K)=\{0\}$. Since, by proposition \ref{EP4}, $\se(K)\subseteq\ws(K)$, if $\dim(\vr)=\infty$, then $\emptyset\not=\se(K)\subseteq\{0\}$. Hence, for $K\in\B_0(\vr)$,  $\dim(\vr)=\infty$ if and only if $\se(K)=\ws(K)=\{0\}$.
\item[(c)] Suppose that $A\in\B(\vr)$ is normal and Fredholm. Then clearly $R_\qu(A)$ is normal. Therefore, by proposition \ref{KR}, $R_\qu(A)\in\Wy(\vr)$. Thus, by corollary \ref{EC2} and proposition \ref{WP2}, $T$ is normal implies $\se(A)=\ws(A)$.

\end{enumerate}
\end{remark}
\section{Browder S-spectrum in $\vr$}
In the complex case, the Browder theory uses the so-called  Riesz points and  Riesz idempotent and which is defined in terms of Cauchy integral formula. The Cauchy integral formula, in the quaternionic setting, is only available on an axially symmetric domain for slice regular functions in a quaternion slice. Due to this we are unable to provide a complete study on the Browder S-spectrum. However, in this section, we provide certain results about Browder operator and Browder S-spectrum on the whole set of quaternions without the use of the  Riesz idempotent.\\

Let $A\in\B(\vr)$ and $\N_0$ be the set of all non-negative integers. From now on we denote $\kr(A)=\cK(A)$ and $\ra(A)=\cR(A)$. Then, for $n\in\N_0$, clearly
$$\cK(A^n)\subseteq\cK(A^{n+1})\quad\text{and}\quad \cR(A^{n+1})\subseteq\cR(A^n),$$
which means, in the inclusion ordering, $\{\cK(A^n)\}$ and $\{\cR(A^n)\}$ are nondecreasing and non-increasing sequences of $\vr$ respectively.
\begin{lemma}\label{BL1}
Let $n_0\in\N_0$.
\begin{enumerate}
\item [(a)]If $\cK(A^{n_0+1})=\cK(A^{n_0})$, then $\cK(A^{n+1})=\cK(A^n)$ for every $n\geq n_0$.
\item[(b)]If $\cR(A^{n_0+1})=\cR(A^{n_0})$, then $\cR(A^{n+1})=\cR(A^n)$ for every $n\geq n_0$.
\end{enumerate}
\end{lemma}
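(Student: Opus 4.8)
This is the classical "eventual stabilization" result for kernels and ranges of operator powers, adapted to the quaternionic setting. Part (a) says that once the kernel chain $\cK(A^n)$ stops growing at stage $n_0$, it remains constant forever; part (b) is the dual statement for the range chain $\cR(A^n)$.

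This is a standard induction argument, and the plan is to prove both parts by induction on $n$.

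**Plan for part (a).** The plan is to induct on $n \geq n_0$, with the base case $n = n_0$ being the hypothesis $\cK(A^{n_0+1}) = \cK(A^{n_0})$. For the inductive step, I would assume $\cK(A^{n+1}) = \cK(A^n)$ for some $n \geq n_0$ and show $\cK(A^{n+2}) = \cK(A^{n+1})$. Since $\cK(A^{n+1}) \subseteq \cK(A^{n+2})$ holds automatically from the nondecreasing chain observed before the lemma, only the reverse inclusion needs proof. First I would take $\phi \in \cK(A^{n+2})$, so $A^{n+2}\phi = 0$, i.e. $A^{n+1}(A\phi) = 0$, which says $A\phi \in \cK(A^{n+1})$. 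By the inductive hypothesis $\cK(A^{n+1}) = \cK(A^n)$, hence $A\phi \in \cK(A^n)$, giving $A^n(A\phi) = A^{n+1}\phi = 0$, so $\phi \in \cK(A^{n+1})$. This closes the induction.

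**Plan for part (b).** The argument is dual. I would again induct on $n \geq n_0$ with base case $n = n_0$. Since $\cR(A^{n+1}) \subseteq \cR(A^n)$ always holds, only the reverse inclusion $\cR(A^{n+1}) \subseteq \cR(A^{n+2})$ needs to be established in the inductive step, assuming $\cR(A^{n+1}) = \cR(A^n)$. First I would take $\psi \in \cR(A^{n+1})$, so $\psi = A^{n+1}\eta$ for some $\eta \in \vr$. Then $A^n\eta \in \cR(A^n) = \cR(A^{n+1})$ by the inductive hypothesis, so $A^n\eta = A^{n+1}\zeta$ for some $\zeta \in \vr$. Applying $A$ gives $\psi = A^{n+1}\eta = A(A^n\eta) = A(A^{n+1}\zeta) = A^{n+2}\zeta \in \cR(A^{n+2})$, which is the desired inclusion.

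**Expected difficulties.** There is essentially no obstacle here: the proof relies only on right $\quat$-linearity of $A$ (so that $A^n$ is well defined and linear) and the elementary set-inclusion bookkeeping, both of which transfer verbatim from the complex case. The only mild care needed in the quaternionic setting is that all scalar multiplications stay on the correct (right) side, but since the argument never multiplies by scalars at all—it only composes $A$ with itself and chases elements through kernels and ranges—noncommutativity plays no role. I would present the two parts in parallel, writing out part (a) in full and noting that part (b) follows by the dual (range) argument with the inclusions reversed.
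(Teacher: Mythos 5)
Your proposal is correct, and it is exactly the standard induction that the paper itself invokes: the paper's proof of this lemma simply states that the argument is the same as the complex case and cites Lemma 5.29 of Kubrusly, which is precisely the kernel/range chain induction you wrote out. Your observation that no scalar multiplication occurs, so the argument transfers verbatim to the quaternionic setting, is the same justification the paper relies on implicitly.
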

\begin{proof}
The proof is same as the complex proof. For a complex proof see lemma 5.29 in \cite{kub}.
\end{proof}
\begin{definition}\label{BD1}
Let $\overline{\N}_0=\N_0\cup\{+\infty\}$. The ascent and descent of an operator $A\in\B(\vr)$ are defined respectively as follows.
$$\asc(A)=\min\{n\in\overline{\N}_0~~|~~\cK(A^{n+1})=\cK(A^n)\},$$
$$\dsc(A)=\min\{n\in\overline{\N}_0~~|~~\cR(A^{n+1})=\cR(A^n)\}.$$
\end{definition}
Note that for $A\in\B(\vr)$ clearly we have
$$\asc(A)=0\Leftrightarrow \cK(A)=\{0\},\quad\text{that is},~~A~~\text{is injective}.$$
$$\dsc(A)=0\Leftrightarrow \cR(A)=\vr,\quad\text{that is},~~A~~\text{is surjective}.$$
\begin{lemma}\label{BL2}
Let $A\in\B(\vr)$.
\begin{enumerate}
\item [(a)] If $\asc(A)<\infty$ and $\dsc(A)=0$, then $\asc(A)=0$.
\item[(b)] If $\asc(A)<\infty$ and $\dsc<\infty$, then $\asc(A)=\dsc(A).$
\end{enumerate}
\end{lemma}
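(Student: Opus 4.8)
The plan is to treat the two parts separately, with part (b) carrying the real content; part (a) can be argued directly (and is in fact the special case $\dsc(A)=0$ of part (b)).

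For part (a) I would argue by contradiction from surjectivity. Since $\dsc(A)=0$ means $\cR(A)=\vr$, the operator $A$ is surjective, and hence so is $A^n$ for every $n$. Suppose $\asc(A)=p\geq 1$; by minimality in Definition \ref{BD1} we have $\cK(A^{p-1})\subsetneq\cK(A^p)$, so I can choose $\phi$ with $A^p\phi=0$ but $A^{p-1}\phi\neq 0$. Using surjectivity of $A$, write $\phi=A\psi$; then $A^{p+1}\psi=A^p\phi=0$, so $\psi\in\cK(A^{p+1})=\cK(A^p)$ because the kernels have stabilized by Lemma \ref{BL1}(a). Hence $A^p\psi=A^{p-1}\phi=0$, contradicting the choice of $\phi$. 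Therefore $p=0$.

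For part (b), set $p=\asc(A)$, $q=\dsc(A)$ (both finite) and $m=\max\{p,q\}$. The strategy is to build an $A$-invariant direct-sum decomposition $\vr=\cK(A^m)\oplus\cR(A^m)$ on which $A$ splits into a bijective part and a nilpotent part. Two building blocks are needed. First, for $n\geq p$ the kernels have stabilized, so $\cK(A^{2n})=\cK(A^n)$, which forces $\cK(A^n)\cap\cR(A^n)=\{0\}$: an element $A^n\psi$ lying in $\cK(A^n)$ satisfies $A^{2n}\psi=0$, so $\psi\in\cK(A^{2n})=\cK(A^n)$ and the element vanishes. Second, for $n\geq q$ the ranges have stabilized, so $\cR(A^{2n})=\cR(A^n)$, which gives $\cK(A^n)+\cR(A^n)=\vr$: given $\phi$, write $A^n\phi=A^{2n}\psi$ and split $\phi=(\phi-A^n\psi)+A^n\psi$. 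Taking $n=m$ yields the decomposition; both summands are $A$-invariant, the restriction $A|_{\cR(A^m)}$ is bijective (surjective since $\cR(A^{m+1})=\cR(A^m)$, injective since its kernel lies in $\cK(A^m)\cap\cR(A^m)=\{0\}$), and $A|_{\cK(A^m)}$ is nilpotent with vanishing $m$-th power. I would then transfer the computation of ascent and descent to the nilpotent summand $N=A|_{\cK(A^m)}$: using bijectivity on $\cR(A^m)$ one checks $\cK(A^n)=\cK(N^n)$ and $\cR(A^n)=\cR(N^n)\oplus\cR(A^m)$ for all $n$, whence $\asc(A)=\asc(N)$ and $\dsc(A)=\dsc(N)$. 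For a nilpotent $N$ of index $r$, both chains $\{\cK(N^n)\}$ and $\{\cR(N^n)\}$ first stabilize exactly at $n=r$ (invoking Lemma \ref{BL1} to exclude earlier stabilization), so $\asc(N)=\dsc(N)=r$, and therefore $p=q$.

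The main obstacle is part (b), and within it the crux is setting up the invariant direct-sum decomposition correctly, namely establishing both the trivial intersection (from kernel stabilization) and the full sum (from range stabilization) and then verifying invariance together with the bijective/nilpotent splitting. Once the decomposition is in hand, the reduction to the nilpotent case is routine bookkeeping. I should also note that all the arguments here are purely algebraic statements about kernels and ranges of $\quat$-linear maps, so they remain valid over the non-commutative scalars of $\vr$ without modification.
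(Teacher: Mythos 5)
Your proof is correct. The paper itself gives no argument here --- it simply asserts that the complex proof (Kubrusly, Lemma 5.30) carries over verbatim --- and what you have written is exactly that standard, purely algebraic argument (kernel stabilization gives $\cK(A^n)\cap\cR(A^n)=\{0\}$ for $n\geq\asc(A)$, range stabilization gives $\cK(A^n)+\cR(A^n)=\vr$ for $n\geq\dsc(A)$, then reduce to the nilpotent summand), spelled out in full and valid without change over the non-commutative scalars of $\vr$.
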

\begin{proof}
The proof is exactly same as its complex counterpart. For a complex proof see lemma 5.30 in \cite{kub}.
\end{proof}
\begin{lemma}\label{BL3}
If $A\in\FF(\vr)$, then $asc(A)=dsc(A^\dagger)$ and $dsc(A)=asc(A^\dagger)$.
\end{lemma}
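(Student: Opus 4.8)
The plan is to translate each ascent/descent condition for $A$ into the opposite condition for $A^\dagger$ by means of the orthogonality relations of Proposition \ref{IP30}, using that Fredholmness forces all the relevant ranges to be closed. First I would record the structural facts needed throughout. Since $A\in\FF(\vr)$, Proposition \ref{FP2} gives $A^\dagger\in\FF(\vr)$, and Corollary \ref{CN} shows that $A^n$ and $(A^\dagger)^n$ are Fredholm for every $n\in\N_0$; hence by Proposition \ref{FP1} their ranges are all closed. Moreover $(A^n)^\dagger=(A^\dagger)^n$, which follows by induction from $(AB)^\dagger=B^\dagger A^\dagger$. Applying Proposition \ref{IP30} to $A^n$ then yields
$$\cK(A^n)=\cR((A^\dagger)^n)^\perp\quad\text{and}\quad \cR(A^n)=\cK((A^\dagger)^n)^\perp,$$
where the second identity uses that $\cR(A^n)$ is closed, so that $\cR(A^n)=\overline{\cR(A^n)}=\cK((A^\dagger)^n)^\perp$.

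Next I would exploit the fact that on closed subspaces of a quaternionic Hilbert space the map $M\mapsto M^\perp$ is an involution, i.e. $(M^\perp)^\perp=M$. Since kernels are always closed and the ranges above are closed by Fredholmness, taking orthogonal complements in the two displayed identities gives the equivalences
$$\cK(A^{n+1})=\cK(A^n)\iff \cR((A^\dagger)^{n+1})=\cR((A^\dagger)^n),$$
$$\cR(A^{n+1})=\cR(A^n)\iff \cK((A^\dagger)^{n+1})=\cK((A^\dagger)^n).$$
For the first line, the left-hand identity is exactly $\cR((A^\dagger)^{n+1})^\perp=\cR((A^\dagger)^n)^\perp$, and passing to orthogonal complements of these two closed ranges produces the right-hand identity; the second line is obtained identically with the roles of kernel and range interchanged.

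Finally, comparing with Definition \ref{BD1}, the first equivalence says that the set of indices $n$ stabilizing the kernel chain of $A$ coincides with the set of indices stabilizing the range chain of $A^\dagger$, so the two minima agree and $\asc(A)=\dsc(A^\dagger)$; the second equivalence gives $\dsc(A)=\asc(A^\dagger)$ in the same manner. The statements remain correct in the value $+\infty$, since by the equivalences the (finite) stabilizing sets on the two sides are empty simultaneously, forcing both minima to be $+\infty$ at once.

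The only point requiring care, and thus the main (minor) obstacle, is the closedness of the ranges $\cR(A^n)$ and $\cR((A^\dagger)^n)$: this is precisely what licenses the second identity of Proposition \ref{IP30} without a closure and the double-complement passage $(M^\perp)^\perp=M$. Without Fredholmness one would only control $\overline{\cR(A^n)}$, and the range equivalences could fail. I would therefore make explicit the chain $A\in\FF(\vr)\Rightarrow A^n,(A^\dagger)^n\in\FF(\vr)\Rightarrow \cR(A^n),\cR((A^\dagger)^n)$ closed before invoking the orthogonality argument.
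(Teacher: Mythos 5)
Your proof is correct and follows essentially the same route as the paper: the paper's proof consists precisely of noting that $A^n\in\FF(\vr)$ (Corollary \ref{CN}) forces $\cR(A^n)$ to be closed (Proposition \ref{FP1}) and then deferring to the standard complex argument, which is exactly the kernel--range duality via Proposition \ref{IP30} and double orthogonal complements that you write out in full. You have simply supplied the details the paper leaves to the cited reference, and your handling of the closed-range issue and the $+\infty$ case is accurate.
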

\begin{proof}
By corollary \ref{CN}, if $A\in\FF(\vr)$ then $A^n\in\FF(\vr)$ for all $n\in\N_0$. Therefore, by proposition \ref{FP1}, $\cR(A^n)$ is closed for every $n\in\N_0$. With these facts the proof follows from its complex version. For a complex proof see lemma 5.31 in \cite{kub}.
\end{proof}
\begin{definition}\label{BD2}
A right quaternionic Browder operator is a right quaternionic Fredholm operator with finite ascent and finite descent. Let $\mathfrak{Br}(\vr)$ denotes the set of all right quaternionic Browder operators from $\B(\vr)$. Then
$$\Br(\vr)=\{A\in\FF(\vr)~~|~~\asc(A)<\infty\quad\text{and}\quad\dsc(A)<\infty\}.$$
\end{definition}
Note that according to lemma \ref{BL2},
\begin{eqnarray}\label{WE1}
\Br(\vr)&=&\{A\in\FF(\vr)~~|~~\asc(A)=\dsc(A)<\infty\}\\
&=&\{\{A\in\FF(\vr)~~|~~\asc(A)=\dsc(A)=m\quad\text{for some}~~m\in\N_0\}\nonumber
\end{eqnarray}
Hence
\begin{equation}\label{WE2}
\FF(\vr)\setminus\Br(\vr)=\{A\in\FF(\vr)~~|~~\asc(A)=\infty~~\text{or}~~\dsc(A)=\infty\}
\end{equation}
Also by lemma \ref{BL3}
\begin{equation}\label{BE3}
A\in\Br(\vr)\Leftrightarrow A^\dagger\in\Br(\vr).
\end{equation}
\begin{definition}\label{BD3}
Let $\mathcal{X}$ be a right quaternionic linear space. The subspaces $\cR$ and $\cK$ are said to be algebraic compliments of each other if
$$\mathcal{X}=\cR+\cK\quad\text{and}\quad \cR\cap\cK=\{0\}.$$
\end{definition}
\begin{lemma}\label{BL4}
If $A\in\B(\vr)$ with $\asc(A)=\dsc(A)=m$ for some $m\in\N_0$, then $\cR(A^m)$ and $\cK(A^m)$ are algebraic compliments of each other.
\end{lemma}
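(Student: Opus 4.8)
The plan is to verify the two defining conditions of an algebraic complement from Definition \ref{BD3} separately, namely $\cR(A^m)\cap\cK(A^m)=\{0\}$ and $\vr=\cR(A^m)+\cK(A^m)$. The engine for both parts is the stabilization guaranteed by Lemma \ref{BL1}: since $\asc(A)=\dsc(A)=m$, we have $\cK(A^n)=\cK(A^m)$ and $\cR(A^n)=\cR(A^m)$ for every $n\geq m$, and in particular at the index $n=2m$.

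First I would prove that the intersection is trivial. Take $\phi\in\cR(A^m)\cap\cK(A^m)$ and write $\phi=A^m\psi$ for some $\psi\in\vr$. Applying $A^m$ and using $\phi\in\cK(A^m)$ gives $A^{2m}\psi=A^m\phi=0$, so $\psi\in\cK(A^{2m})$. The ascent stabilization then yields $\psi\in\cK(A^m)$, whence $\phi=A^m\psi=0$. This establishes $\cR(A^m)\cap\cK(A^m)=\{0\}$.

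Next I would prove that the sum exhausts $\vr$. Given an arbitrary $\phi\in\vr$, the vector $A^m\phi$ lies in $\cR(A^m)$, which by descent stabilization equals $\cR(A^{2m})$; hence $A^m\phi=A^{2m}\psi$ for some $\psi\in\vr$. Then $A^m(\phi-A^m\psi)=A^m\phi-A^{2m}\psi=0$, so $\phi-A^m\psi\in\cK(A^m)$, while $A^m\psi\in\cR(A^m)$. The decomposition $\phi=A^m\psi+(\phi-A^m\psi)$ then exhibits $\phi$ as a sum of an element of $\cR(A^m)$ and an element of $\cK(A^m)$, so $\vr=\cR(A^m)+\cK(A^m)$.

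The argument is the standard one from the complex Fredholm theory, so the main obstacle is not mathematical depth but formal care. One must invoke Lemma \ref{BL1} at the correct index ($2m\geq m$, which holds for every $m\in\overline{\N}_0$ including the degenerate case $m=0$, where $A^0=\Iop$ and the claim is immediate) so that the stabilization of \emph{both} the kernel and range chains is simultaneously available. One should also confirm that $\cR(A^m)$ and $\cK(A^m)$ are genuine right-$\quat$-subspaces of $\vr$, so that the sum and intersection are meaningful; this is immediate from the right linearity of $A^m$.
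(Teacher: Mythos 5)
Your proof is correct and is exactly the argument the paper has in mind: the paper's own ``proof'' simply defers to the complex counterpart (Lemma 5.32 in \cite{kub}), noting that the argument is purely algebraic, and what you have written out is precisely that standard stabilization argument at index $2m$, with the right observation that only right $\quat$-linearity of $A^m$ is needed for it to transfer verbatim to $\vr$. Nothing is missing.
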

\begin{proof}
The proof is purely algebraic and it is the same as its complex counterpart. For a complex proof see lemma 5.32 in \cite{kub}.
\end{proof}
\begin{theorem}\label{BT1}
Let $A\in\B(\vr)$. Consider
\begin{enumerate}
\item [(a)] $A\in\Br(\vr)$ and $A\not\in\mathcal{G}(\vr)$.
\item[(b)] $A\in\FF(\vr)$ is such that $\cR(A^m)$ and $\cK(A^m)$ are complimentary subspaces for some $m\in\N$.
\item[(c)]$A\in\Wy(\vr)$.
\end{enumerate}
Then (a)$\Longrightarrow$(b)$\Longrightarrow$(c).
\end{theorem}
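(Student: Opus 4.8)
The plan is to prove the two implications separately, using the ascent-descent machinery from Lemmas~\ref{BL2} and~\ref{BL4} together with the index-additivity of Theorem~\ref{FT2}. First I would establish (a)$\Longrightarrow$(b). Suppose $A\in\Br(\vr)$ but $A\not\in\mathcal{G}(\vr)$. By Definition~\ref{BD2} and the reformulation~\eqref{WE1}, $A$ is Fredholm with $\asc(A)=\dsc(A)=m$ for some $m\in\N_0$. The key point is that $m\geq 1$: if we had $m=0$, then $\asc(A)=\dsc(A)=0$ would mean $A$ is both injective and surjective (by the two observations recorded after Definition~\ref{BD1}), hence invertible, i.e. $A\in\mathcal{G}(\vr)$, contradicting the hypothesis. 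So $m\in\N$, and Lemma~\ref{BL4} gives directly that $\cR(A^m)$ and $\cK(A^m)$ are algebraic complements. Together with $A\in\FF(\vr)$ this is exactly statement (b).

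Next I would prove (b)$\Longrightarrow$(c). Assume $A\in\FF(\vr)$ with $\cR(A^m)$ and $\cK(A^m)$ algebraically complementary for some $m\in\N$. Since $A$ is Fredholm, Corollary~\ref{CN} gives $A^m\in\FF(\vr)$ with $\ind(A^m)=m\,\ind(A)$. The strategy is to show $\ind(A^m)=0$, whence $\ind(A)=0$ (as $m\neq 0$), which by Remark~\ref{ER2}(a) means $A\in\Wy(\vr)$. To show $\ind(A^m)=0$, I would use the complementarity decomposition $\vr=\cR(A^m)+\cK(A^m)$ with $\cR(A^m)\cap\cK(A^m)=\{0\}$. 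Because $A^m$ is Fredholm, $\cK(A^m)$ is finite dimensional, and $\cR(A^m)$ is closed (Proposition~\ref{FP1}); moreover $\cokr(A^m)\cong\kr((A^m)^\dagger)$ is finite dimensional. The decomposition forces $\dim(\cK(A^m))=\codim(\cR(A^m))=\dim(\cokr(A^m))$, since $\cK(A^m)$ is a direct algebraic complement of $\cR(A^m)$ in $\vr$ and the cokernel dimension is precisely the dimension of any algebraic complement of the range. Hence $\ind(A^m)=\dim(\kr(A^m))-\dim(\cokr(A^m))=0$.

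I expect the main obstacle to be the quaternionic bookkeeping in the second implication: one must be careful that ``$\codim(\cR(A^m))=\dim(\text{any algebraic complement})$'' is legitimately the cokernel dimension of Definition~\ref{PD4}, and that quotient dimensions behave as in the complex case. Since $\cR(A^m)$ is closed and finite-codimensional, the algebraic complement $\cK(A^m)$ maps isomorphically onto the quotient $\vr/\cR(A^m)=\cokr(A^m)$ as right quaternionic vector spaces, giving the needed equality of dimensions. The remaining steps---that $m\geq 1$ in the first implication, and the index computation via Corollary~\ref{CN} and Remark~\ref{ER2}(a) in the second---are then routine. I would present the argument in this order, flagging explicitly where noncommutativity is irrelevant because the dimension count is purely linear-algebraic over $\quat$ on finite-dimensional right subspaces.
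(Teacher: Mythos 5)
Your proposal is correct and follows essentially the same route as the paper: for (a)$\Longrightarrow$(b) both arguments show $m\geq 1$ by observing that $\asc(A)=\dsc(A)=0$ would force invertibility and then invoke Lemma~\ref{BL4}, and for (b)$\Longrightarrow$(c) both reduce to showing $\ind(A^m)=0$ and divide by $m$ via Corollary~\ref{CN}. The only cosmetic difference is that you identify $\dim(\cK(A^m))$ with $\dim(\cokr(A^m))$ by mapping the algebraic complement onto the quotient, whereas the paper compares $\cK(A^m)$ with the second algebraic complement $\cR(A^m)^\perp=\cK((A^m)^\dagger)$ and applies Remark~\ref{FR1}; these are the same linear-algebraic fact in two guises.
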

\begin{proof}
(a)$\Rightarrow$(b): $A\in\B(\vr)$ is invertible if and only if $A\in\GG(\vr)$. That is, $A\in\B(\vr)$ has a bounded inverse and $\cR(A)=\vr$ , $\cK(A)=\{0\}$. By corollary \ref{CI1}, every invertible operator is Fredholm. Also by definition \ref{BD1}, $\cR(A)=\vr$ and $\cK(A)=\{0\}$ if and only if $\asc(A)=\dsc(A)=0$. Thus every invertible operator is Browder. That is, clearly the inclusion is strict, $\GG(\vr)\subset\Br(\vr)\subset\FF(\vr)$. Therefore, if $A\in\Br(\vr)$ and $A\not\in\GG(\vr)$, then $\asc(A)=\dsc(A)=m$ for some $m\geq 1$. Hence by lemma \ref{BL4}, $\cK(A^m)$ and $\cR(A^m)$ are complimentary subspaces of $\vr$.\\
(b)$\Rightarrow$ (c): Suppose (b) holds. That is, $A\in\FF(\vr)$ and there exists $m\in\N$ such that $\cR(A^m)+\cK(A^m)=\vr$ and $\cR(A^m)\cap\cK(A^m)=\{0\}$. Since $A\in\FF(\vr)$, by corollary \ref{CN}, $A^m\in\FF(\vr)$. Hence $\cK(A^m)$, $\cK((A^m)^\dagger)$ are finite dimensional, and by proposition \ref{FP1}, $\cR(A^m)$ is closed. Since $\cR(A^m)$ is closed, $\cR(A^m)+\cR(A^m)^\perp=\vr$, where $\cR(A^m)^\perp=\cK((A^m)^\dagger)$. Hence, $\cR(A^m)+\cK((A^m)^\perp)=\vr$ and $\cR(A^m)\cap\cK((A^m)^\dagger)=\{0\}$. Thus $\cK(A^m)$ and $\cK((A^m)^\dagger)$ are both algebraic compliments of $\cR(A^m)$, and therefore, they have the same finite dimension. Thus, by remark \ref{FR1}, $\ind(A^m)=0$. Since by corollary \ref{CN}, $m\ind(A)=\ind(A^m)=0$ and $m>0$, we have $ind(A)=0$. Hence $A\in\Wy(\vr)$.
\end{proof}
\begin{remark}\label{BR1}
From theorem \ref{BT1}, it is now clear that, obviously the inclusions are strict, $\GG(\vr)\subset\Br(\vr)\subset\Wy(\vr)\subset\FF(\vr)$.
\end{remark}
\begin{theorem}\label{BT2}
Let $A\in\B(\vr)$. If $A\in\Br(\vr)$ and $0\in\sigma_S(A)$, then $0\in\pi_0(A)$.
\end{theorem}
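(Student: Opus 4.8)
The statement packages two conditions, since $\pi_0(A)=\iso(A)\cap\sigma_0^S(A)$: one must show $R_0(A)\in\Wy(\vr)$ (so that $0\in\sigma_0^S(A)$) and that $0$ is an isolated point of $\sigma_S(A)$ (so that $0\in\iso(A)$). The first is immediate. Since $0\in\sigma_S(A)$, the operator $R_0(A)=A^2$ is not invertible, hence $A$ is not invertible, i.e. $A\notin\GG(\vr)$; as $A\in\Br(\vr)$, Theorem \ref{BT1} ((a)$\Rightarrow$(c)) gives $A\in\Wy(\vr)$, and then Remark \ref{ER2}(d) yields $R_0(A)=A\cdot A\in\Wy(\vr)$. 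Thus $0\in\sigma_0^S(A)$, and it remains only to prove isolation.

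For the isolation I would use the ascent--descent splitting. Put $m=\asc(A)=\dsc(A)$, which is finite because $A\in\Br(\vr)$ and satisfies $m\geq 1$ because $A\notin\GG(\vr)$ (an invertible Fredholm operator has $\asc=\dsc=0$). By Corollary \ref{CN} we have $A^m\in\FF(\vr)$, so $\cK(A^m)$ is finite dimensional and, by Proposition \ref{FP1}, $\cR(A^m)$ is closed; together with Lemma \ref{BL4} this exhibits $\vr=\cR(A^m)\oplus\cK(A^m)$ as a topological direct sum of closed subspaces. Both summands are $A$-invariant: $A\cR(A^m)=\cR(A^{m+1})=\cR(A^m)$ because $\dsc(A)=m$, and $A\cK(A^m)\subseteq\cK(A^m)$ trivially. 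Write $A_1=A|_{\cR(A^m)}$ and $A_2=A|_{\cK(A^m)}$.

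Next I would analyse the two blocks. The map $A_1$ is a bounded bijection of $\cR(A^m)$ onto itself: it is onto since $A\cR(A^m)=\cR(A^m)$, and injective since $A\phi=0$ with $\phi\in\cR(A^m)$ forces $\phi\in\cK(A)\subseteq\cK(A^m)$, whence $\phi\in\cR(A^m)\cap\cK(A^m)=\{0\}$. By the open mapping theorem $A_1$ is invertible, so $R_0(A_1)=A_1^2$ is invertible and $0\in\rho_S(A_1)$. The map $A_2$ satisfies $A_2^m=0$, hence is nilpotent on the finite-dimensional space $\cK(A^m)$; any right eigenvalue $\lambda$ would satisfy $\phi\lambda^m=A_2^m\phi=0$ with $\phi\neq 0$, forcing $\lambda=0$, and since on a finite-dimensional space $R_\qu(A_2)$ is invertible iff injective, $\sigma_S(A_2)\subseteq\{0\}$.

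Finally I would glue the blocks. As $2\,\text{Re}(\qu)$ and $|\qu|^2$ are real, $R_\qu(A)$ respects the decomposition and is block diagonal with blocks $R_\qu(A_1)$ and $R_\qu(A_2)$; being a topological direct sum, $R_\qu(A)$ is invertible iff both blocks are, giving $\sigma_S(A)=\sigma_S(A_1)\cup\sigma_S(A_2)$ and in particular $\sigma_S(A)\setminus\{0\}\subseteq\sigma_S(A_1)$. Since $\sigma_S(A_1)$ is compact (Proposition \ref{PP1}) and omits $0$ (as $A_1$ is invertible), it is bounded away from $0$; because $0\in\sigma_S(A)$, it follows that $0$ is an isolated point of $\sigma_S(A)$, i.e. $0\in\iso(A)$, and with $0\in\sigma_0^S(A)$ this yields $0\in\pi_0(A)$. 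The step I expect to be most delicate is this gluing: the decomposition is \emph{not} orthogonal, so adjoint-based reasoning is unavailable, and the $S$-spectrum is set up in the paper only for operators on a whole space. One must therefore justify with care that invertibility of the real-coefficient pseudo-resolvent $R_\qu$ on $\vr$ is equivalent to invertibility of its restrictions to the two closed $A$-invariant summands, and that the finite-dimensional quaternionic theory indeed gives $\sigma_S(A_2)\subseteq\{0\}$.
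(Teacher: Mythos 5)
Your proof is correct. The first half (establishing $0\in\sigma_0^S(A)$) is essentially the paper's, except that you are more careful: the paper passes directly from $A\in\Br(\vr)\subset\Wy(\vr)$ to $0\in\sigma_0^S(A)$, whereas what is actually needed is $R_0(A)=A^2\in\Wy(\vr)$, which you supply via Remark \ref{ER2}(d). For the isolation step the two arguments genuinely diverge. The paper uses the same splitting $\vr=\cR(A^m)\oplus\cK(A^m)$ from Lemma \ref{BL4}, but only as a decomposition of the power $A^m$ (which is all that $A^m$-invariance provides); it gets $\sigma_S(A^m)=\sigma_S(A^m\vert_{\cR(A^m)})\cup\{0\}$ with the first set compact and missing $0$, and then transfers isolation of $0$ from $\sigma_S(A^m)=\sigma_S(A)^m$ back to $\sigma_S(A)$ via the spectral mapping theorem of \cite{ghimorper}. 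You instead note that both summands are invariant under $A$ itself --- using $A\cR(A^m)=\cR(A^{m+1})=\cR(A^m)$, which is where $\dsc(A)=m$ enters --- split $A=A_1\oplus A_2$, and read off $\sigma_S(A)=\sigma_S(A_1)\cup\sigma_S(A_2)$ from the block-diagonal form of the real-coefficient operator $R_\qu(A)$. Your route avoids the spectral mapping theorem entirely (the paper's use of it for $\qu\mapsto\qu^m$ is legitimate since this is an intrinsic polynomial, but it is an extra external input, and the subsequent step that $0$ isolated in $\sigma_S(A)^m$ forces $0$ isolated in $\sigma_S(A)$, via $|\qu^m|=|\qu|^m$, is glossed over in the paper); the paper's route avoids having to prove $A$-invariance and bijectivity of $A_1$. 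The point you flag as delicate --- that invertibility of $R_\qu(A)$ on the non-orthogonal topological direct sum is equivalent to invertibility of both blocks --- is fine: $\cR(A^m)$ is closed with finite-dimensional algebraic complement, so the associated projections are bounded and the standard argument applies verbatim to right-linear operators; the same issue is implicitly present in the paper's version as well.
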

\begin{proof}
Since $\sigma_0^S(A)=\{\qu\in\quat~~|~~R_\qu(A)\in\Wy(\vr)\}$ and $\Br(\vr)\subset\Wy(\vr)$, if $A\in\Br(\vr)$ and $0\in\sigma_S(A)$, then $0\in\sigma_0^S(A)$. Further, by theorem \ref{BT1}, there is an integer $m\geq 1$ such that $\cR(A^m)+\cK(A^m)=\vr$ and $\cR(A^m)\cap\cK(A^m)=\{0\}$, equivalently (the direct sum is not necessarily orthogonal)  $\cR(A^m)\oplus\cK(A^m)=\vr$ and $\cR(A^m)\cap\cK(A^m)=\{0\}$. Further, $\cR(A^m)$ and $\cK(A^m)$ are $A^m$-invariant. Therefore, $A^m=A^m\vert_{\cR(A^m)}\oplus A^m\vert_{\cK(A^m)}$. Since $A^m$ is not invertible, $\cK(A^m)\not=\{0\}$ and $A^m\vert_{\cK(A^m)}=O$, the zero operator. Thus by the spectral mapping theorem (see theorem 4.3 (d) in \cite{ghimorper}), $\sigma_S(A)^m=\sigma_S(A^m)=\sigma_S(A^m\vert_{\cR(A^m)})\cup\{0\}$. Since $A^m\vert_{\cR(A^m)}:\cR(A^m)\longrightarrow\cR(A^m)$ is bijective and, by corollary \ref{CN}, $A^m\in\FF(\vr)$, hence by proposition \ref{FP1}, $\cR(A^m)$ is a closed subspace of $\vr$, $A^m\vert_{\cR(A^m)}\in\GG(\cR(A^m))$. Therefore $0\in\rho(A^m\vert_{\cR(A^m)})$, thus $0\not\in\sigma_S(A^m\vert_{\cR(A^m)})$. Hence $\sigma_S(A^m)$ is a disconnected set, and therefore $0$ is an isolated point of $\sigma_S(A)^m=\sigma_S(A^m)$. Thus $0$ is an isolated point of $\sigma_S(A)$. That is, $0\in\iso(A)$ and by definition of $\pi_0(A)$, $0\in\pi_0(A)=\sigma_0^S(A)\cap\iso(A)$.
\end{proof}
\begin{remark}\label{BR2}Let $A\in\B(\vr)$.
\begin{eqnarray*}
\sigma_S(A)\setminus\pi_0(A)&=&\sigma_S(A)\setminus(\iso(A)\cap\sigma_0^S(A))\\
&=&\left(\sigma_S(A)\setminus\iso(A)\right)\cup\left(\sigma_S(A)\setminus\sigma_0^S(A)\right)\\
&=&\acc(A)\cup\ws(A)\quad\text{by theorem \ref{WT1}}.
\end{eqnarray*}
\end{remark}
\begin{definition}\label{BD4}
The Browder S-spectrum of an operator $A\in\B(\vr)$, denoted by $\Bs(A)$, is
$$\Bs(A)=\{\qu\in\quat~~|~~R_\qu(A)\not\in\Br(\vr)\}.$$
\end{definition}
By equation \ref{BE3}, $A\in\Br(\vr)$ if and only if $A^\dagger\in\Br(\vr)$. Hence
\begin{equation}\label{BE4}
\Bs(A)=\Bs(A^\dagger)^*
\end{equation}
Also by proposition \ref{WP2}, 
\begin{equation}\label{BE5}
\ws(A)\subseteq\Bs(A).
\end{equation}
\begin{proposition}\label{BP1}
Let $A\in\B(\vr)$, then $\Bs(A)\subseteq\sigma_S(A)$.
\end{proposition}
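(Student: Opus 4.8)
The plan is to prove the contrapositive inclusion $\rho_S(A)\subseteq\quat\setminus\Bs(A)$, that is, to show that whenever $\qu\in\rho_S(A)$ the operator $R_\qu(A)$ is Browder. First I would unwind the definition of the $S$-resolvent set: by the bounded-operator description of $\rho_S(A)$, the condition $\qu\in\rho_S(A)$ means exactly that $R_\qu(A)\in\mathcal{G}(\vr)$, so $R_\qu(A)$ is an invertible bounded right linear operator, and in particular $\cK(R_\qu(A))=\{0\}$ and $\cR(R_\qu(A))=\vr$.

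The key step is to verify that every invertible operator is Browder, i.e. $\mathcal{G}(\vr)\subseteq\Br(\vr)$; this is already recorded as part of the chain $\mathcal{G}(\vr)\subset\Br(\vr)$ in Remark \ref{BR1} and is implicit in the proof of Theorem \ref{BT1}, but I would spell it out here. By Corollary \ref{CI1} an invertible operator is Fredholm, so $R_\qu(A)\in\FF(\vr)$. Moreover, the remarks following Definition \ref{BD1} give $\asc(R_\qu(A))=0$ (since $\cK(R_\qu(A))=\{0\}$, so $R_\qu(A)$ is injective) and $\dsc(R_\qu(A))=0$ (since $\cR(R_\qu(A))=\vr$, so $R_\qu(A)$ is surjective); both are finite. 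By Definition \ref{BD2} this means $R_\qu(A)\in\Br(\vr)$, hence $\qu\notin\Bs(A)$.

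Taking complements then yields $\Bs(A)\subseteq\quat\setminus\rho_S(A)=\sigma_S(A)$, which is the claim. I do not anticipate any genuine obstacle: the argument is a direct consequence of the inclusion $\mathcal{G}(\vr)\subset\Br(\vr)$, and the only point requiring a little care is the identification $\asc(R_\qu(A))=\dsc(R_\qu(A))=0$ for invertible $R_\qu(A)$, which rests on the injectivity and surjectivity characterizations of ascent and descent stated just after Definition \ref{BD1} together with the Fredholmness supplied by Corollary \ref{CI1}.
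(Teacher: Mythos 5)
Your argument is correct, but it is organized as the contrapositive of what the paper actually does, and the two routes invoke slightly different ingredients. The paper argues directly: it takes $\qu\in\Bs(A)$ and splits according to the two ways $R_\qu(A)$ can fail to be Browder. If $R_\qu(A)\not\in\FF(\vr)$ it appeals to the Atkinson-type characterization (Corollary \ref{EC2}) to place $\qu$ in $\se(A)\subseteq\sigma_S(A)$; if $R_\qu(A)\in\FF(\vr)$ with infinite ascent and descent, it notes that $\asc(R_\qu(A))\neq 0$ forces $\cK(R_\qu(A))\neq\{0\}$, so $R_\qu(A)$ is not invertible and $\qu\in\sigma_S(A)$. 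Your proof collapses both cases into the single clean implication $\mathcal{G}(\vr)\subseteq\Br(\vr)$: an invertible $R_\qu(A)$ is Fredholm by Corollary \ref{CI1} and has $\asc=\dsc=0$ by the injectivity/surjectivity remarks after Definition \ref{BD1}, hence is Browder. This is marginally more economical --- it never touches the essential $S$-spectrum --- and it sidesteps a small imprecision in the paper's case split (by equation \ref{WE2} the second case should read ``$\asc=\infty$ \emph{or} $\dsc=\infty$,'' and the paper's argument as written only handles the ascent; the descent case would need $\dsc=0\Leftrightarrow$ surjective instead). The only thing your route forgoes is the extra information the paper's direct proof makes visible, namely which part of $\Bs(A)$ lands in $\se(A)$ versus in the point $S$-spectrum.
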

\begin{proof}
If $\qu\in\Bs(A)$ then $R_\qu(A)\not\in\Br(\vr)$, so either $R_\qu(A)\not\in\FF(\vr)$ or $R_\qu(A)\in\FF(\vr)$ and $\asc(R_\qu(A))=dsc(R_\qu(A))=\infty$. If $A\not\in\FF(\vr)$, then by corollary \ref{EC2}, $\qu\in\se(A)\subseteq\sigma_S(A)$. If $R_\qu(A)\in\FF(\vr)$ and $\asc(R_\qu(A))=\dsc(R_\qu(A))=\infty$, then, since $\cK(R_\qu(A))=\{0\}$ if and only if $\asc(R_\qu(A))=0$, $R_\qu(A)$ is not invertible. Therefore $\qu\not\in\rho_S(A)$ and hence $\qu\in\sigma_S(A)$. Thus $\Bs(A)\subseteq\sigma_S(A)$.
\end{proof}
\begin{remark}\label{BR3}
By remark \ref{WR1}, equation \ref{BE5} and proposition \ref{BP1}, for $A\in\B(\vr)$, we have
$$\se(A)\subseteq\ws(A)\subseteq\Bs(A)\subseteq\sigma_S(A).$$
\end{remark}
\section{conclusion}
We have studied Weyl operators and Weyl S-spectrum of a bounded quaternionic right linear operator. We have also given a characterization to the S-spectrum in terms of Weyl operators. We have studied the Browder operators and the Browder S-spectrum in a limited sense, which is due to the unavailability of the Cauchy integral formula on the whole set of quaternions. However, Using the Cauchy integral formula and the S-functional calculus on an axially symmetric domain for slice-regular functions, which is accessible \cite{ACS,ACK,GP}, one may define the  Riesz idempotent and study the Browder spectrum on axially symmetric slice domains in the point of view of the S-spectrum. However, we have avoided studying it in this manuscript and we will treat it elsewhere.



\begin{thebibliography}{XXXX}
\bibitem{Ad} Adler, S.L., {\em Quaternionic Quantum Mechanics and Quantum Fields}, Oxford University Press, New York, 1995.

\bibitem{ACS} Alpay, D., Colombo, F., Sabadini, I., {\em Slice hyperholomorphic Schur Analysis}, Springer International Publishing, Switzerland, 2016.

\bibitem{ACK} Alpay, D., Colombo, F., Kimsey, D. P., {\em The Spectral Theorem for Unitary Operators Based on the S-Spectrum}, Milan J. Math., {\bf84} (2016), 41-61.

\bibitem{AC} Alpay, D., Colombo, F., Kimsey, D.P., {\em The Spectral Theorem for Quaternionic Unbounded Normal Operators Based on the $S$-Spectrum}, J. Math. Phys. {\bf 57} (2016), 023503.


\bibitem{Fab} Colombo, F., Sabadini, I., \textit{On Some Properties of the Quaternionic Functional Calculus}, J. Geom. Anal., {\bf 19} (2009), 601-627.

\bibitem{Fab1} Colombo, F., Sabadini, I., \textit{On the  Formulations of the Quaternionic Functional Calculus}, J. Geom. Phys., {\bf 60} (2010), 1490-1508.

\bibitem{Fab2} Colombo, F., Gentili, G., Sabadini, I., Struppa, D.C., {\em Non commutative functional calculus: Bounded operators}, Complex Analysis and Operator Theory, {\bf 4} (2010), 821-843.


\bibitem{NFC} Colombo, F., Sabadini, I., Struppa, D.C., {\em Noncommutative Functional Calculus}, Birkh\"auser Basel, 2011.

\bibitem{CG} Colombo, F., Gentili, G., Sabadini, I., Struppa, D.C., {\em Non Commutative Functional Calculus: Bounded Operators}, Complex Anal. Oper. Theory {\bf 4} (2010) 821-843.

\bibitem{con} Conway, J.B., {\em A course in functional analysis}, 2nd Ed., Springer-Verlag, New York (1990).



\bibitem{Fa} Fashandi, M., {\em Compact operators on quaternionic Hilbert spaces}, Ser. Math. Inform. {\bf 28} (2013), 249-256.


\bibitem{ghimorper} Ghiloni, R., Moretti, W. and Perotti, A., {\em Continuous slice functional calculus in quaternionic Hilbert spaces\/,} Rev. Math. Phys. {\bf 25} (2013), 1350006.


\bibitem{GP} Ghiloni, R., Perotti, A., Recupero, V., {\em Noncommutative Cauchy integral formula}, Complex Anal. Oper. Theory {\bf 11} (2007), 289-306.


\bibitem{Jeri} Jeribi, A., {\em Spectral theory and applications of linear operators and block operator matrices}, Springer International Publishing, Switzerland (2015).

	
\bibitem{kub} Kubrusly, C.S., {\em Spectral theory of operators on Hilbert spaces}, Springer, New York (2012).

\bibitem{BT} Muraleetharan, B., Thirulogasanthar, K., {\em Deficiency Indices of Some Classes of Unbounded $\quat$-Operators},  Complex Anal. Oper. Theory (2017), 1-29. https://doi.org/10.1007/s11785-017-0702-4.

\bibitem{Mu} Muraleetharan, B., Thirulogasanthar, K., {\em Coherent state quantization of quaternions}, J. Math. Phys., {\bf 56} (2015), 083510.

\bibitem{MT} Muraleetharan, B., Thirulogasanthar, K., {\em Fredholm operators and essential S-spectrum in the quaternionic setting}, arXiv:1805.01461/math.FA.





\bibitem{Vis} Viswanath, K., {\em Normal operators on quaternionic Hilbert spaces}, Trans. Amer. Math. Soc. {\bf 162} (1971), 337-350.


\end{thebibliography}
\end{document}